\newtheorem{theorem}{Theorem}[section]
\newtheorem{corollary}[theorem]{Corollary}
\newtheorem{lemma}[theorem]{Lemma}
\newtheorem{definition}[theorem]{Definition}
\renewenvironment{proof}{\noindent\rule{\textwidth}{1pt} {\bfseries{Proof.}}}{\qed \\ \noindent\rule{\textwidth}{1pt}}
\journal{Journal}
\begin{document}

\begin{frontmatter}
	\title{On the Geometric Conservation Law for the Non Linear Frequency Domain and Time-Spectral Methods}
	\author{Marc Benoit}
	\author{Siva Nadarajah}
	\address{Department of Mechanical Engineering, McGill University, Montreal, Quebec, H3A 0C3, Canada}
	
	\begin{abstract}
	The aim of this paper is to present and validate two new procedures to enforce the Geometric Conservation Law (GCL) on a moving grid for an Arbitrary Lagrangian Eulerian (ALE) formulation of the Euler equations discretized in time for either the Non Linear Frequency Domain (NLFD) or Time-Spectral (TS) methods. The equations are spatially discretized by a structured finite-volume scheme on a hexahedral mesh. The derived methodologies follow a general approach where the positions and the velocities of the grid points are known at each time step. The integrated face mesh velocities are derived either from the Approximation of the Exact Volumetric Increments (AEVI) relative to the undeformed mesh or exactly computed based on a Trilinear Mapping (TRI-MAP) between the physical space and the computational domain. The accuracy of the AEVI method highly depends on the computation of the volumetric increments and limits the temporal-order of accuracy of the deduced integrated face mesh velocities to between one and two. Thus defeating the purpose of the NLFD method which possesses spectral rate of convergence. However, the TRI-MAP method has proven to be more computationally efficient, ensuring the satisfaction of the GCL once the convergence of the time derivative of the cell volume is reached in Fourier space. The methods are validated numerically by verifying the conservation of uniform flow and by comparing the integrated face mesh velocities to the exact values derived from the mapping.	
	
	\textbf{Keywords :} Non Linear Frequency Domain (NLFD) method;
Time-Spectral method;
Geometric Conservation Law (GCL);
Deforming mesh;
Computational fluid dynamics;
Numerical method	
	\end{abstract}
\end{frontmatter}

\linenumbers

\newpage
\section{Introduction}
\label{sec_Intro}
The accurate computation of unsteady aerodynamic flows in a reasonable amount of time still presents a challenge in the field of computational fluid dynamics. Compared to steady flow problems which only require an accurate spatial discretization, unsteady flow solvers have to provide an accurate time resolution of the flow. Until lately, the most popular approaches were time marching techniques for which the solution is constructed in time from an initial free-stream solution. Despite acceleration techniques such as multigrid or local time stepping, these methods remain computationally costly, partly because of the transient effects, requiring numerous time steps to converge. However in the case of periodic flows encountered in problems such as aeroleastic simulations or turbomachinery, the Fourier collocation method can be used to accurately and efficiently represent the solution. A nonlinear harmonic method known as the Harmonic Balance (HB) was initially introduced by Hall et al. \cite{Hall2002}. Then, McMullen et al. \cite{McMullen2002a,McMullen2006a,McMullen2006} developed the Non Linear Frequency Domain (NLFD) method in order to solve the Euler and Navier-Stokes equations directly in frequency domain. An alternative to this approach is the Time-Spectral (TS) method presented by Gopinath et al. \cite{Gopinath2005,Gopinath2006}, it avoids the explicit use of a Discrete Fourier Transform (DFT) and discretizes the temporal derivative operator through a Fourier collocation matrix. These various methods prove to significantly decrease the required time to obtain the solution compared to time marching solvers.

The NLFD method was used to develop a two dimensional aeroelastic flow solver \cite{Kachra2006} and then expanded to three-dimensional cases on moving grids \cite{Tardif2017}. In order to study aeroelastic problems, the mesh is moved in time and additional care has to be taken to compute the mesh velocities and the metrics from the physical to the computational space. Thomas et al. \cite{Thomas1979} were the first to formally define the necessity to solve additional laws to preserve the conservation of the solver numerical scheme. Termed as the Geometric Conservation Law, it is composed of two subsets of laws known as the Surface Conservation Law (SCL) and the Volume Conservation Law (VCL). A mathematical interpretation of the SCL relates that any cell volume has to be closed by its surfaces whereas the VCL states that the temporal rate of change of the cell volume is equal to the sum of the temporal rate of change of the algebraic volumes swept by each face enclosing it through time. SCL differs from the VCL in the way that they need to be verified even for fixed grids (steady state) while the VCL appears only on deforming grids. The violation of any of these laws may result in error in the flow solution, for instance it was reported that the violation of the GCL leads to inaccurate flutter prediction for aeroelastic cases \cite{Lesoinne1996}. Further investigation on time marching schemes clarified the theoretical status of the GCL, exposing its link to temporal-order accuracy \cite{Herve2000}, or stability conditions \cite{Farhat2001}. In addition, the Discrete Geometric Conservation Law (DGCL) were derived in order to preserve the time accuracy for high temporal-order schemes \cite{Mavripilis2006}. These methods are well adapted for time marching approaches, but their extension to the NLFD or Time-Spectral methods is not straightforward since it becomes necessary to compute all quantities : state vector, fluxes, mesh positions and mesh velocities, at all time steps before applying the Fourier discretization. 

In this work, we focus on the VCL part of the GCL which arises from the unsteadiness of the flow. The framework of our study is based on the previous development of a three dimensional aeroelastic solver by Tardiff et al. \cite{Tardif2017} who introduced a methodology to enforce the GCL. The expression of the governing equations according to an Arbitrary Lagrangian-Eulerian formulation, the principles of the NLFD discretization and the dynamic mesh deformation using Radial Basis Functions are presented in section \ref{sec_ALE_NLFD_RBF}. Additional analytical developments are introduced to demonstrate the limitations of the approach of Tardiff et al. \cite{Tardif2017}. A modified method is derived to satisfy the GCL by computing the integrated face mesh velocities at a second-order accuracy in time. Since such a condition limits the inherent spectral in time accuracy of the NLFD approach, an alternative method based on a trilinear mapping between the physical and computational space is introduced. The results are then extent to the Time-Spectral method. These various approaches are presented section \ref{sec_GCL}. The different methodologies are numerically investigated in order to verify that the GCL are enforced through a correct computation of the integrated face mesh velocities in section \ref{sec_num} and the results discussed in section \ref{sec_con_disc}.

\section{Discretization of the governing equation and mesh deformation}
\label{sec_ALE_NLFD_RBF}
This section presents the formulation of the Euler equations on a moving mesh using the Arbitrary Lagrangian-Eulerian approach, its discretization using the Non-Linear Frequency Domain, and the mesh deformation method through the Radial Basis Functions.

\subsection{Arbitrary Lagrangian-Eulerian formulation of the Euler equations}
\label{subsec_ALE}
When solving the Euler equations on a moving grid a popular approach is to use an Arbitrary Lagrangian-Eulerian (ALE) formulation \cite{Blazek2005}. For a control volume $\Omega$ enclosed by a boundary $\partial \Omega$ and without source terms, the integral form of this formulation is given as follows :
\begin{equation}
\int_{\Omega} \mathbf{w} \text{d}\Omega + \oint_{\partial \Omega} \mathbf{F_c^M} \text{d}S = 0,
\label{eq_ALE_int}
\end{equation}
with the state vector $\mathbf{w}$ and the vector of the convective fluxes on a moving grid $\mathbf{F_c^M}$ expressed as :
\begin{equation}
\mathbf{w}=
\left\{
\begin{array}{c}
\rho \\ \rho u \\ \rho v \\ \rho w \\ \rho E
\end{array}
\right\}
\text{ ; }
\mathbf{F_c^M} = \mathbf{F_c}-V_t \mathbf{w} =
\left\{
\begin{array}{c}
\rho V \\ \rho u V + n_x p\\ \rho v V + n_y p\\ \rho w V + n_z p \\ \rho H V
\end{array}
\right\}
-
\left\{
\begin{array}{c}
\rho V_t\\ \rho u V_t\\ \rho v V_t\\ \rho w V_t \\ \rho E V_t
\end{array}
\right\},
\label{eq_WR_vectors}
\end{equation}  
where $\mathbf{F_c}$ would be the convective flux vector on a fixed grid and $\rho$, $u$, $v$, $w$, $E$, $p$ and $H$ are respectively, the density, the Cartesian velocities of the fluid, the total energy per unit mass, the pressure and the total enthalpy per unit mass defined by :
\begin{equation}
H=E+\frac{p}{\rho}.
\label{eq_H}
\end{equation}

In order to close the system of equations, the pressure is evaluated under the assumption of ideal gas through the state equation (\ref{eq_P}) :
\begin{equation}
p = (\gamma - 1) \left( \rho E- \frac{(\rho u)^2+(\rho v)^2 + (\rho w)^2}{2 \rho}\right).
\label{eq_P}
\end{equation}

Also, $V$ is the contravariant velocity of the fluid and $V_t$ is the contravariant velocity of the boundary enclosing the control volume. $n_x$, $n_y$ and $n_z$ are the components of the boundary unit normal vector pointing outward of the control volume. It yields :
\begin{equation}
\begin{array}{lcl}
V = un_x + vn_y + wn_z,\\
V_t = \dfrac{\partial x}{\partial t}n_x + \dfrac{\partial y}{\partial t}n_y + \dfrac{\partial z}{\partial t}n_z.
\end{array}
\label{eq_contravariant_velocities}
\end{equation}

By introducing a discretized control volume and an artificial dissipation flux vector $\mathbf{F_d}$ to avoid an odd-even decoupling of the solution and to increase the accuracy at discontinuities, the equation (\ref{eq_ALE_int}) can be written under a semi-discretized form as :
\begin{equation}
\frac{\partial (\Omega \mathbf{w})}{\partial t} + \sum_{\partial \Omega} \left[ (\mathbf{F_c^M})S-\mathbf{F_d} \right] = 0.
\label{eq_semi_discrete_Euler_1}
\end{equation}

The previous set of equations has to hold for each control volume and can be expressed as a semi-discrete system of ordinary differential equations in time :
\begin{equation}
\frac{\partial (\Omega \mathbf{w})}{\partial t} + \mathbf{R(\mathbf{w})} = 0,
\label{eq_semi_discrete_Euler_2}
\end{equation}
where $\mathbf{R}(\mathbf{w})= \displaystyle \sum_{\partial \Omega} \left[ (\mathbf{F_c^M})S-\mathbf{F_d} \right]$ is the discrete residual vector. In this work, the discretization in space is performed according to the finite volume method on a structured hexahedral grid, the modified convective flux is computed as the average of the fluxes at a cell face and  the artificial dissipation is evaluated using the JST scheme \cite{JAMESON1981}. The residual vector is calculated as the summation over the faces of the control volume of the different fluxes.	

\subsection{Temporal discretization using the Non-Linear Frequency Domain method}
\label{subsec_NLFD}
The temporal discretization of the  flow solver employs the NLFD approach developed by McMullen et al. \cite{McMullen2002a}. Under the assumption that both the modified state vector $\bar{\mathbf{w}}=\Omega\mathbf{w}$ and the residual vector $\mathbf{R}(\mathbf{w})$ are periodic in physical time, the two quantities can be expanded as discrete Fourier series using a finite number of harmonics equations (\ref{eq_DFT_w}) and (\ref{eq_DFT_R}) :
\begin{equation}
\bar{\mathbf{w}}(t)=\sum_{k=-N}^N \mathbf{\hat{w}}_k \text{e}^{i(2\pi k/T)t},
\label{eq_DFT_w}
\end{equation}
\begin{equation}
\mathbf{R(\mathbf{w}}(t))=\sum_{k=-N}^N \mathbf{\hat{R}}_k \text{e}^{i(2\pi k/T)t},
\label{eq_DFT_R}
\end{equation}
where $i=\sqrt{-1}$ is the imaginary unit, $T$ is the time period, $k$ is the wave number, and $N$ is the number of modes employed in the Discrete Fourier Transform (DFT). The $k^{th}$ Fourier coefficients $\mathbf{\hat{w}}_k$ and $\mathbf{\hat{R}}_k$ are given by the following equations (\ref{eq_Fourier_coeff_w}) and (\ref{eq_Fourier_coeff_R}), for $-N \leq k \leq N$ :
\begin{equation}
\mathbf{\hat{w}}_k = \frac{1}{2N+1}\sum_{n=0}^{2N} \Omega(t_n)\mathbf{w}(t_n)\text{e}^{-i(2\pi k/T)t_n},
\label{eq_Fourier_coeff_w}
\end{equation}
\begin{equation}
\mathbf{\hat{R}}_k = \frac{1}{2N+1}\sum_{n=0}^{2N} \mathbf{R(\mathbf{w}}(t_n))\text{e}^{-i(2\pi k/T)t_n},
\label{eq_Fourier_coeff_R}
\end{equation}
where their computations require the sampling of the modified state vector and the residual vector for $N_{ts}=2N+1$ time steps at equally spaced time instances such that the $n^{th}$ time sample $t_n$ is :
\begin{equation}
t_n = \frac{n}{2N+1}T \mbox{, for }n=0,..,2N.
\label{eq_time_instance}
\end{equation}

At this point, it is important to emphasize that the state and residual vectors need to be evaluated at all time instances before transferring in the Fourier domain, this is a fundamental difference with the time marching approach. The Fourier representation is then substituted into the semi-discrete form of the Euler equations (\ref{eq_semi_discrete_Euler_2}) to yield :
\begin{equation}
\frac{\partial}{\partial t}\left(\sum_{k=-N}^N \mathbf{\hat{w}}_k \text{e}^{i(2\pi k/T)t}\right) + \sum_{k=-N}^N \mathbf{\hat{R}}_k \text{e}^{i(2\pi k/T)t} = 0,
\end{equation}
\begin{equation}
\Leftrightarrow \sum_{k=-N}^N \frac{i2\pi k}{T}\mathbf{\hat{w}}_k \text{e}^{i(2\pi k/T)t} + \sum_{k=-N}^N \mathbf{\hat{R}}_k \text{e}^{i(2\pi k/T)t} = 0.
\label{eq_semi_discrete_Euler_2_NLFD}
\end{equation} 

By exploiting the orthogonality property of the Fourier basis, this leads to a set of $2N+1$ equations (\ref{eq_semi_discrete_Euler_NLFD_wavenumber}), each being associated to a wave number $k$ :
\begin{equation}
i\frac{2\pi k }{T}\hat{\mathbf{w}}_k+\hat{\mathbf{R}}_k=0 \mbox{ for } -N \leq k \leq N.
\label{eq_semi_discrete_Euler_NLFD_wavenumber}
\end{equation}

Since the representation of $\hat{\mathbf{R}}_k$ as a function of $\hat{\mathbf{w}}_k$ is not straightforward, an unsteady residual $\hat{\mathbf{R}}_k^*$ is defined and driven to zero using a pseudo-time marching approach such that :
\begin{equation}
\left\{
\begin{array}{l}
\hat{\mathbf{R}}_k^* = i\dfrac{2\pi k }{T}\hat{\mathbf{w}}_k+\hat{\mathbf{R}}_k \\
\\
\dfrac{\partial \hat{\mathbf{w}}_k}{\partial t^*} +\hat{\mathbf{R}}_k^* = 0
\end{array}
\right.
\mbox{, for } -N \leq k \leq N.
\label{eq_pseudo_time_NLFD}
\end{equation}
Thus at convergence, $\hat{\mathbf{R}}_k^*=0$ and the equations (\ref{eq_pseudo_time_NLFD}) are satisfied for each wave number. 

The new periodic solution is then transferred back to the physical time domain using an Inverse Fourier Discrete Transform (IDFT) and evaluated at each time instance $t_n$ by dividing by the volume :
\begin{equation}
\mathbf{w}(t_n) = \frac{\bar{\mathbf{w}}(t_n)}{\Omega(t_n)}  \mbox{, for } 0 \leq n \leq 2N.
\label{eq_solution_physical_time}
\end{equation}

The equation in pseudo-time can be solved using any time-stepping scheme. In this work, we use an hybrid five-stage Runge-Kutta scheme with blending coefficients for the artificial dissipation \cite{Jameson1995}. 

\subsection{Dynamic mesh deformation using the Radial Basis Function}
\label{subsec_RBF}
The deformation of the mesh is performed using the Radial Basis Functions (RBF) \cite{Tardif2017}. The method is based on the assumption that the movement of all grid points can be interpolated from the \textit{a priori} known motion of a set of points called the RBF points. In this study, the RBF points are always a subset of the grid points at the boundary of the domain, their displacements relative to the undeformed mesh are prescribed at each time instance using analytical functions. Because of the NLFD method, the mesh positions and velocities are therefore computed and stored for all $N_{ts}$ time steps. For any grid point $p$ of position vector $\mathbf{x}_p$ in the undeformed mesh, its displacement in the $x$-direction $s_x(\mathbf{x}_p,t)$ is defined as :
\begin{equation}
s_x(\mathbf{x}_p,t) = \sum_{i=1}^{N_{rbf}} \alpha_i (t) \phi(||\mathbf{x}_p - \mathbf{x}_i||_2),
\label{eq_disp_RBF}
\end{equation}
where $N_{rbf}$ is the number of RBF points, $\alpha_i$ are the interpolating coefficients, $\mathbf{x}_i$ is the position vector of the $i^{th}$ RBF point in the undeformed grid and $\phi$ is some basis function depending on the Euclidean distance $||\mathbf{x}_p- \mathbf{x}_i||_2$ between the points $p$ and $i$. In this work, Wendland C0's basis function \cite{Wendland1995} is considered, it is defined as follows :
\begin{equation}
\left\{
\begin{array}{lr}
(1-\xi)^2 & \text{ if } \xi < 1 \\
0         & 	\text{ if } \xi \ge 1
\end{array}
\right.
,
\text{ with }
\xi=\frac{||\mathbf{x}_p- \mathbf{x}_i||_2}{R},
\label{eq_WendlandC0}
\end{equation}
where $R$ is the support radius relative to the surface of RBF points. Since the equation (\ref{eq_disp_RBF}) holds for any grid point whether it is an RBF point or a standard grid point, in the following, the RBF points are denoted with the subscript $r$ while the grid (or volume) points are denoted with the subscript $v$. Then in the $x$-direction, the displacements of all RBF points and the interpolated displacements of all grid points are regrouped respectively in the vector $\mathbf{\Delta x}_r$ and in the vector $\mathbf{\Delta x}_v$. Therefore the \textit{a priori} unknown displacements $\mathbf{\Delta x}_v$ are obtained through equation (\ref{eq_disp_volume_points_matrix_form}) :
\begin{equation}
\mathbf{\Delta x}_v = \mathcal{A}(\mathcal{M}^{-1}) \mathbf{\Delta x}_r,
\label{eq_disp_volume_points_matrix_form}
\end{equation}
where :
\begin{equation}
\mathcal{M} =
\left(
\begin{array}{cccc}
\phi_{r_1 r_1} & \phi_{r_1 r_2} & \hdots & \phi_{r_1 r_{N_{rbf}}} \\
\phi_{r_2 r_1} & \ddots         &        &      \vdots            \\
\vdots        &                 &        &                        \\
\phi_{r_{N_{rbf}} r_1} &  \cdots       &        & \phi_{r_{N_{rbf}} r_{N_{rbf}}}
\end{array}
\right)
,
\mathcal{A} =
\left(
\begin{array}{cccc}
\phi_{v_1 r_1} & \phi_{v_1 r_2} & \hdots & \phi_{v_1 r_{N_{rbf}}} \\
\phi_{v_2 r_1} & \ddots         &        &      \vdots            \\
\vdots        &                 &        &                        \\
\phi_{v_{N_{grid}} r_1} &  \cdots       &        & \phi_{v_{N_{grid}} r_{N_{rbf}}}
\end{array}
\right)
\label{eq_RBF_matrix}
\end{equation}
with :
\begin{equation}
\phi_{v_i r_j} = \phi\left(||\mathbf{x}_{v_i}- \mathbf{x}_{r_j}||_2 \right) 
\end{equation}
and $N_{grid}$ is the total number of grid points. The displacements in the $y$ and $z$ directions can be computed with the same matrices given in equation (\ref{eq_RBF_matrix}), by considering the RBF points displacements in the corresponding direction.
 
Similarly, the mesh velocities for any grid point are computed using the Radial Basis Functions for Velocities (RBFV) by interpolating the \textit{a priori} known velocities of the RBF points which leads to the following expression :
\begin{equation}
\mathbf{v}_{dir,v} = \mathcal{A}(\mathcal{M}^{-1}) \mathbf{v}_{dir,r},
\end{equation}
where $\mathbf{v}_{dir,v}$ is the vector of the velocities of the grid points and $\mathbf{v}_{dir,r}$ is the vector of the velocities of the RBF points and the direction is given by $dir=x,y, \mbox{ or }z$.

\section{Derivation and enforcement of the Geometric Conservation Law}
\label{sec_GCL}

\subsection{Derivation of the GCL in the NLFD framework}
\label{subsec_DerivGCL}
As previously stated our interest is focused on the Volume Conservation Law aspect of the GCL. Under integral form the VCL for a control volume $\Omega$ enclosed by a boundary $\partial \Omega$ can be written as follows :
\begin{equation}
\frac{\partial}{\partial t}\int_{\Omega} \text{d}\Omega - \oint_{\partial \Omega} (\mathbf{V}_t \cdot \mathbf{n}) \text{d}S = 0.
\label{eq_GCL_int}
\end{equation} 
where $\mathbf{V}_t = \displaystyle \left(\frac{\partial x}{\partial t},\frac{\partial y}{\partial t},\frac{\partial z}{\partial t}\right)$ is the mesh velocity vector and $\mathbf{n}$ is the normal vector to the surface $\partial \Omega$. The law relates only on geometrical considerations and is always satisfied under continuous form and implicitly satisfied for rigid grid motion. It arises from the deformation of the mesh and is closely related to the preservation of uniform flow by the numerical scheme. Therefore in order to obtain a consistent solution method, the GCL must be discretized using the same numerical scheme employed to discretize the primary conservation laws \cite{Herve2000}. In our case, it yields a hexahedral stuctured finite-volume framework and a temporal discretization using the NLFD method. A first approach to enforce the VCL in the NLFD context was presented by Tardiff et al. \cite{Tardif2017} but more investigation is needed to determine its limitations. In this section, further developments are added to this approach which expose its analytical limits and a new method is proposed.

Considering any discretized control volume $\Omega$ enclosed by $N_f$ faces, then equation (\ref{eq_GCL_int}) can be written as :
\begin{equation}
\frac{\partial \Omega}{\partial t} - \sum_{m=1}^{N_f} \iint_{\partial\Omega_{m}} (\mathbf{V}_t \cdot \mathbf{n_{m}} ) dS =0,
\label{eq_semi_discrete_GCL_1}
\end{equation}
where $\mathbf{n_{m}}$ is the unit normal vector to the face $\partial \Omega_m$. Then the integrated face mesh velocities (IFMV) $G_m(t)$ corresponding to the temporal rate of change of the algebraic volume swept by each face through time are introduced in equation (\ref{eq_IFV_int}) :
\begin{equation}
G_m(t)=\iint_{\partial\Omega_{m}} (\mathbf{V}_t \cdot \mathbf{n_{m}} ) dS,
\label{eq_IFV_int}
\end{equation}
and also $G(t)$ is the sum of the IFMV over all faces of the control volume :
\begin{equation}
G(t)=\sum_{m=1}^{N_f}G_m(t).
\label{eq_G_split}
\end{equation}
Then equation (\ref{eq_semi_discrete_GCL_1}) can be written as :
\begin{equation}
\frac{\partial \Omega}{\partial t} - G(t) = 0.
\label{eq_semi_discrete_GCL_2}
\end{equation}
Under the assumption that the volume $\Omega$ and the sum of the integrated face mesh velocities $G$ are periodic functions of time, the NLFD discretization can be applied :
\begin{equation}
\Omega(t) = \sum_{k=-N}^{N} \hat{\Omega}_k \text{e}^{i(2\pi k/T)t},
\label{eq_DFT_Omega}
\end{equation}
\begin{equation}
G(t) = \sum_{k=-N}^{N} \hat{G}_k \text{e}^{i(2\pi k/T)t}.
\label{eq_DFT_G}
\end{equation}
By substituting these expressions into equation (\ref{eq_semi_discrete_GCL_2}), yields :
\begin{equation}
\frac{\partial}{\partial t}\left(\sum_{k=-N}^N \hat{\Omega}_k \text{e}^{i(2\pi k/T)t}\right) - \sum_{k=-N}^N \hat{G}_k \text{e}^{i(2\pi k/T)t} = 0
\end{equation}
\begin{equation}
\Leftrightarrow \left(\sum_{k=-N}^N \frac{i2\pi k}{T}\hat{\Omega}_k \text{e}^{i(2\pi k/T)t}\right) - \sum_{k=-N}^N \hat{G}_k \text{e}^{i(2\pi k/T)t} = 0.
\end{equation}
Then by exploiting the orthogonality property of the Fourier basis, it leads to a system of $2N+1$ equations, each corresponding to a wave number $k$ :
\begin{equation}
\frac{i2\pi k}{T}\hat{\Omega}_k = \hat{G}_k \mbox{ for } -N \leq k \leq N.
\label{eq_GCL_NLFD_compact}
\end{equation}
The set of equations (\ref{eq_GCL_NLFD_compact}) provides the necessary condition to enforce the GCL in the NLFD approach. Such criterion is not satisfied in general and has to be enforced through the correct computation of the cell volume and the integrated face mesh velocities, in a way consistent with the solver numerical scheme. Since the volume is usually exactly known, one popular approach in time marching methods is to split the GCL over each face \cite{Zhang1993,Mavripilis2006,Mavripilis2011}. In the current framework, the volume of a cell can be expressed as the sum of the volume at a reference initial instant $t_0$ and the algebraic (positive or negative) volumetric increments due to each face $\Omega_m$ relative to this reference instant :
\begin{equation}
\Omega(t) =\Omega(t_0) + \sum_{m=1}^{N_f} \Omega_{m}(t).
\label{eq_Omega_split}
\end{equation}
By substituting relations (\ref{eq_Omega_split}) and (\ref{eq_G_split}) into the equation (\ref{eq_semi_discrete_GCL_2}), yields :
\begin{equation}
\sum_{m=1}^{N_f} \left( \frac{\partial \Omega_{m}}{\partial t}- G_{m}(t) \right)=0.
\label{eq_GCL_split_sum}
\end{equation}
Then for each face $m$ enclosing the discretized control volume, we need to ensure the relation (\ref{eq_GCL_split}) :
\begin{equation}
\frac{\partial \Omega_m}{\partial t}=G_m(t).
\label{eq_GCL_split}
\end{equation}
However, even if the positions of the mesh vertices and their velocities are known at all time instances from the dynamic mesh deformation, the implementation of the GCL using this relation is not straightforward using the NLFD method. In the following, volumetric increments are always considered as algebraic values which can either be positive or negative.

\subsection{Approach of Tardiff et al. \cite{Tardif2017}}
\label{subsec_Tardiff_approach}
The first approach developed by Tardiff et al. \cite{Tardif2017} is based on a linear representation of the volumetric {increments} relative to a reference time instance $t_0$. For any face $m$ defined by its vertices the induced volumetric {change} would simply be represented by drawing straight lines from their initial position at $t_0$ to their position at time instant $t$, see Figure \ref{fig_linear_Omegam}.

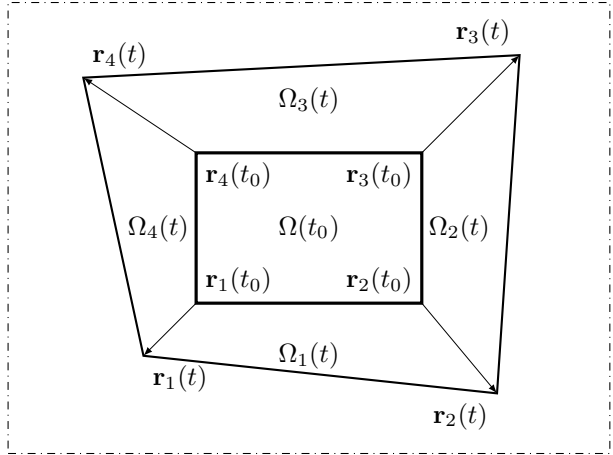
\begin{figure}[!htbp]
\centering
	\begin{tikzpicture}[scale=1]
		\draw[dash dot] (0,0)rectangle(8,6);
		\draw[very thick] (2.5,2)rectangle(5.5,4);
		\draw[very thin,->,>=latex] (2.5,2)--(1.8,1.3);
		\draw[very thin,->,>=latex] (5.5,2)--(6.5,0.8);
		\draw[very thin,->,>=latex] (5.5,4)--(6.8,5.3);
		\draw[very thin,->,>=latex] (2.5,4)--(1,5);
		\draw[thick] (1.8,1.3)--(6.5,0.8)--(6.8,5.3)--(1,5)--cycle;
   			\node[above right] at (2.5,2) {$\mathbf{r}_1(t_0)$};
   			\node[above left] at (5.5,2) {$\mathbf{r}_2(t_0)$};
   			\node[below left] at (5.5,4) {$\mathbf{r}_3(t_0)$};
   			\node[below right] at (2.5,4) {$\mathbf{r}_4(t_0)$};   			
    			\node[below right] at (1.8,1.3) {$\mathbf{r}_1(t)$};
   			\node[below left] at (6.5,0.8) {$\mathbf{r}_2(t)$};
   			\node[above left] at (6.8,5.3) {$\mathbf{r}_3(t)$};
   			\node[above right] at (1,5) {$\mathbf{r}_4(t)$};
			\node[] at (4,3) {$\Omega(t_0)$}; 
   			\node[] at (4,1.3) {$\Omega_1(t)$};
   			\node[] at (6,3) {$\Omega_2(t)$};
   			\node[] at (4,4.7) {$\Omega_3(t)$};
   			\node[] at (2,3) {$\Omega_4(t)$};			   				
	\end{tikzpicture}
\caption{Example of linear volumetric increments in 2D relatively to a reference time instant $t_0$}
\label{fig_linear_Omegam}
\end{figure}

This approach has two advantages : first, it is easy to compute the volumetric {increments} at each time instant using standard cell volume computational algorithms ; second, the volumetric {increments} due to each face are time periodic as long as the movement of the vertices is periodic. 

Once the volumetric {increments} are known for $2N+1$ time instances defined by equation (\ref{eq_time_instance}), their Fourier representations are calculated :
\begin{equation}
\Omega_{m}(t) = \sum_{k=-N}^{N} \hat{\Omega}_{m,k} \text{e}^{(i2\pi k/T)t},
\label{eq_Omegam_Fourier}
\end{equation}
and the Fourier formulations of the integrated face mesh velocities for each face $m$ are introduced :
\begin{equation}
G_{m}(t) = \sum_{k=-N}^{N} \hat{G}_{m,k} \text{e}^{(i2\pi k/T)t}.
\label{eq_Gm_Fourier}
\end{equation} 

Then by substituting, the Fourier representations into criteria (\ref{eq_GCL_split}), and exploiting the orthogonality of the Fourier basis, a system of $2N+1$ equations (\ref{eq_GCL_split_NLFD_simplified}) is obtained for each face $m$  :
\begin{equation}
\frac{i2\pi k}{T}\hat{\Omega}_{m,k} = \hat{G}_{m,k} \mbox{ for } -N \leq k \leq N.
\label{eq_GCL_split_NLFD_simplified}
\end{equation}
Therefore, the GCL are satisfied independently for each face of the control volume by computing the Fourier coefficients $\hat{G}_{m,k}$ and then applying an IDFT to transfer back the integrated face mesh velocities to the temporal domain. Despite its attractiveness, this method is restricted to linear movements due to the manner in which the volumetric increments are computed. In general, the motion would not be linear and such representation of the volumetric increments will not be sufficient to ensure the correct computation of the IFMV. 

Moreover the NLFD method is based on the assumption that the quantities are time periodic and can be expanded in Fourier series, but having a time periodic movement of the vertices does not guarantee time periodic volumetric increments but only that their temporal derivative will be periodic. This statement will be demonstrated through the following example. 

A 2D quadrilateral element is considered with the following motion defined by equation (\ref{eq_non_periodic_OmegaM_move}) and shown Figure  \ref{fig_non_periodic_OmegaM_move} :
\begin{equation}
\left\{
\begin{array}{lcl}
\alpha(t)    & = & 2\pi t, \\
\mathbf{r_1} & = & \mathbf{r_{1,0}}, \\
\mathbf{r_2} & = & \mathbf{r_{2,0}}, \\
\mathbf{r_3} & = & \mathbf{r_{3,0}}+R(1-\cos(\alpha(t)))\mathbf{e_x}+R(\sin(\alpha(t))) \mathbf{e_y}, \\
\mathbf{r_4} & = & \mathbf{r_{4,0}}, \\
\end{array}
\right.
\label{eq_non_periodic_OmegaM_move}
\end{equation}
where the index $0$ refers to the initial position of the grid, $R$ is the radius defining the amplitude of the circular motion and $\mathbf{e_x}$ and $\mathbf{e_y}$ are the unit vectors in respectively the $x$ and $y$ directions.

\setlength{\unitlength}{1mm}
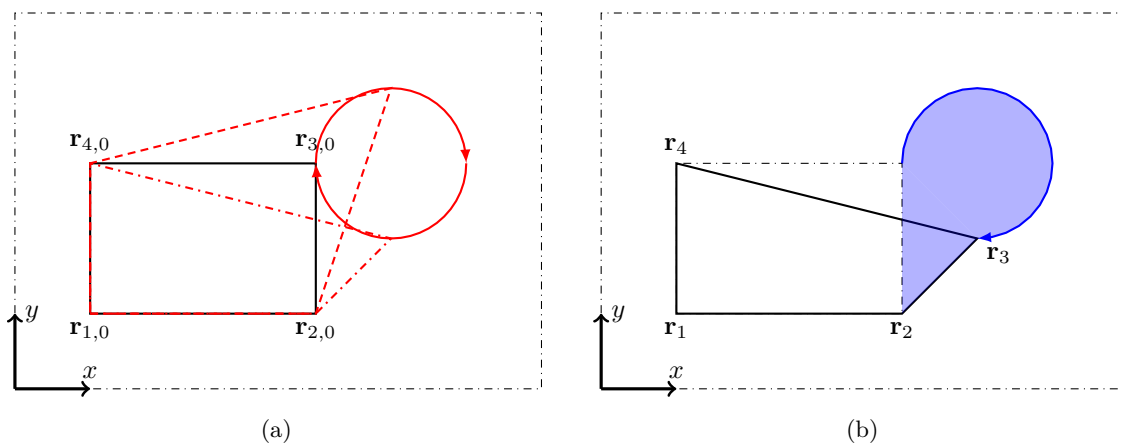
\begin{figure}[!htbp]
    \begin{subfigure}[t]{0.5\textwidth}
    \centering
	\begin{tikzpicture}[scale=1]
   			\draw[dash dot] (0,0)rectangle(7,5);
   			\draw[very thick,->] (0,0) -- (1,0) node[above]{$x$};
   			\draw[very thick,->] (0,0) -- (0,1) node[above,right]{$y$};
   			\draw[black,thick] (1,1) -- (4,1) -- (4,2.999) -- (1,3) -- cycle;
   			\draw[red,thick, densely dashed] (1,1) -- (4,1) -- (5,4) -- (1,3) -- cycle;
   			\draw[red,thick,dash dot] (1,1) -- (4,1) -- (5,2) -- (1,3) -- cycle;
   			\draw[red,thick,domain=0:180,<-,>=latex] plot ({5+cos(\x)}, {3+sin(\x)});
   			\draw[red,thick,domain=180:360,<-,>=latex] plot ({5+cos(\x)}, {3+sin(\x)});
   			\node[below] at (1,1) {$\mathbf{r}_{1,0}$};
   			\node[below] at (4,1) {$\mathbf{r}_{2,0}$};
   			\node[above] at (4,3) {$\mathbf{r}_{3,0}$};
   			\node[above] at (1,3) {$\mathbf{r}_{4,0}$};
	\end{tikzpicture}
	\caption{}
    \end{subfigure}
    ~
    \begin{subfigure}[t]{0.5\textwidth}
    \centering
	\begin{tikzpicture}[scale=1]
            \draw[dash dot] (0,0)rectangle(7,5);
   			\draw[very thick,->] (0,0) -- (1,0) node[above]{$x$};
   			\draw[very thick,->] (0,0) -- (0,1) node[above,right]{$y$};
   			\draw[black,dash dot] (1,1)rectangle(4,3);
   			\draw[black,thick] (1,1) -- (4,1) -- (5,2) -- (1,3) -- cycle;
   			\draw[blue,thick,domain=-90:180,<-,>=latex] plot ({5+cos(\x)}, {3+sin(\x)});
   			\fill[blue,opacity=0.3,thick,domain=-90:180] plot ({5+cos(\x)}, {3+sin(\x)});
   			\fill[blue,opacity=0.3,thick] (4,1) -- (5,2) -- (4,3) -- cycle;
            \node[below] at (1,1) {$\mathbf{r}_{1}$};
   			\node[below] at (4,1) {$\mathbf{r}_{2}$};
   			\node[below right] at (5,2) {$\mathbf{r}_{3}$};
   			\node[above] at (1,3) {$\mathbf{r}_{4}$};
	\end{tikzpicture}
	\caption{}
    \end{subfigure}
    \caption{(a) The initial undeformed quadrilateral element is shown in black while the movement of the mesh points is presented in red with two deformed configurations of the cell in dashed lines, (b) Exact volumetric {increment} for the face 2-3 in the $x$ direction relatively to the initial configuration (in dash dot line) in blue}
    \label{fig_non_periodic_OmegaM_move}
\end{figure}

For the face defined by the vertices $\mathbf{r}_2$ and $\mathbf{r}_3$, the derivation of the expression of the exact volumetric {increment} in the $x$ direction and its time derivative leads to the following expressions respectively (\ref{eq_exact_volinc}) and (\ref{eq_exact_dvolinc}) :
\begin{equation}
\Omega_{23,x}(t) = \frac{R^2}{2}(\alpha(t)-\sin(\alpha(t)))+\frac{Ry_{3,0}}{2}(1-\cos(\alpha(t))), 
\label{eq_exact_volinc}
\end{equation}
\begin{equation}
\frac{\partial \Omega_{23,x}(t)}{\partial t} = \frac{R^2}{2}\frac{\partial \alpha}{\partial t}(1-\cos(\alpha(t)))+\frac{Ry_{3,0}}{2}\frac{\partial \alpha}{\partial t}\sin(\alpha(t)),
\label{eq_exact_dvolinc}
\end{equation}
where the length $y_{3,0}=(\mathbf{r}_{3,0} \cdot \mathbf{e_y})$.

Thus the time derivative of the volumetric {increment} is periodic whereas the volumetric {increment} is the sum of a linear term and a periodic term and the direct application of the NLFD method on the exact volumetric {increment} is not possible since the linear term is not expandable as a Fourier serie. Additional work is required to ensure equation (\ref{eq_GCL_split}) is compliant with the NLFD method.

\subsection{Modified approach based on the exact volumetric {increments}}
\label{subsec_modified_approach}

\subsubsection{Method}
\label{subsubsec_method}

In this section, we are going to demonstrate the following theorem \ref{th_new_method},
\begin{theorem}
Let $\Omega$ be a discretized control volume, enclosed by $N_f$ faces, and subjected to a periodic motion of its vertices. Then given the knowledge of the exact volumetric {increments} $\Omega_m$ for $m=1,...,N_f$, a sufficient condition to ensure the satisfaction of GCL in the NLFD framework is the computation of the integrated face mesh velocities, where the zeroth and higher modes can be expressed as
\begin{equation}
\hat{G}_{m,0} = \frac{\Omega_{m}(T)}{T}, 
\label{eq_Gm0}
\end{equation}
\begin{equation}
\hat{G}_{m,k} = \frac{i2\pi k}{T}\hat{p}_{m,k} \mbox{ for } -N \leq k \leq N, \mbox{ } k \neq 0,
\label{eq_GCL_NLFD_Gmk}
\end{equation} 
where $\hat{G}_{m,k}$ and $\hat{p}_{m,k}$ are the Fourier coefficients of respectively the integrated face mesh velocities and the periodic part of the exact volumetric increments given by,
\begin{equation}
p_{m}(t) =  \Omega_{m}(t)-\left( \frac{\Omega_{m}(T)}{T}\right) t.
\label{eq_alpha2_periodic_simplified}
\end{equation}
\label{th_new_method}
\end{theorem}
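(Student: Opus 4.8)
The plan is to isolate the obstruction identified in Section~\ref{subsec_Tardiff_approach} --- that the exact increment $\Omega_m(t)$ need not be periodic --- by splitting it into a linear-in-time part and a genuinely periodic part, and then applying the NLFD machinery only to the latter.

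First I would fix the reference instant to $t_0=0$, so that by the very definition of the increments $\Omega_m(0)=0$ for every face $m$. I would then set $c_m := \Omega_m(T)/T$ and define $p_m(t):=\Omega_m(t)-c_m t$ as in (\ref{eq_alpha2_periodic_simplified}). The first substantive step is to prove that $p_m$ is $T$-periodic. The key input is the geometric identity (\ref{eq_GCL_split}), $\partial_t \Omega_m = G_m(t)$, which holds for the exact increment and the exact integrated face mesh velocity, since the rate of change of the algebraic volume swept by a face equals the flux of the mesh velocity through that face and hence depends only on the instantaneous positions and velocities of that face's vertices. As the vertex motion is $T$-periodic, so are the face normals, areas and mesh velocities, whence $G_m(t)=\partial_t\Omega_m(t)$ is $T$-periodic, and so is $\partial_t p_m = \partial_t\Omega_m - c_m$. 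Combined with $p_m(T)-p_m(0)=\Omega_m(T)-c_m T=0$, periodicity of the derivative gives $p_m(t+T)-p_m(t)=\int_t^{t+T}\partial_s p_m\,\text{d}s=\int_0^T\partial_s p_m\,\text{d}s=0$ for all $t$, so $p_m$ is $T$-periodic. This is the step I expect to be the crux --- not because it is long, but because it is where the ``linear $+$ periodic'' structure has to be pinned down and the correct slope $\Omega_m(T)/T$ identified; the example (\ref{eq_exact_volinc})--(\ref{eq_exact_dvolinc}) is the motivating special case.

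With $p_m$ periodic, I would expand it in the truncated Fourier basis used by the NLFD method, $p_m(t)=\sum_{k=-N}^{N}\hat p_{m,k}\,\text{e}^{i(2\pi k/T)t}$, so that $\Omega_m(t)=c_m t+\sum_{k=-N}^{N}\hat p_{m,k}\,\text{e}^{i(2\pi k/T)t}$. Differentiating and noting that the $k=0$ contribution drops out, $G_m(t)=\partial_t\Omega_m(t)=c_m+\sum_{k\neq 0}\frac{i2\pi k}{T}\hat p_{m,k}\,\text{e}^{i(2\pi k/T)t}$. Since $G_m$ is $T$-periodic it also has the Fourier expansion (\ref{eq_Gm_Fourier}); matching coefficients via orthogonality of the Fourier basis yields exactly $\hat G_{m,0}=c_m=\Omega_m(T)/T$ and $\hat G_{m,k}=\frac{i2\pi k}{T}\hat p_{m,k}$ for $k\neq 0$, which are (\ref{eq_Gm0}) and (\ref{eq_GCL_NLFD_Gmk}).

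Finally I would verify that this prescription is sufficient for the GCL in the NLFD form (\ref{eq_GCL_NLFD_compact}). Summing $\Omega_m=c_m t+p_m$ over all faces and using (\ref{eq_Omega_split}) gives $\Omega(t)=\Omega(t_0)+\big(\sum_m c_m\big)t+\sum_m p_m(t)$; since $\Omega$ is assumed $T$-periodic, necessarily $\sum_m c_m=0$, whence the nonzero modes satisfy $\hat\Omega_k=\sum_m\hat p_{m,k}$ while $\hat G_k=\sum_m\hat G_{m,k}$ by (\ref{eq_G_split}). Then for $k\neq 0$, $\hat G_k=\sum_m\frac{i2\pi k}{T}\hat p_{m,k}=\frac{i2\pi k}{T}\hat\Omega_k$, and for $k=0$, $\hat G_0=\sum_m c_m=0$; hence (\ref{eq_GCL_NLFD_compact}) holds for every $k$, the asserted condition. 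I would close by observing that since the construction is carried out face by face it automatically respects the splitting (\ref{eq_GCL_split_sum}), and that when the vertex motion is such that $\Omega_m$ is already periodic (so $c_m=0$) it reduces to the scheme of Tardiff et al.~\cite{Tardif2017}.
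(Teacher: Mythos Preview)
Your proposal is correct and follows essentially the same route as the paper: both hinge on the linear-plus-periodic decomposition $\Omega_m(t)=c_m t+p_m(t)$ with slope $c_m=\Omega_m(T)/T$, then read off the Fourier coefficients of $G_m=\partial_t\Omega_m$ from those of $p_m$. The only cosmetic difference is that the paper obtains the decomposition by integrating the Fourier expansion of the periodic $G_m$, whereas you define $p_m$ first and prove its periodicity directly; your explicit closure of (\ref{eq_GCL_NLFD_compact}) via $\sum_m c_m=0$ is a welcome addition that the paper leaves implicit.
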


\begin{proof}
Under the assumption that the motion of the vertices is periodic, the temporal rate of change of the algebraic volume swept by each face through time is periodic. Thus the temporal derivative of the volumetric increments and the integrated face mesh velocities are periodic, the DFT is applied to the equation (\ref{eq_GCL_split}) leading to :
\begin{equation}
G_m(t) = \frac{\partial \Omega_m}{\partial t} = \hat{G}_{m,0} + \sum_{k=-N,k\neq 0}^{N} \hat{G}_{m,k} \text{e}^{i\frac{2\pi}{T}kt},
\label{eq_DFT_Gm}
\end{equation}
where $\hat{G}_{m,k},\mbox{ for }-N\leq k \leq N$ are the Fourier coefficients of both the derivative of the volumetric increment and the integrated face mesh velocity of a face $m$. 

By integrating the equation in time, the volumetric increment is expressed as :
\begin{equation}
\Omega_{m}(t) = \int \frac{\partial \Omega_{m}}{\partial t} dt = \hat{\Omega}_{m,0} + \hat{G}_{m,0} t + \sum_{k=-N,k \neq 0}^{N} \frac{T}{i2\pi k}\hat{G}_{m,k} e^{i\frac{2\pi}{T}kt}.
\label{eq_int_dOmegam}
\end{equation}
where $\hat{\Omega}_{m,0}$ is a constant of integration. Then any volumetric {increment} can be interpreted as the sum of a linear term $l_{m}(t)$  and a periodic function $p_{m}(t)$ defined by :
\begin{equation}
l_{m}(t) = \hat{G}_{m,0} t,
\label{eq_Omegam_lm}
\end{equation} 
\begin{equation}
p_{m}(t) = \hat{\Omega}_{m,0}+\sum_{k=-N,k \neq 0}^{N} \frac{T}{i2\pi k}\hat{G}_{m,k} e^{i\frac{2\pi}{T}kt}.
\label{eq_Omegam_pm}
\end{equation}
Knowing the values of the volumetric increment at $t=t_0$ and $t=t_0+T$, and exploiting the periodicity of the function $p_{m}$, yields :
$$
\begin{array}{lcl}
\Omega_{m}(t_0) & = & \hat{G}_{m0}t_0 + p_{m}(t_0) \\
\Omega_{m}(t_0+T) & = & \hat{G}_{m0} (t_0+T) + p_{m}(t_0+T) \\
p_{m}(t_0)     & = & p_{m}(t_0+T)
\end{array}
$$

\begin{equation}
\hat{G}_{m,0} = \frac{\Omega_{m}(t_0+T)-\Omega_{m}(t_0)}{T} 
\label{eq_Gm0_proof}
\end{equation}
Hence the zeroth Fourier coefficients of the integrated face mesh velocities are known through equation (\ref{eq_Gm0_proof}) applied for each face $m$ and the linear part $l_{m}$ of the volumetric increments can be computed at each instant. Then, an expression of the periodic part of any volumetric increment $p_{m}$ is obtained as :
\begin{equation}
p_{m}(t) = \Omega_{m}(t)-l_{m}(t) =  \Omega_{m}(t)-\left( \frac{\Omega_{m}(t_0+T)-\Omega_{m}(t_0)}{T}\right) t.
\label{eq_pm_periodic_proof}
\end{equation}
Usually $t_0$ would be taken as the initial time instant $t_0=0$ corresponding to the undeformed configuration of the mesh, for this specific reference time instant $\Omega_{m}(0)=0$, and the previous expression can be further simplified into equation (\ref{eq_pm_periodic_simplified_proof}) :
\begin{equation}
p_{m}(t) =  \Omega_{m}(t)-\left( \frac{\Omega_{m}(T)}{T}\right) t.
\label{eq_pm_periodic_simplified_proof}
\end{equation}
Therefore, at each instant $t$ the periodic part of the volumetric increments $p_{m}$ are known and we introduce the Fourier coefficients for $p_{m}$, noted as $\hat{p}_{m,k}$ for $-N \leq k \leq N$. By calculating the temporal derivative in Fourier space of $p_{m}$ and exploiting the orthogonality of the Fourier basis functions, the rest of the Fourier coefficients of the integrated face mesh velocities $\hat{G}_{m,k}$ are deduced from a system of $2N$ equations  
\begin{equation}
\hat{G}_{m,k} = \frac{i2\pi k}{T}\hat{p}_{m,k} \mbox{ for } -N \leq k \leq N, \mbox{ } k \neq 0.
\label{eq_GCL_NLFD_Gmk_proof}
\end{equation} 
Since the derivation in Fourier space puts to zero the contribution from the zeroth coefficient, the value of the integration constant $\hat{\Omega}_{m,0}$ is not relevant to compute the integrated face mesh velocities.
\end{proof}

Finally the procedure to compute the IFMV to enforce GCL by deducing the time derivative of the volumetric increments for each face is given by the pseudo-code (Algorithm \ref{algo_method}).
\begin{algorithm}[!htbp]
\SetAlgoLined
\For{$n = 0,...,2N$}{Calculate the mesh deformation using the RBF for equally space time instances $t_n$ \;}
\For{$face=1,...,face_{max}$}{
\For{$n = 0,...,2N+1$}{Calculate the volumetric increments $\Omega_{face}(t_n)$ \;}
Deduce the zeroth Fourier coefficient via $\hat{G}_{face,0}=\dfrac{\Omega_{face}(t_{2N+1})}{T}$ \;
\For{$n = 0,...,2N$}{Extract the periodic part of the volumetric increments via $p_{face}(t_n)=\Omega_{face}(t_n)-\hat{G}_{face,0}t_n$ \;}
Compute the Fourier coefficients $\hat{p}_{face,k}$ via FFT on $p_{face}(t)$ \;
\For{$k = -N,...,1 \mbox{ and } k=1,...N$}{Deduce the $k^{th}$ Fourier coefficient of the integrated face mesh velocities via $\hat{G}_{face,k}=\dfrac{i2\pi k}{T}\hat{p}_{face,k}$ \;}
Compute the integrated face mesh velocities $G_{face}(t)$ via IFFT on $\hat{G}_{face,k}$ with $-N\leq k \leq N$\;
}
\caption{Pseudo-code representing the derived procedure to compute the integrated face mesh velocities and ensure GCL in the NLFD framework}
\label{algo_method}
\end{algorithm}
It is important to note that since the values of the volumetric increments are required at $t=T$ in order to deduce the zeroth Fourier coefficients $\hat{G}_{m,0}$ through equation (\ref{eq_Gm0}), one additional time step is needed $t_{2N+1}$ compared to the number of time steps for the flow solver. However for this final time step the configuration of the mesh is the same as the initial (the undeformed mesh), thus no additional time step is needed for the mesh deformation. For this procedure, the key point is to compute the exact volumetric increments as accurately as possible in order to preserve the spectral convergence of the NLFD method.

\subsubsection{Practical enforcement and error estimation}
\label{subsubsec_prac_enf}
In practice the accuracy of the previous method highly depends on the accuracy of the computation of the volumetric increments. For an hexahedral grid, as each face $m$ sweeps through the computational domain, they form hexahedra between time intervals. The volume of any hexahedreon can be computed using a trilinear mapping \cite{Dukowicz1988} between the physical space and the computational domain Figure \ref{fig_tri_mapping}. This yields the following definitions.

\begin{definition}
The volume of any hexahedron as a function of the position vectors of the vertices in the physical space $\mathbf{r}_i$ for $i=1,...,8$, is evaluated through,
\begin{equation}
\left\{
\begin{array}{l}
\Omega_{h}=(\Omega_{4321}+\Omega_{5678}+\Omega_{3487}+\Omega_{1256}+\Omega_{4158}+\Omega_{2376}), \\
\\
\mbox{with } \Omega_{ijkl}=\dfrac{1}{12}(\mathbf{r}_j+\mathbf{r}_k)\cdot ((\mathbf{r}_i+\mathbf{r}_j)\times(\mathbf{r}_i+\mathbf{r}_l)).
\end{array}
\right.
\label{eq_tri_mapping_volume}
\end{equation}
\label{def_vol_hex}
\end{definition}

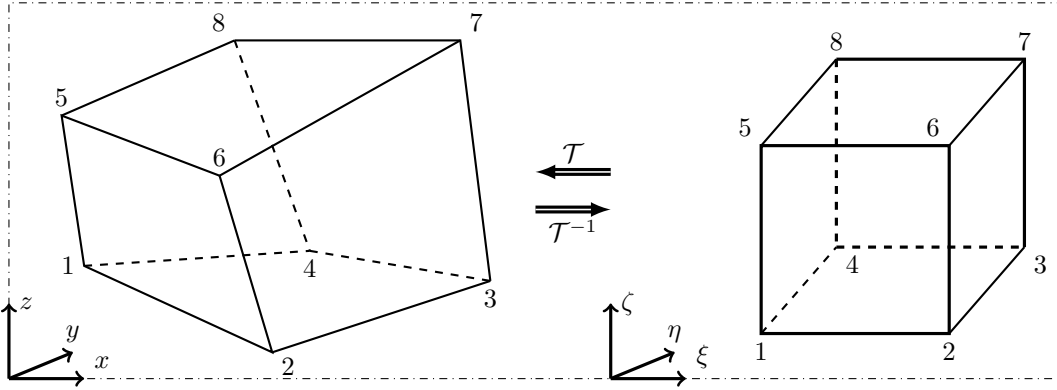
\begin{figure}[!htbp]
\centering
	\begin{tikzpicture}
		\draw[dash dot] (0,0)--(7,0) (7,5)--(0,5)--(0,0);
		   	\draw[very thick,->] (0,0) -- (1,0) node[above right]{$x$};
		   	\draw[very thick,->] (0,0) -- (0.85,0.35) node[above]{$y$};
   			\draw[very thick,->] (0,0) -- (0,1) node[above,right]{$z$};
   			
		\draw[dash dot] (7,0)--(14,0)--(14,5)--(7,5);
		   	\draw[very thick,->] (8,0) -- (9,0) node[above right]{$\xi$};
		   	\draw[very thick,->] (8,0) -- (8.85,0.35) node[above]{$\eta$};
   			\draw[very thick,->] (8,0) -- (8,1) node[above,right]{$\zeta$};		
		
			\draw[thick] (1,1.5)--(3.5,0.35)--(2.8,2.7)--(0.7,3.5)--cycle;
		\draw[thick, dashed] (4,1.7)--(6.4,1.3);
		\draw[thick] (6.4,1.3)--(6,4.5)--(3,4.5);
		\draw[thick, dashed] (3,4.5)--(4,1.7);
			\draw[thick] (3.5,0.35)--(6.4,1.3);
			\draw[thick] (2.8,2.7)--(6,4.5);
			\draw[thick] (0.7,3.5)--(3,4.5);
			\draw[thick,dashed] (1,1.5)--(4,1.7);
			
			\draw[very thick] (10,0.6)--(12.5,0.6)--(12.5,3.1)--(10,3.1)--cycle;
		\draw[very thick, dashed] (11,1.75)--(13.5,1.75);
		\draw[very thick] (13.5,1.75)--(13.5,4.25)--(11,4.25);	
		\draw[very thick, dashed] (11,4.25)--(11,1.75);			
			\draw[thick] (12.5,0.6)--(13.5,1.75);
			\draw[thick] (12.5,3.1)--(13.5,4.25);
			\draw[thick] (10,3.1)--(11,4.25);
			\draw[thick,dashed] (10,0.6)--(11,1.75);		
			
		\node[left] at (1,1.5) {1};
		\node[below right] at (3.5,0.4) {2};	
		\node[above] at (2.8,2.7) {6};
		\node[above] at (0.7,3.5) {5};
		\node[below] at (4,1.7) {4};
		\node[below] at (6.4,1.3) {3};	
		\node[above right] at (6,4.5) {7};
		\node[above left] at (3,4.5) {8};
			\node[below] at (10,0.6) {1};	
			\node[below] at (12.5,0.6) {2};
			\node[above left] at (12.5,3.1) {6};
			\node[above left] at (10,3.1) {5};
			\node[below right] at (11,1.75) {4};
			\node[below right] at (13.5,1.75) {3};
			\node[above] at (13.5,4.25) {7};
			\node[above] at (11,4.25) {8};
			
		\draw[very thick, double,<-,>=latex] (7,2.75) -- (8,2.75) node[above] at (7.5,2.75){$\mathcal{T}$} ;	
		\draw[very thick, double,->,>=latex] (7,2.25) -- (8,2.25) node[below] at (7.5,2.25){$\mathcal{T}^{-1}$} ;			
	\end{tikzpicture}
\caption{Trilinear mapping between a hexahedron in the physical space and a reference cube in the computational domain}	
\label{fig_tri_mapping}
\end{figure}

\begin{definition}
For any face $m$ of an hexahedral cell, the exact volumetric increment is estimated through a sum of hexahedra each corresponding to the approximated volumetric increment between two time samples $t_{n-1}$ and $t_n$ and noted as $\Omega_{m,h}(t_n)$, (see Figure \ref{fig_OmegaM_move}) :
\begin{equation}
\left\{
\begin{array}{l}
\Omega_m(t_0) = 0,\\
\\
\Omega_m(t_n) = \left( \displaystyle \sum_{k=1}^n \Omega_{m,h}(t_k) \right) + \epsilon^T_m(t_{n}) \mbox{, for }1 \leq n \leq 2N+1,
\end{array}
\right.
\label{eq_Omegam_practice}
\end{equation}
where $t_0=0$ is the initial instant corresponding to the undeformed mesh and $\epsilon^T_m(t_{n})$ is the truncation error at time instant $t_n$. 
\label{def_vol_inc}
\end{definition} 

\setlength{\unitlength}{1mm}
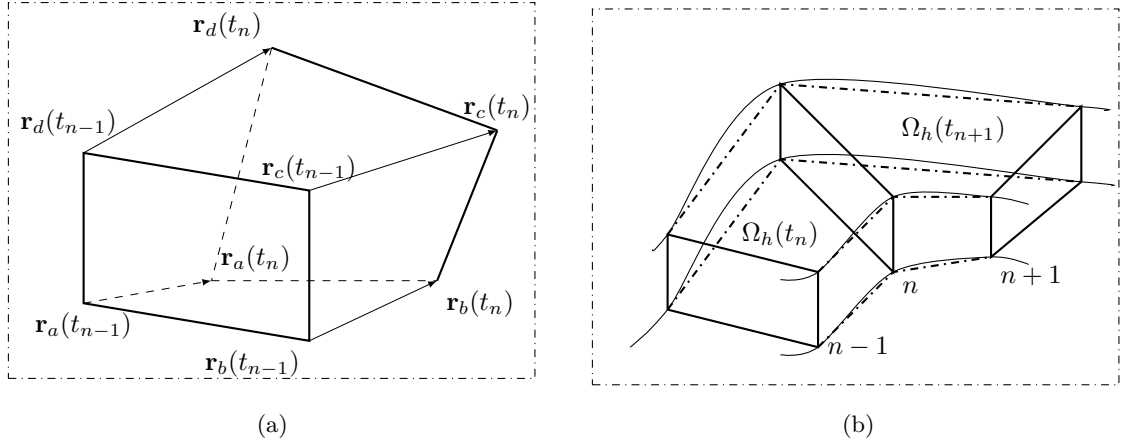
\begin{figure}[!htbp]
    \begin{subfigure}[t]{0.5\textwidth}
    \centering
	\begin{tikzpicture}[scale=1]
   			\draw[dash dot] (0,0)rectangle(7,5);
   			\draw[black,thick] (1,1) -- (4,0.5) -- (4,2.5) -- (1,3) -- cycle;
   				\draw[very thin, dashed,->,>=latex] (1,1)--(2.7,1.3);
   				\draw[very thin,->,>=latex] (4,0.5)--(5.7,1.3);
   				\draw[very thin,->,>=latex] (4,2.5)--(6.5,3.3);
   				\draw[very thin,->,>=latex] (1,3)--(3.5,4.4);
   			\draw[very thin, dashed] (2.7,1.3)--(5.7,1.3);
   			\draw[thick] (5.7,1.3)--(6.5,3.3);
   			\draw[thick] (6.5,3.3)--(3.5,4.4);
   			\draw[very thin, dashed] (3.5,4.4)--(2.7,1.3);	
   			    	\node[below] at (1,1) {$\mathbf{r}_a(t_{n-1})$};
   				\node[below left] at (4,0.5) {$\mathbf{r}_b(t_{n-1})$};
   				\node[above] at (4,2.5) {$\mathbf{r}_c(t_{n-1})$};
   				\node[above] at (0.8,3.1) {$\mathbf{r}_d(t_{n-1})$};
   			\node[above right] at (2.7,1.3) {$\mathbf{r}_a(t_{n})$};
   			\node[below right] at (5.7,1.3) {$\mathbf{r}_b(t_{n})$};
   			\node[above] at (6.5,3.3) {$\mathbf{r}_c(t_{n})$};
   			\node[above left] at (3.5,4.4) {$\mathbf{r}_d(t_{n})$};   				
	\end{tikzpicture}
	\caption{}
    \end{subfigure}
    ~
    \begin{subfigure}[t]{0.5\textwidth}
    \centering
	\begin{tikzpicture}[scale=1]
            \draw[dash dot] (0,0)rectangle(7,5);
			\draw[thick] (1,1)--(3,0.5)--(3,1.5)--(1,2)--cycle; \node[right] at (3,0.5) {$n-1$};
			\draw[thick] (2.5,3)--(4,1.5)--(4,2.5)--(2.5,4)--cycle; \node[below right] at (4,1.5) {$n$};
			\draw[thick] (6.5,2.7)--(5.3,1.7)--(5.3,2.5)--(6.5,3.7)--cycle; \node[below right] at (5.3,1.7) {$n+1$};		
				\draw [thick, dash dot] (1,1)--(2.5,3)--(6.5,2.7);
				\draw [thick, dash dot] (3,0.5)--(4,1.5)--(5.3,1.7);
				\draw [thick, dash dot] (3,1.5)--(4,2.5)--(5.3,2.5);
				\draw [thick, dash dot] (1,2)--(2.5,4)--(6.5,3.7);
			\draw [thin] plot [smooth] coordinates {(0.5, 0.5) (1,1) (2.5,3) (6.5,2.7) (6.9,2.65)};
			\draw [thin] plot [smooth] coordinates {(2.5, 0.4) (3,0.5) (4,1.5) (5.3,1.7) (5.8,1.6)};
			\draw [thin] plot [smooth] coordinates {(2.5, 1.4) (3,1.5) (4,2.5) (5.3,2.5) (5.8,2.4)};
			\draw [thin] plot [smooth] coordinates {(0.8, 1.8) (1,2) (2.5,4) (6.5,3.7) (6.8,3.65)};
				\node[] at (2.5,2) {$\Omega_{h}(t_n)$};
				\node[] at (4.8,3.4) {$\Omega_{h}(t_{n+1})$};
	\end{tikzpicture}
	\caption{}
    \end{subfigure}
    \caption{(a) Approximated volumetric increment between two time steps $t_{n-1}$ and $t_n$ : $\Omega_{h}(t_n)$ (b) Approximation of a volumetric increment as a sum of hexahedra}
    \label{fig_OmegaM_move}
\end{figure}

Now that the mathematical tools to compute the volumetric increments are introduced, the accuracy of the procedure presented in section \ref{subsec_modified_approach} can be established, we have the first lemma \ref{lemma_Gm0},
\begin{lemma}
In the context of theorem \ref{th_new_method}, and under the definitions \ref{def_vol_hex} and \ref{def_vol_inc}, for any face $m$ the temporal-order of accuracy of the zeroth Fourier coefficient of the integrated face mesh velocity $\hat{G}_{m,0}$ is limited to one.
\label{lemma_Gm0}
\end{lemma}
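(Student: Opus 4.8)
By Theorem~\ref{th_new_method}, the expression $\hat G_{m,0}=\Omega_m(T)/T$ in~(\ref{eq_Gm0}) is \emph{exact} whenever the volumetric increments are known exactly: the true zeroth Fourier coefficient of the integrated face mesh velocity is $\tfrac12\cdot\tfrac2T\int_0^T G_m\,dt=\tfrac1T\bigl(\Omega_m(T)-\Omega_m(0)\bigr)$ with $\Omega_m(0)=0$, and $\Omega_m(T)$, the algebraic volume swept over one period, need not vanish even though the mesh is back to its undeformed state at $t=T$. Hence the entire loss of accuracy comes from the hexahedral evaluation of Definition~\ref{def_vol_inc} (as used in Algorithm~\ref{algo_method}), which replaces $\Omega_m(T)$ by $\sum_{k=1}^{2N+1}\Omega_{m,h}(t_k)=\Omega_m(T)-\epsilon^T_m(T)$. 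So I would start from the identity
\[
\hat G_{m,0}-\hat G_{m,0}^{\mathrm{exact}}=-\frac{\epsilon^T_m(T)}{T},\qquad
\epsilon^T_m(T)=\sum_{k=1}^{2N+1}\Bigl(\int_{t_{k-1}}^{t_k}G_m\,dt-\Omega_{m,h}(t_k)\Bigr),
\]
and then estimate $\epsilon^T_m(T)$ in powers of $\Delta t=T/(2N+1)$, keeping in mind that the number of sub-intervals scales like $1/\Delta t$ as $N\to\infty$.

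The heart of the matter is the local (single sub-interval) estimate. By Definition~\ref{def_vol_hex}, the hexahedron behind $\Omega_{m,h}(t_k)$ is spanned by the straight chords $\mathbf r_i(t_{k-1})\to\mathbf r_i(t_k)$, so it equals exactly the volume swept by the \emph{affine-in-time} interpolant of the moving face, whereas $\int_{t_{k-1}}^{t_k}G_m\,dt$ is the volume swept by the true curved vertex trajectories produced by the RBF/RBFV motion. I would Taylor-expand both quantities about the mid-instant $\tfrac12(t_{k-1}+t_k)$, using the smoothness in time of the vertex positions and velocities. The leading-in-$\Delta t$ parts cancel — this is the consistency of the trilinear swept-volume formula with the continuous relation~(\ref{eq_GCL_split}) — and the remaining discrepancy is driven by the $\mathcal{O}(\Delta t^2)$ deviation of the chords from the trajectories acting over a swept layer of temporal thickness $\mathcal{O}(\Delta t)$; it is thus $\mathcal{O}(\Delta t^2)$ and does not cancel identically for a genuinely deforming cell. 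Summing the $2N+1=T/\Delta t$ contributions yields $\epsilon^T_m(T)=\mathcal{O}(\Delta t)$, hence $\hat G_{m,0}=\hat G_{m,0}^{\mathrm{exact}}+\mathcal{O}(\Delta t)$, and because this $\mathcal{O}(\Delta t)$ term is generically nonzero the order cannot exceed one, which is the claim.

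The main obstacle is precisely this local estimate, which has two halves. First, one must verify that the Dukowicz formula of Definition~\ref{def_vol_hex} is genuinely consistent with $\int_{t_{k-1}}^{t_k}G_m\,dt$, i.e. that no $\mathcal{O}(1)$ residual survives; this relies on the cell volume itself being taken as the trilinear volume and on the face mesh-velocity flux being evaluated consistently with it. Second, one must confirm that the leading correction is really $\mathcal{O}(\Delta t^2)$ for an arbitrary moving hexahedron and does not accidentally acquire an extra power of $\Delta t$. A convenient device is to write the swept region in bilinear face coordinates and a local linear time coordinate, so that $\int_{t_{k-1}}^{t_k}G_m\,dt-\Omega_{m,h}(t_k)$ collapses to the integral over the sub-interval of a fixed scalar triple product of the first and second time-derivatives of the vertex position vectors; its leading term is then read off and summed over the period, delivering the asserted first-order — and no better — accuracy of $\hat G_{m,0}$.
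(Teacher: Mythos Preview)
Your proposal is correct and follows essentially the same route as the paper: identify $\hat G_{m,0}-\hat G_{m,0}^{\mathrm{exact}}=-\epsilon^T_m(T)/T$, establish a local $\mathcal{O}(\Delta t^2)$ error on each hexahedral increment via a Taylor expansion of the vertex positions substituted into the trilinear volume formula of Definition~\ref{def_vol_hex}, and then sum the $N_{ts}=T/\Delta t$ contributions to obtain $\epsilon^T_m(T)=\mathcal{O}(\Delta t)$. The paper carries out the local estimate by expanding about $t_{n-1}$ rather than the mid-instant and packages the resulting scalar-triple-product structure into a periodic function $\mathcal{E}^T_m(t)$ (derived in the Appendix), but the logical skeleton and the order-counting argument are the same as yours.
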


\begin{proof}
For each face $m$ the path of the four corresponding vertices between two time steps is linearly approximated, as shown in Figure \ref{fig_non_periodic_OmegaM_move}. Then, for any of these vertices $\mathbf{r}_i, \mbox{ }i=a,b,c,d$ at the $n^{th}$ time sample, we have the Taylor expansion :
\begin{equation}
\mathbf{r}_i(t_n) = \mathbf{r}_i(t_{n-1}) + \left( \frac{\partial \mathbf{r}_i}{\partial t}(t_{n-1})\right)(t_n-t_{n-1}) + \underset{\text{Truncation error on the vertex path}}{\underbrace{\frac{1}{2}\left(\frac{\partial^2\mathbf{r}_i}{\partial t^2}(t_{n-1})\right)(t_n-t_{n-1})^2 + \mathcal{O}((t_n-t_{n-1})^3)}}. 
\end{equation}
The volumetric increment is then defined by the vertices positions with the following indexation,
\begin{equation}
\left\{
\begin{array}{lclclclclclclcl}
\mathbf{r}_1 & = & \mathbf{r}_a(t_{n-1}), & \mathbf{r}_2 & = & \mathbf{r}_b(t_{n-1}), & \mathbf{r}_3 & = & \mathbf{r}_c(t_{n-1}), & \mathbf{r}_4 & = & \mathbf{r}_d(t_{n-1}), \\
\mathbf{r}_5 & = & \mathbf{r}_a(t_{n}), & \mathbf{r}_6 & = & \mathbf{r}_b(t_{n}), & \mathbf{r}_7 & = & \mathbf{r}_c(t_{n}), & \mathbf{r}_8 & = & \mathbf{r}_d(t_{n}). 
\end{array}
\right.
\end{equation}
By substitution of the Taylor expansion of the vertex positions at instant $t_n$ into equation (\ref{eq_tri_mapping_volume}), and exploiting the linearity of the function, the error committed during the estimation of the volumetric increment between two time steps is found to be of order two in $\tau=(t_n-t_{n-1})$ (see Appendix \ref{appendix_trunc_error}). Recalling that the number of time steps $N_{ts}=2N+1$ and using the definition of the time instance, the difference $(t_n-t_{n-1})$ is written as 
\begin{equation}
\tau=\frac{[n-(n-1)]T}{2N+1}=\frac{T}{N_{ts}}.
\end{equation}
Thus the truncation error during the estimation of the volumetric increment between two time steps at the $n^{th}$ instant and noted $\epsilon_{m,h}^T(t_n)$ is of order two in $\tau$ and can be expanded as :
\begin{equation}
\left\{
\begin{array}{l}
\epsilon^T_{m,h}(t_{0})=0, \\
\\
\epsilon^T_{m,h}(t_n) = \mathcal{E}_m^T(t_{n-1})\tau^2 + \mathcal{O}(\tau^3) \mbox{, for }1 \leq n \leq 2N+1,
\end{array}
\right.
\label{eq_truncation_error_h}
\end{equation}
where $\mathcal{E}_m^T(t)$ is a scalar periodic function depending on $\mathbf{r}_i(t)$, and $\displaystyle \frac{\partial^2 \mathbf{\mathbf{r}_i}}{\partial t^2}$, for $i=a,b,c,d$ (see Appendix \ref{appendix_trunc_error}).

In order to estimate the error committed on the exact volumetric increment approximated at the $n^{th}$ time sample, these errors have to be summed, and yields :  
\begin{equation}
\left\{
\begin{array}{l}
\epsilon^T_m(t_{0})=0, \\
\\
\epsilon^T_m(t_{n}) = \displaystyle \sum_{k=1}^n\epsilon^T_{m,h}(t_{k}) = \left( \displaystyle \sum_{k=1}^n\mathcal{E}_m^T(t_{k-1}) \right)\tau^2 + n\mathcal{O}(\tau^3) \mbox{, for }1 \leq n \leq 2N+1.
\end{array}
\right.
\label{eq_truncation_error}
\end{equation}
Hence, for $1 \leq n \leq 2N+1$ :
\begin{equation}
|\epsilon^T_m(t_{n})| \leq  n\left( \max_{1 \leq n \leq 2N+1} |\mathcal{E}^T_m(t_{n-1})| \right) \tau^2 + n\mathcal{O}(\tau^3) = n \mathcal{O}(\tau^2)
\end{equation}
Thus for $n=N_{ts}$ :
\begin{equation}
|\epsilon^T_m(t_{N_{ts}})| \leq N_{ts} \mathcal{O} \left( \left( \frac{T}{N_{ts}} \right) ^2 \right) = \cal{O}(\tau)
\end{equation}
The order of the error to approximate the exact volume of the volumetric increment may decrease over a period from 2 to 1 for the final value. Thus for any face $m$, the order of the truncation error $\epsilon^T_m(t_{N_{ts}})$ done to compute the zeroth Fourier coefficient of any integrated face mesh velocity $\hat{G}_{m,0}$ is one. 
\end{proof}

Recalling that the zeroth Fourier coefficient is then used to extract the periodic part of any volumetric increment see theorem \ref{th_new_method}, the error committed on the rest of the Fourier coefficients of the integrated face mesh velocities is given by the following lemma \ref{lemma_Gmk},
\begin{lemma}
In the context of theorem \ref{th_new_method}, and under the definitions \ref{def_vol_hex} and \ref{def_vol_inc}, for any face $m$ the temporal-order of accuracy of the Fourier coefficients $\hat{G}_{m,k}$, for $-N \leq k \leq N$ with $k \neq 0$, is limited to between one and two.
\label{lemma_Gmk}
\end{lemma}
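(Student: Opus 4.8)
The plan is to follow the error already quantified in Lemma~\ref{lemma_Gm0} as it propagates through the two operations that define $\hat{G}_{m,k}$ for $k\neq 0$ in Theorem~\ref{th_new_method}: the extraction of the periodic part $p_m(t_n)=\Omega_m(t_n)-\hat{G}_{m,0}\,t_n$, and the discrete differentiation in Fourier space $\hat{G}_{m,k}=\frac{i2\pi k}{T}\hat{p}_{m,k}$.

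First I would write the computed periodic part at each time sample as the exact value plus an error, and split that error into two contributions: the accumulated truncation error $\epsilon^T_m(t_n)$ of Definition~\ref{def_vol_inc} and equation~(\ref{eq_truncation_error}), which contaminates $\Omega_m(t_n)$, and the term $-\,\epsilon_{\hat{G}_{m,0}}\,t_n$ stemming from the error committed on the zeroth Fourier coefficient. By Lemma~\ref{lemma_Gm0}, $\epsilon_{\hat{G}_{m,0}}=\epsilon^T_m(t_{N_{ts}})/T=\mathcal{O}(\tau)$, so the second contribution is $\mathcal{O}(\tau)$ uniformly in $n$ since $t_n\le T$; by equation~(\ref{eq_truncation_error}) the first contribution is at most $n\,\mathcal{O}(\tau^2)$, hence $\mathcal{O}(\tau)$ in the worst case $n=N_{ts}$ but only $\mathcal{O}(\tau^2)$ for the early samples. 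This establishes that the error on each $p_m(t_n)$ is $\mathcal{O}(\tau)$ at worst.

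Next I would push this bound through the remaining two steps. Since every Fourier coefficient $\hat{p}_{m,k}$ is an average of the $2N+1$ samples $p_m(t_n)$ over one period, its error is bounded by the mean of the per-sample errors and therefore inherits the $\mathcal{O}(\tau)$ estimate; multiplying by the fixed factor $\frac{i2\pi k}{T}$, bounded for each $k$ in the retained band $-N\le k\le N$, leaves the order unchanged, which yields the lower end, order one. For the upper end I would note that the accumulated truncation error is a Riemann-type sum, over one period, of the periodic density $\mathcal{E}_m^T$ against $\tau^2$, and that subtracting $\hat{G}_{m,0}\,t_n$ cancels its linearly growing mean part, leaving a periodic residual whose $k$-th harmonic — after the discrete Fourier transform and the differentiation — contributes to $\hat{G}_{m,k}$ a term proportional to $\tau$ times the $k$-th harmonic content of $\mathcal{E}_m^T$, plus an $\mathcal{O}(\tau^2)$ remainder coming from the $n\,\mathcal{O}(\tau^3)$ terms in equation~(\ref{eq_truncation_error}). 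Hence whenever that harmonic content is absent the estimate collapses to $\mathcal{O}(\tau^2)$, and it can never improve beyond $\mathcal{O}(\tau^2)$ because of the remainder; the accuracy of $\hat{G}_{m,k}$ therefore ranges between one and two depending on the motion, which is the claim.

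The step I expect to be the main obstacle is this upper-end argument: one must argue simultaneously that the $\mathcal{O}(\tau)$ worst case is genuinely realised for a generic periodic motion (the accumulated per-step errors do not cancel) and that, for the special configurations, the re-summation of those errors against the Fourier kernel does improve the order to two. Care is needed because $N$ and $\tau$ are not independent, $\tau=T/(2N+1)$, so a factor $N$ multiplying $\tau^2$ must be retained as a genuine $\mathcal{O}(\tau)$ rather than discarded, and because the estimate has to hold uniformly in the mode index $k$ over the retained band.
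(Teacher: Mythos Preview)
Your proposal is correct and follows essentially the same route as the paper: decompose the error on $p_m(t_n)$ into the accumulated per-step truncation $\epsilon^T_m(t_n)$ and the contribution $-\epsilon_{\hat G_{m,0}}t_n$, bound the combination by $\mathcal{O}(\tau)$ in the worst case, and use a Riemann-sum/mean-cancellation argument to recover the $\mathcal{O}(\tau^2)$ best case. The only substantive difference is that the paper makes the upper-end argument concrete by invoking an explicit Euler--Maclaurin-type asymptotic expansion of the Riemann sum (applied twice) to show $\epsilon^T_{p_m}(t_n)=\bigl[\sum_{j<n}(\mathcal{E}^T_m(t_j)-\langle\mathcal{E}^T_m\rangle_T)\bigr]\tau^2+n\,\mathcal{O}(\tau^3)$ and hence $\epsilon^T_{p_m}(t_{N_{ts}})=\mathcal{O}(\tau^2)$, whereas you describe that mechanism qualitatively.
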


\begin{proof}
For $1\leq n \leq N_{ts}$, the periodic part of the volumetric increment can be further expanded as :
$$
\begin{array}{lcl}
p_{m}(t_n) & = & \Omega_m(t_n)-\dfrac{\Omega_m(t_{N_{ts}})}{T}t_n \\
\\
& = & \displaystyle \sum_{k=1}^n \Omega_{m,h}(t_k) + \epsilon^T_m(t_{n}) - \left( \dfrac{\displaystyle \sum_{k=1}^{N_{ts}} \Omega_{m,h}(t_k) +\epsilon^T_m(t_{N_{ts}}) }{T} \left(\dfrac{nT}{N_{ts}} \right)\right)\\
\\
\end{array}
$$
From equation (\ref{eq_truncation_error}), we have :
$$
\begin{array}{lcl}
p_{m}(t_n) & = & \displaystyle \sum_{k=1}^n \Omega_{m,h}(t_k)  + \left( \sum_{k=1}^n \mathcal{E}_m^T(t_{k-1})\right)\tau^2 + n \mathcal{O}(\tau^3)
\\
& & - \displaystyle \left( \dfrac{ \displaystyle \sum_{k=1}^{N_{ts}} \Omega_{m,h}(t_k) +\left( \displaystyle \sum_{k=1}^{{N_{ts}}} \mathcal{E}^T_m(t_{k-1})\right)\tau^2 + ({N_{ts}}) \mathcal{O}(\tau^3) }{T} \left(\dfrac{nT}{N_{ts}} \right)\right)\\
\\
& = & \displaystyle \left( \sum_{k=1}^n \Omega_{m,h}(t_k)  -  \dfrac{n}{N_{ts}} \sum_{k=1}^{N_{ts}} \Omega_{m,h}(t_k) \right)
+ \displaystyle \left[ \sum_{k=1}^n \mathcal{E}^T_m(t_{k-1}) - \dfrac{n}{N_{ts}}\sum_{k=1}^{N_{ts}} \mathcal{E}^T_m(t_{k-1}) \right]\tau^2 + n\mathcal{O}(\tau^3). \\
\\
\end{array}
$$
Then the truncation error on the periodic part of any volumetric increment $p_{m}(t)$ is given for $1\leq n \leq N_{ts}$ by :
\begin{equation}
\epsilon^T_{p_{m}}(t_n) = \left[  \sum_{k=1}^n \mathcal{E}^T_m(t_{k-1}) - \dfrac{n}{N_{ts}}\sum_{k=1}^{N_{ts}} \mathcal{E}^T_m(t_{k-1}) \right]\tau^2 + n\mathcal{O}(\tau^3). 
\label{eq_truncation_error_pm}
\end{equation}

Since the bracketed term in equation (\ref{eq_truncation_error_pm}) is dependent of $n$, the order of accuracy for any $n$ is still unclear. To refine the determination of the order of accuracy during the computation of $p_{m}$, the approximation of an integral using the Riemann sum is exploited. 

For any $T$-periodic function $f$ at least three times continuous ($f\in \mathcal{C}^3([0;T])$, we have the following asymptotic development (\ref{eq_development_asymptotic}) where $f'=\frac{\partial f}{\partial t}$ :
\begin{equation}
\left\{
\begin{array}{lcl}
R_{N_{ts}} & = & \displaystyle \frac{T}{N_{ts}}\sum_{k=0}^{N_{ts}-1} f\left( t_k \right), \\
\\
R_{N_{ts}} & = & \displaystyle \int_0^T f(t)dt - \frac{T}{2N_{ts}}(f(T)-f(0))+\frac{T^2}{12(N_{ts})^2}(f'(T)-f'(0))+\mathcal{O} \left( \left(\frac{T}{N_{ts}}\right)^3 \right). \\
\end{array}
\right.
\label{eq_development_asymptotic}
\end{equation}
Then applying this result to the truncation error $\epsilon^T_{p_m}(t_n)$, for $1\leq n \leq N_{ts}$  :
\begin{equation}
\begin{array}{lcl}
\epsilon^T_{p_{m}}(t_n) 
& = & \displaystyle \left[  \sum_{k=0}^{n-1} \mathcal{E}^T_m(t_{k}) - \dfrac{n}{T} \dfrac{T}{N_{ts}}\sum_{k=0}^{N_{ts-1}} \mathcal{E}^T_m(t_{k}) \right]\tau^2 + n\mathcal{O}(\tau^3) \\
\\
& = & \displaystyle \left[  \sum_{k=0}^{n-1} \mathcal{E}^T_m (t_{k}) - \dfrac{n}{T} \left( \int_0^T \mathcal{E}^T_m (t)dt - \frac{\tau}{2}(\mathcal{E}^T_m (T)-\mathcal{E}^T_m (0))+\frac{\tau^2}{12}({\mathcal{E}^T_m}' (T)-{\mathcal{E}^T_m}' (0))+\mathcal{O}(\tau^3) \right) \right]\tau^2 \\
\\
& + & n\mathcal{O}(\tau^3) \\
\\
& = & \displaystyle \left[  \sum_{k=0}^{n-1} \mathcal{E}^T_m(t_{k}) - n \left( \langle \mathcal{E}^T_m \rangle_T - \frac{\tau}{2T}(\mathcal{E}^T_m(T)-\mathcal{E}^T_m(0))+\frac{\tau^2}{12T} ({\mathcal{E}^T_m}'(T)-{\mathcal{E}^T_m}'(0)) \right) \right]\tau^2 \\
\\
& + & n\mathcal{O}(\tau^3) \\
\\
& = & \displaystyle \left[  \sum_{k=0}^{n-1} \mathcal{E}^T_m(t_{k}) - n \langle \mathcal{E}^T_m \rangle_T  \right]\tau^2 + n\mathcal{O}(\tau^3) \\
\\
& = & \displaystyle \left[  \sum_{k=0}^{n-1} \left( \mathcal{E}^T_m(t_{k}) -  \langle \mathcal{E}^T_m \rangle_T \right) \right]\tau^2 + n\mathcal{O}(\tau^3), \\
\\
\end{array}
\label{eq_calculus_truncation_error_pm_1}
\end{equation}
where $\langle . \rangle_T$ represents the mean of a function on the segment $[0;T]$. Taking advantage of the fact that the function $ \Delta\mathcal{E}^T_m = \mathcal{E}^T_m -  \langle \mathcal{E}^T_m \rangle_T$ is $T$-periodic with zero mean value and exploiting a second time the expression (\ref{eq_development_asymptotic}),  yields for $n=N_{ts}$ :
\begin{equation}
\begin{array}{lcl}
\displaystyle \left\{  \sum_{k=0}^{N_{ts}-1} \left[ \left( \mathcal{E}^T_m(t_{k}) -  \langle \mathcal{E}^T_m \rangle_T \right) \right] \right\}\tau & = & \underset{=0}{\underbrace{\int_0^T \left( \Delta\mathcal{E}^T_m \right)dt}} - \frac{\tau}{2}(\Delta\mathcal{E}^T_m(T)-\Delta\mathcal{E}^T_m(0)) +\mathcal{O}(\tau^2) \\
\\
& = & \displaystyle -\frac{\tau}{2}(\mathcal{E}^T_m(T)-\mathcal{E}^T_m(0)) +\mathcal{O}(\tau^2). \\
\\
\end{array}
\label{eq_calculus_truncation_error_pm_2}
\end{equation}
Substituting back the expression (\ref{eq_calculus_truncation_error_pm_2}) into the final equation in (\ref{eq_calculus_truncation_error_pm_1}) for $n=N_{ts}$, leads to :
\begin{equation}
\epsilon^T_{p_{m}}(t_{N_{ts}}) = -\tau^2(\mathcal{E}^T_m(T)-\mathcal{E}^T_m(0)) + N_{ts}\mathcal{O}(\tau^3) = \mathcal{O}(\tau^2). 
\label{eq_calculus_truncation_error_pm_Nts}
\end{equation}

In summary the truncation error committed on the periodic part of any volumetric increment follows the equation :
\begin{equation}
\left\{
\begin{array}{l}
\epsilon^T_{p_{m}}(t_0)=0, \\
\\
\epsilon^T_{p_{m}}(t_n) = \displaystyle \left\{  \sum_{k=0}^{n-1} \left[ \left( \mathcal{E}^T_m(t_{k}) -  \langle \mathcal{E}^T_m \rangle_T \right) \right] \right\}\tau^2 + n\mathcal{O}(\tau^3)\mbox{, for }1\leq n \leq 2N. \\
\\
\epsilon^T_{p_{m}}(t_{N_{ts}}) = \mathcal{O}(\tau^2).
\end{array}
\right.
\label{eq_truncation_error_pm_order}
\end{equation}

In general, the order of the truncation error on the approximation of the periodic part of the volumetric increment used as input for the NLFD method is of order between one and two. Analytically, we observe that this order is determined by the sum $\displaystyle \sum_{k=0}^{n-1} \left[ \left( \mathcal{E}^T_m(t_{k}) -  \langle \mathcal{E}^T_m \rangle_T \right) \right]$, which is bounded for $1 \leq n \leq N_{ts}$ by $ \left\{ N_{ts} \displaystyle \max_{1 \leq k \leq N_{ts}} |\left( \mathcal{E}^T_m(t_{k}) -  \langle \mathcal{E}^T_m \rangle_T \right)| \right\}$. This upper bound ensures that in the worst case, the order of accuracy is 1. However asymptotically it is reasonable to assume that for small and high values of $n$, the term $\displaystyle\sum_{k=0}^{n-1} \left[ \left( \mathcal{E}^T_m(t_{k}) -  \langle \mathcal{E}^T_m \rangle_T \right) \right]$ is small enough to consider that the truncation error is of order 2 whereas for $n$ in the middle of the range $[1;N_{ts}]$, the order is greater than 1 but lesser than 2.
\end{proof}

Assuming that the spectral convergence of the Fourier transform is reached and taking advantage of its bijectivity, the truncation error on the Fourier coefficients $\hat{G}_{m,k}$ and finally on the integrated face mesh velocities is of order between 1 and 2. Therefore the accuracy of the procedure is given by the following corollary \ref{th_error} :
\begin{corollary}
In the context of theorem \ref{th_new_method}, and under the definitions \ref{def_vol_hex} and \ref{def_vol_inc}, for any face $m$ the temporal-order of accuracy of the integrated face mesh velocities is limited to between one and two.
\label{th_error}
\end{corollary}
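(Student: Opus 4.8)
The plan is to obtain the Corollary directly by combining Lemmas \ref{lemma_Gm0} and \ref{lemma_Gmk} with the reconstruction formula established in Theorem \ref{th_new_method}. First I would recall that, for each face $m$, the integrated face mesh velocity in physical time is recovered by an inverse discrete Fourier transform of the $2N+1$ coefficients $\hat{G}_{m,k}$, namely $G_m(t_n)=\sum_{k=-N}^{N}\hat{G}_{m,k}e^{i(2\pi k/T)t_n}$, with $\hat{G}_{m,0}=\Omega_m(T)/T$ and $\hat{G}_{m,k}=(i2\pi k/T)\hat{p}_{m,k}$ for $k\neq0$. The numerically evaluated coefficients differ from the exact ones only through the truncation errors already quantified: Lemma \ref{lemma_Gm0} bounds the error on $\hat{G}_{m,0}$ by $\mathcal{O}(\tau)$, i.e. temporal order one, while Lemma \ref{lemma_Gmk} bounds the error on each $\hat{G}_{m,k}$ with $k\neq0$ by a quantity of temporal order between one and two.

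Next I would propagate these bounds through the inverse transform. Since the IDFT over $N_{ts}=2N+1$ nodes is a fixed finite linear map whose entries have modulus one, the error on $G_m(t_n)$ at any node is at most the sum of the moduli of the errors on the $\hat{G}_{m,k}$, $-N\leq k\leq N$, hence a sum of finitely many terms each of order between one and two; it is therefore itself of order between one and two, the lower end one being forced by the zeroth coefficient through Lemma \ref{lemma_Gm0} and the upper end two by the generically second-order behaviour of the remaining modes through Lemma \ref{lemma_Gmk}. Invoking, as stated, that the spectral convergence of the Fourier transform is attained so that no further order loss occurs in passing between the spectral and physical representations, the order of accuracy of $G_m$ is between one and two. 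Finally, summing over the $N_f$ faces, $G(t)=\sum_{m=1}^{N_f}G_m(t)$ inherits the same order, a finite sum of quantities of order between one and two being again of that order; this yields the claimed bound on the integrated face mesh velocities.

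The main obstacle I anticipate is making rigorous the passage from Fourier space to physical space: one must argue that the bijectivity and boundedness of the DFT/IDFT pair do not degrade the asymptotic order, which is exactly where the spectral-convergence hypothesis must be used, since otherwise the truncation or aliasing error intrinsic to the transform itself could in principle dominate the geometric truncation errors carried by the coefficients. A secondary, more bookkeeping-type point is to check that the ``between one and two'' range is stable under the two finite summations involved --- over wave numbers $k$ and over faces $m$ --- so that the worst-case order one contributed by $\hat{G}_{m,0}$ indeed sets the lower bound while the typically second-order accuracy of the other modes sets the upper bound.
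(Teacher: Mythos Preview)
Your proposal is correct and follows essentially the same route as the paper: combine Lemma~\ref{lemma_Gm0} (order one on $\hat{G}_{m,0}$) and Lemma~\ref{lemma_Gmk} (order between one and two on $\hat{G}_{m,k}$, $k\neq 0$), then pass back to physical time via the inverse transform under the stated spectral-convergence/bijectivity hypothesis. In fact the paper's own justification of the corollary is a single sentence invoking exactly these two ingredients, so your write-up is, if anything, more explicit than the original; the final summation over faces is unnecessary since the corollary concerns each $G_m$ individually, but it is harmless.
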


Thus it is important to note that even if the method described in section \ref{subsec_DerivGCL} enforced the Geometric Conservation Law, the integrated face mesh velocities are determined within an accuracy of order 1 to 2. This is a disadvantage since the benefit of the spectral convergence of the NLFD method. 

\subsection{Alternative approach based on the exact integrated face mesh velocities}

\subsubsection{Trilinear mapping}
\label{subsec_tri_mapping}
The computation of the metrics of a grid is often easier in a Cartesian grid, for this reason a mapping between the curvilinear physical space and a Cartesian computational space can be performed. In this work, a trilinear mapping is already used to compute any hexahedron volume \citep{Dukowicz1988}, but it can also be used to compute the time derivative of any hexahedron, its surface vectors and the exact integrated face mesh velocities as long as the position and velocity vectors of the vertices are known. This section develops the derivation of these expressions.

\subsubsection*{Notation :}
\begin{equation}
\begin{array}{ccl}
\cal{T} & \leftrightarrow & \text{Trilinear mapping} \\
p & \leftrightarrow & \text{physical space : }(x,y,z) \\
r & \leftrightarrow & \text{reference space : }(\xi,\eta,\zeta) \\
m & \leftrightarrow & \text{any faces of an hexahedron} \\
\mathbf{n} & \leftrightarrow & \text{normal vector} \\
\hat{\mathbf{n}} & \leftrightarrow & \text{unit normal vector}
\end{array}
\label{eq_Notations_mapping}
\end{equation}

\subsubsection*{Derivation :}
The mapping $\cal{T}$ from the physical to the computational space is introduced :
\begin{equation}
 \cal{T} =
 \left\{
 \begin{array}{c}
 \cal(D_C) \rightarrow \cal(D_P) \\
 (\xi,\eta,\zeta) \rightarrow (x,y,z) = \cal(T(\xi,\eta,\zeta)),
 \end{array}
 \right.
 \label{eq_mapping}
\end{equation}
where $\cal(D_C)$ is the computational domain and $\cal(D_P)$ is the physical domain. The application is defined by considering a reference cube in the computational space which enables the mapping of any general hexahedron in the physical space. A necessary and sufficient condition to ensure the invertibility of the mapping is the strict positivity of the Jacobian for any point of the element. However no simple relations exist in order to verify the positivity of the Jacobian in 3D \cite{Lopez2017,Knabner2003}.

In this work, the position vector $\mathbf{r}_p=(x,y,z)$ in the physical space is mapped through $\mathbf{r}_r=(x(\xi,\eta,\zeta),y(\xi,\eta,\zeta),z(\xi,\eta,\zeta))$ based on the location vectors in the physical space $\mathbf{r_i}_{,p}=[x_i,y_i,z_i]$ $i=1,...,8$ of the vertices with the following convention derived from Figure \ref{fig_tri_mapping} :
\begin{equation}
\begin{array}{c}
\mathbf{r}_r=(1-\xi)(1-\eta)(1-\zeta)\mathbf{r_1}_{,p}+\xi(1-\eta)(1-\zeta)\mathbf{r_2}_{,p}+\xi\eta(1-\zeta)\mathbf{r_3}_{,p}+(1-\xi)\eta(1-\zeta)\mathbf{r_4}_{,p}\\
+(1-\xi)(1-\eta)\zeta \mathbf{r_5}_{,p}+\xi(1-\eta)\zeta \mathbf{r_6}_{,p}+\xi\eta\zeta \mathbf{r_7}_{,p}+(1-\xi)\eta\zeta \mathbf{r_8}_{,p}, \\
\end{array}
\label{eq_position_mapping}
\end{equation}
where $0 \leq \xi,\eta,\zeta \leq 1$. 

The velocity vector $\mathbf{v}_p=(v_x,v_y,v_z)$ in the physical domain is mapped in the same way $\mathbf{v}_r=(v_x(\xi,\eta,\zeta),v_y(\xi,\eta,\zeta),v_z(\xi,\eta,\zeta))$ based on the velocity vectors of the vertices $\mathbf{v_i}_{,p}=[v_{xi},v_{yi},v_{zi}]$ $i=1,...,8$ :
\begin{equation}
\begin{array}{c}
\mathbf{v}_r=(1-\xi)(1-\eta)(1-\zeta)\mathbf{v_1}_{,p}+\xi(1-\eta)(1-\zeta)\mathbf{v_2}_{,p}+\xi\eta(1-\zeta)\mathbf{v_3}_{,p}+(1-\xi)\eta(1-\zeta)\mathbf{v_4}_{,p}\\
+(1-\xi)(1-\eta)\zeta \mathbf{v_5}_{,p}+\xi(1-\eta)\zeta \mathbf{v_6}_{,p}+\xi\eta\zeta \mathbf{v_7}_{,p}+(1-\xi)\eta\zeta \mathbf{v_8}_{,p}. \\
\end{array}
\label{eq_velo_mapping}
\end{equation}

For any face $m$ of a cell, the normal vector is given by one of the following expressions :
\begin{equation}
\begin{array}{cc}
\mathbf{n}_{r,\zeta =0}= -\left(\dfrac{\partial \mathbf{r}_r}{\partial \xi}\right)\text{x}\left(\dfrac{\partial \mathbf{r}_r}{\partial \eta} \right),&
\mathbf{n}_{r,\zeta =1}= +\left(\dfrac{\partial \mathbf{r}_r}{\partial \xi}\right)\text{x}\left(\dfrac{\partial \mathbf{r}_r}{\partial \eta} \right),
\\
\\
\mathbf{n}_{r,\xi =0}=- \left( \dfrac{\partial \mathbf{r}_r}{\partial \eta} \right) \text{x}\left(\dfrac{\partial \mathbf{r}_r}{\partial \zeta}\right),& 
\mathbf{n}_{r,\xi =1}=+ \left( \dfrac{\partial \mathbf{r}_r}{\partial \eta} \right) \text{x}\left(\dfrac{\partial \mathbf{r}_r}{\partial \zeta}\right),
\\
\\
\mathbf{n}_{r,\eta =0}=- \left(\dfrac{\partial \mathbf{r}_r}{\partial \zeta}\right)\text{x}\left(\dfrac{\partial \mathbf{r}_r}{\partial \xi} \right), &
\mathbf{n}_{r,\eta =1}=+ \left(\dfrac{\partial \mathbf{r}_r}{\partial \zeta}\right)\text{x}\left(\dfrac{\partial \mathbf{r}_r}{\partial \xi} \right),
\\
\end{array}
\label{eq_normal_mapping}
\end{equation}
where the sign is determined in order to have the normal pointing outward of the cell volume.
The Jacobian matrix $J(\xi,\eta,\zeta)$ is expressed as :
\begin{equation}
J(\xi,\eta,\zeta)=
\left(
\begin{array}{ccc}
\dfrac{\partial \mathbf{r}_r}{\partial \xi} & \dfrac{\partial \mathbf{r}_r}{\partial \eta} & \dfrac{\partial \mathbf{r}_r}{\partial \zeta}
\end{array}
\right)
\end{equation}
and its determinant can be calculated with one of the following expressions :
\begin{equation}
|J| = \left(\dfrac{\partial \mathbf{r}_r}{\partial \xi}\right)\cdot \left[\left(\dfrac{\partial \mathbf{r}_r}{\partial \eta}\right) \mbox{x} \left(\dfrac{\partial \mathbf{r}_r}{\partial \zeta}\right)\right]
=\left(\dfrac{\partial \mathbf{r}_r}{\partial \eta}\right)\cdot \left[\left(\dfrac{\partial \mathbf{r}_r}{\partial \zeta}\right) \mbox{x} \left(\dfrac{\partial \mathbf{r}_r}{\partial \xi}\right)\right]
=\left(\dfrac{\partial \mathbf{r}_r}{\partial \zeta}\right)\cdot \left[\left(\dfrac{\partial \mathbf{r}_r}{\partial \xi}\right) \mbox{x} \left(\dfrac{\partial \mathbf{r}_r}{\partial \eta}\right)\right].
\end{equation}
Once the position vector, velocity vector, normal vectors and Jacobian are known, these quantities are used to compute the integrals of the volume and mesh velocity through a change of variables.

\subsubsection*{Volume integral}
Trough the application of the divergence theorem, the volume of the hexahedron can be evaluated as such,
$$
V_{p} = \int_{\Omega_{p}} dV_{p} = \oiint_{\partial \Omega_{p}} \frac{1}{3} \mathbf{r}_{p}\cdot \mathbf{dS}_{p}
= \frac{1}{3} \oiint_{\partial \Omega_{p}} (\mathbf{r}_{p}\cdot \mathbf{\hat{n}}_{p} )dS_{p}
$$
We can then write the integral for the computational domain through the trilinear mapping to acquire,
$$
\begin{array}{cl}
V_{p} = \displaystyle \frac{1}{3} \oiint_{\partial \Omega_{p}} (\mathbf{r}_{p}\cdot \mathbf{\hat{n}}_{p} )dS_{p}
&\displaystyle = \dfrac{1}{3} \oiint_{\partial \Omega_{r}} (\mathbf{r}_{r}\cdot \mathbf{\hat{n}}_{r} )|J_{r}| dS_{r} \\
&\\
&\displaystyle = \dfrac{1}{3} \sum_{m=1}^{N_f} \iint_{\partial \Omega_{r,m}} (\mathbf{r}_{r,m}\cdot \mathbf{\hat{n}}_{r,m} )|J_{r,m}| dS_{r,m},
\end{array}
$$
where $dS_{c,r,m}$ is either $d\xi d\eta$, $d\eta d\zeta$ or $d\zeta d\xi$ and the integral boundaries are $[0\mbox{ }1]^2$. \textbf{N.B. :} On any face of the hexahedron only one of the variables in the reference space $\xi$, $\eta$ or $\zeta$ has a fixed value. Thus the quantity $(\mathbf{r}_{c,r,m}.\mathbf{\hat{n}}_{r,m} )|J_{r,m}|$ is still a function of two variables which has to be integrated over the face.

For each face, the computation of the integral over the surface under this form is not straightforward (the difficulty comes from the unit normal vector) and needs to be simplified \textit{a priori}. This is done by exploiting the relation (\ref{eq_simplification_rational}), for the derivation of this expression see Appendix B in \cite{Zwanenburg2016}  :
\begin{equation}
\mathbf{\hat{n}}_{r,m} |J_{r,m}|=C_{r,m}\mathbf{\hat{N}}_{r,m},
\label{eq_simplification_rational}
\end{equation}
where $C_{r,m}=C(\xi,\eta,\zeta)$ is the cofactor matrix of the Jacobian matrix $J$ for the trilinear mapping and $\mathbf{\hat{N}}_{r,m}$ is the constant unit normal vector to the corresponding face in the reference space :
\begin{equation}
\begin{array}{cc}
\mathbf{\hat{N}}_{\zeta =0}= [ \begin{array}{ccc} 0 & 0 & -1\end{array} ]^T, &
\mathbf{\hat{N}}_{\zeta =1}= [ \begin{array}{ccc} 0 & 0 & +1\end{array} ]^T,
\\
\\
\mathbf{\hat{N}}_{\eta =0}= [ \begin{array}{ccc} 0 & -1 & 0\end{array} ]^T, &
\mathbf{\hat{N}}_{\eta =1}= [ \begin{array}{ccc} 0 & +1 & 0\end{array} ]^T,
\\
\\
\mathbf{\hat{N}}_{\xi =0}= [ \begin{array}{ccc} -1 & 0 & 0\end{array} ]^T, &
\mathbf{\hat{N}}_{\xi =1}= [ \begin{array}{ccc} +1 & 0 & 0\end{array} ]^T.
\\
\end{array}
\label{eq_unormal_mapping}
\end{equation}
Once the equation (\ref{eq_simplification_rational}) is substituted into the integrals over the surfaces, an explicit expression of the volume as a function of $\mathbf{r}_i$, $i=1,...,8$ is obtained :
\begin{equation}
\begin{array}{c}
( V_{c,p} )_{\cal{T}}=(V_{4321}+V_{5678}+V_{3487}+V_{1256}+V_{4158}+V_{2376})_{\cal{T}},
\end{array}
\label{eq_any_volume}
\end{equation}
where for any set $i,j,k,l$, the volumetric contribution of the face $S_{ijkl}$ is given by (\ref{eq_surface_increment}) :
\begin{equation}
(V_{ijkl})_{\cal{T}}
= \frac{1}{3} \iint_{\partial \Omega_{r,ijkl}} \mathbf{r}_{r,ijkl}\cdot (C_{r,ijkl}\mathbf{\hat{N}}_{r,ijkl}) dS_{r,ijkl} 
=\frac{1}{12}(\mathbf{r}_j+\mathbf{r}_k)\cdot ((\mathbf{r}_i+\mathbf{r}_j)\text{x}(\mathbf{r}_i+\mathbf{r}_l)).
\label{eq_surface_increment}
\end{equation}

\subsubsection*{Time derivative of the volumetric integral}
The temporal derivative of the volumetric integral can be expressed as,
\begin{equation}
\frac{\partial V_{p}}{\partial t} = \frac{\partial}{\partial t} \int_{\Omega_{p}} dV_{p}.
\end{equation}
By substituting the results of the previous section, primarily equations (\ref{eq_any_volume}) and (\ref{eq_surface_increment}), an explicit expression of the temporal derivative of the volume as a function of $\mathbf{r}_i$ and $\mathbf{v}_i$, $i=1,...,8$  is derived :
\begin{equation}
\left( \frac{\partial V_{p}}{\partial t} \right)_{\cal{T}} = \left(\frac{\partial V_{4321}}{\partial t}
+\frac{\partial V_{5678}}{\partial t}
+\frac{\partial V_{3487}}{\partial t}
+\frac{\partial V_{1256}}{\partial t}
+\frac{\partial V_{4158}}{\partial t}
+\frac{\partial V_{2376}}{\partial t} \right)_{\cal{T}},
\label{eq_dvol}
\end{equation}
where for any set $i,j,k,l$, the time derivative volumetric contribution of the face $S_{ijkl}$ is given by (\ref{eq_dsurface_increment}) :
\begin{equation}
\begin{array}{r}
\left( \dfrac{\partial V_{ijkl}}{\partial t} \right)_{\cal{T}}=
 \dfrac{1}{12}(\mathbf{v}_j+\mathbf{v}_k) \cdot ((\mathbf{r}_i+\mathbf{r}_j)\text{x}(\mathbf{r}_i+\mathbf{r}_l)) \\
 \\
+ \dfrac{1}{12}(\mathbf{r}_j+\mathbf{r}_k) \cdot ((\mathbf{v}_i+\mathbf{v}_j)\text{x}(\mathbf{r}_i+\mathbf{r}_l)) \\
\\
+ \dfrac{1}{12}(\mathbf{r}_j+\mathbf{r}_k) \cdot ((\mathbf{r}_i+\mathbf{r}_j)\text{x}(\mathbf{v}_i+\mathbf{v}_l)).
\end{array}
\label{eq_dsurface_increment}
\end{equation}

\subsubsection*{Integrated face mesh velocities}
The integral of the face mesh velocity in the physical domain for a face $m$ is given by,
$$
G_{p,m} =\iint_{\partial\Omega_{c,p,m}} ({\mathbf{v}}\cdot\mathbf{\hat{n}_{p,m}} ) dS_{c,p,m}.
$$
By introducing the trilinear mapping, we can express the integrated face mesh velocities as,
$$
G_{p,m} =\iint_{\partial \Omega_{r,m}} \mathbf{v}_{r,m} \cdot (\mathbf{\hat{n}}_{r,m} |J_{r,m}|) dS_{r,m}
=\iint_{\partial \Omega_{r,m}} \mathbf{v}_{r,m} \cdot (C_{r,m}\mathbf{\hat{N}}_{r,m}) dS_{r,m}.
$$
Once the integration is performed, the explicit expressions of the integrated face mesh velocities are obtained as a function of $\mathbf{r}_i$ and $\mathbf{v}_i$, $i=1,...,8$. For a face with the set $(i,j,k,l)\in \left\{4321;5678;3487;1256;4158;2376 \right\}$ :
\begin{equation}
\left\{
\begin{array}{rl}
\mathbf{v}_t= & (\mathbf{v}_i+\mathbf{v}_j+\mathbf{v}_k+\mathbf{v}_l) \\
&\\
\mathbf{S}_{i,j,k,l}= & \left( (\mathbf{r}_i \mbox{ x } \mathbf{r}_j) + (\mathbf{r}_j \mbox{ x } \mathbf{r}_k) + (\mathbf{r}_k \mbox{ x } \mathbf{r}_l) + (\mathbf{r}_l \mbox{ x } \mathbf{r}_i) \right)\\
&\\
\mathbf{S}_{a,b,c}= & \left( (\mathbf{r}_a\mbox{ x } \mathbf{r}_b) + (\mathbf{r}_b \mbox{ x } \mathbf{r}_c) + (\mathbf{r}_c \mbox{ x } \mathbf{r}_a) \right) \text{ for any set } a,b,c\\
&\\
(G_{i,j,k,l})_{\cal{T}} =& \dfrac{1}{12} \left( \mathbf{v}_t \cdot \mathbf{S}_{i,j,k,l} + \mathbf{v}_j \cdot \mathbf{S}_{i,j,k} + \mathbf{v}_k \cdot \mathbf{S}_{j,k,l} 
                                                       + \mathbf{v}_l \cdot \mathbf{S}_{k,l,i} + \mathbf{v}_i \cdot\mathbf{S}_{l,i,j} \right)
\end{array}
\right.
\label{eq_tri_mapping_IFMV}
\end{equation}

It was checked that with these expressions for the IFMV and the time derivative of the volume as functions of velocity and position vectors of the vertices, the semi-discrete equation of the GCL (\ref{eq_semi_discrete_GCL_2}) is analytically retrieved. In other words, the sum of equation (\ref{eq_tri_mapping_IFMV}) applied to the 6 sets $\left\{4321;5678;3487;1256;4158;2376 \right\}$ is equal to expression (\ref{eq_dvol}).

\subsubsection{Derivation of the GCL in the NLFD framework}
The methods presented in sections \ref{subsec_Tardiff_approach} and \ref{subsec_modified_approach} to enforce the GCL are based on equation (\ref{eq_GCL_split}), and the integrated face mesh velocities are deduced from the calculation of the volumetric increments as input. The approach presented in this section using the trilinear mapping is quite different because no volumetric increments are computed, the integrated face mesh velocities are directly evaluated in physical time using equation (\ref{eq_tri_mapping_IFMV}). In addition the cell volumes are computed using equation (\ref{eq_tri_mapping_volume}). Hence in the GCL equation as established in  (\ref{eq_semi_discrete_GCL_1}), both $\Omega$ and $G = \sum_{m=1}^{N_f} G_m$ are exactly calculated, the only degree of freedom remaining to enforce the equation is the discretization of the temporal derivative operator $\displaystyle \left( \frac{\partial}{\partial t} \right)$. In the NLFD framework, this operator is discretized in the Fourier domain and is a function of the number of harmonics $N$ employed in the temporal discretization. Therefore the GCL equation will be satisfied if and only if the time derivative of the cell volume expressed in Fourier space converge to the Fourier time differentiation applied to the cell volume, 
\begin{equation}
\text{DFT} \left\{ \left( \frac{\partial \Omega}{\partial t} \right)_{\cal{T}} \right\} = \left( \frac{\partial}{\partial t} \right)_{Fourier} \left( \text{DFT}\left\{ (\Omega_{\cal{T}}) \right\} \right),
\label{eq_conv_volume_DFT}
\end{equation}
where $\cal{T}$ refers to the trilinear mapping. Hence, this method will not enforce the GCL for any number of time steps contrary to the method presented in section \ref{subsec_modified_approach}, but for a sufficient number of harmonics ensuring the convergence of the equations (\ref{eq_conv_volume_DFT}). Since this approach is based on the exact integrated face mesh velocities and ensures the GCL with a spectral rate of convergence depending on the mesh motion, it provides a good alternative to the method exploiting the volumetric increments with an order of accuracy comprised between one and two. In the section \ref{sec_num}, we will present the numerical results of these different methods for several test cases.

\subsection{Extension of the results to Time-Spectral method}
In this section, we extent the previous results to Time-Spectral (TS) method presented by Gopinath and Jameson \cite{Gopinath2005,Gopinath2006}. Compared to the NLFD method which solves the governing equations in the frequency domain, the Time-Spectral method solves the governing equations in the time domain but exploits the features of a spectral approach.

\subsubsection{Time-Spectral method}
Assuming a periodic flow and a periodic deformation of the mesh, we recall the temporal discretization of the modified state vector $\bar{\mathbf{w}}=\Omega\mathbf{w}$ equations (\ref{eq_DFT_w}) and (\ref{eq_Fourier_coeff_w}),
$$
\bar{\mathbf{w}}(t)=\sum_{k=-N}^N \mathbf{\hat{w}}_k \text{e}^{i(2\pi k/T)t},
$$
with :
$$
\mathbf{\hat{w}}_k = \frac{1}{2N+1}\sum_{n=0}^{2N} \Omega(t_n)\mathbf{w}(t_n)\text{e}^{-i(2\pi k/T)t_n},
$$
where $T$ is the time period, $N$ is the number of modes considered in the DFT and $t_n$ the equally spaced time instances given by,
$$
t_n = \frac{n}{2N+1}T \mbox{, for }n=0,..,2N.
$$
In Fourier space, the time discretization operator leads to,
\begin{equation}
\frac{\partial \mathbf{\bar{w}}}{\partial t}(t) = \sum_{k=-N}^N \frac{i2 \pi k}{T} \mathbf{\hat{w}}_k \text{e}^{i(2\pi k/T)t},
\end{equation}
\begin{equation}
\Leftrightarrow \frac{\partial \mathbf{\bar{w}}}{\partial t}(t) = \frac{2\pi}{T} \sum_{k=-N}^N ik \left( \frac{1}{2N+1}\sum_{K=0}^{2N} \Omega(t_K)\mathbf{w}(t_K)\text{e}^{-i(2\pi k/T)t_K} \right) \text{e}^{i(2\pi k/T)t}.
\end{equation}
By evaluating this expression for each time instance $t_n$, we have for $n=0,...,2N$,
\begin{equation}
\frac{\partial \mathbf{\bar{w}}}{\partial t}(t_n) = \frac{2\pi}{T} \sum_{k=-N}^N ik \left( \frac{1}{2N+1}\sum_{K=0}^{2N} \Omega(t_K)\mathbf{w}(t_K)\text{e}^{-i(2\pi k/T)t_K} \right) \text{e}^{i(2\pi k/T)t_n},
\end{equation}
\begin{equation}
\Leftrightarrow \frac{\partial \mathbf{\bar{w}}}{\partial t}(t_n) = \sum_{K=0}^{2N} \left[ \Omega(t_K)\mathbf{w}(t_K) \left( \frac{2\pi}{T} \frac{1}{2N+1} \sum_{k=-N}^N ik   \text{e}^{i(2\pi k)(n-K)/(2N+1)} \right)  \right].
\end{equation}
We introduce, the coefficients $d_{n,K}$, defined for $n=0,...,2N$ by,
\begin{equation}
d_{n,K} = \frac{2\pi}{T} \frac{1}{2N+1} \sum_{k=-N}^N ik   \text{e}^{i(2\pi k)(n-K)/(2N+1)},
\end{equation}
the compact form of the coefficients for an odd number of time steps is written as follows (for the derivation see Reference \cite{Gopinath2006}), 
\begin{equation}
d_{n,K} = 
\left\{
\begin{array}{lcl}
\displaystyle \frac{2\pi}{T} \frac{1}{2}(-1)^{n-K} \csc \left(\frac{\pi(n-K)}{2N+1}\right), &\mbox{ if }& K \neq n \\
0 , &\mbox{ if }& K = n,
\end{array}
\right.
\end{equation}
and,
\begin{equation}
\frac{\partial \mathbf{\bar{w}}}{\partial t}(t_n) = \sum_{K=0}^{2N} d_{n,K} \bar{\mathbf{w}}(t_K).
\end{equation}
The temporal-derivation operator appears as the multiplication of a matrix $\mathcal{D}=(d_{n,K})_{0\leq n,K \leq 2N}$ with each vector $(\mathbf{\bar{w}}_i(t_K))_{0\leq K \leq 2N}$, for $i=1,...,5$ where the index $i$ refers to the component of the modified state vector in the governing equations. In addition, this matrix is skew-symmetric, independent of any state variables and completely determined by the number of harmonics used in the DFT and the temporal period. Then a pseudo-time $t^{*}$ is introduced and the equations are solved in the time domain through,
\begin{equation}
\frac{\partial \mathbf{\bar{w}}}{\partial t^{*}}(t_n) + \frac{\partial \mathbf{\bar{w}}}{\partial t}(t_n) + \mathbf{R(w(}t_n\mathbf{))} = 0, \mbox{ for }n=0,...,2N.
\end{equation}

\subsubsection{Derivation and enforcement of the GCL}
Recall that in order to obtain a consistent solution method, the GCL must be discretized using the same numerical scheme employed to discretize the governing equations. In the case of Time-Spectral method, it leads to the following theorem \ref{th_new_method_TS},

\begin{theorem}
Let $\Omega$ be a discretized control volume, enclosed by $N_f$ faces, and subjected to a periodic motion of its vertices. Then given the knowledge of the exact volumetric {increments} $\Omega_m$ for $m=1,...,N_f$, a sufficient condition to ensure the satisfaction of GCL in the TS framework is the computation of the integrated face mesh velocities through the following relations,
\begin{equation}
\mathbf{G}_{m} =  (\mathcal{D}) \boldsymbol{p}_{m} + \left\langle G_{m} \right\rangle_T (\mathcal{I_{N}}),
\label{eq_GCL_TS_Gmk}
\end{equation} 
where for all $m$, $\left\langle G_{m} \right\rangle_T$ are the temporal mean values of the integrated face mesh velocities, $\mathcal{I_{N}}$ is the identity matrix of dimension $2N+1$, $\mathbf{G}_{m}=(G_m(t_n))_{0 \leq n \leq 2N}$ and $\boldsymbol{p}_{m}=(p_{m}(t_n))_{0 \leq n \leq 2N}$ are the vectors grouping the time instances of respectively the integrated face mesh velocities and the periodic part of the exact volumetric {increments} given by,
\begin{equation}
p_{m}(t) =  \Omega_{m}(t)-\left( \frac{\Omega_{m}(T)}{T}\right) t,
\label{eq_pm_periodic_simplified_TS}
\end{equation}
and $\mathcal{D}=(d_{n,K})_{0\leq n,K \leq 2N}$ is the matrix representing the temporal derivation operator of the Time-Spectral method, defined by its coefficients $d_{n,K}$ for $0\leq n,K \leq 2N$,
\begin{equation}
d_{n,K} = 
\left\{
\begin{array}{lcl}
\displaystyle \frac{\pi}{T}(-1)^{n-K} \csc \left(\frac{\pi(n-K)}{2N+1}\right), &\mbox{ if }& K \neq n \\
0 , &\mbox{ if }& K = n.
\end{array}
\right.
\label{eq_D_matrix_TS}
\end{equation}
\label{th_new_method_TS}
\end{theorem}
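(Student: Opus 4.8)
The plan is to follow the architecture of the proof of Theorem~\ref{th_new_method}, replacing the Fourier-space multiplier $i2\pi k/T$ by the Time-Spectral differentiation matrix $\mathcal{D}$ of~(\ref{eq_D_matrix_TS}), and to isolate the one genuinely new point: $\mathcal{D}$ is built to differentiate sampled \emph{periodic} functions, so it cannot be applied directly to the volumetric increments $\Omega_m$ — which carry a linear drift — but only to their periodic parts $p_m$.

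First I would reuse the argument from the proof of Theorem~\ref{th_new_method}: periodicity of the vertex motion makes each $G_m=\partial_t\Omega_m$ a $T$-periodic function, and integrating once yields the decomposition $\Omega_m(t)=l_m(t)+p_m(t)$ with linear part $l_m(t)=\langle G_m\rangle_T\,t$ and periodic part $p_m$ given by~(\ref{eq_pm_periodic_simplified_TS}); the temporal mean satisfies $\langle G_m\rangle_T=\tfrac{1}{T}\int_0^T\partial_t\Omega_m\,dt=\Omega_m(T)/T$ since $\Omega_m(0)=0$ at the undeformed reference instant (consistent with~(\ref{eq_Gm0})). Next I would write the semi-discrete GCL~(\ref{eq_semi_discrete_GCL_2}) in the Time-Spectral setting: sampling at the $2N+1$ collocation nodes $t_n$ and replacing $\partial_t$ by $\mathcal{D}$ turns $\partial_t\Omega-G=0$ into the linear-algebra identity $\mathcal{D}\,\boldsymbol{\Omega}=\mathbf{G}$, with $\boldsymbol{\Omega}=(\Omega(t_n))_n$ and $\mathbf{G}=\sum_{m=1}^{N_f}\mathbf{G}_m$. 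Using the splittings~(\ref{eq_Omega_split}) and~(\ref{eq_G_split}), it then suffices to check that the $\mathbf{G}_m$ proposed in~(\ref{eq_GCL_TS_Gmk}) reproduce this $\mathbf{G}$.

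The two load-bearing facts are: (i) $\mathcal{D}$ annihilates the constant vector, $\mathcal{D}\mathbf{1}=\mathbf{0}$, because $\mathcal{D}$ is the derivative of the trigonometric interpolant (equivalently, each row of~(\ref{eq_D_matrix_TS}) sums to zero, by pairing the index shift $n-K$ with $2N+1-(n-K)$); and (ii) the face-wise linear drifts cancel over the cell, since $\sum_m\langle G_m\rangle_T=\tfrac{1}{T}\sum_m\Omega_m(T)=\tfrac{1}{T}(\Omega(T)-\Omega(0))=0$, the cell returning to its undeformed configuration after one period. Then $\boldsymbol{\Omega}=\Omega(t_0)\mathbf{1}+\sum_m\boldsymbol{l}_m+\sum_m\boldsymbol{p}_m=\Omega(t_0)\mathbf{1}+\sum_m\boldsymbol{p}_m$, so $\mathcal{D}\boldsymbol{\Omega}=\sum_m\mathcal{D}\boldsymbol{p}_m$; on the other hand $\sum_m\mathbf{G}_m=\sum_m\mathcal{D}\boldsymbol{p}_m+(\sum_m\langle G_m\rangle_T)\mathbf{1}=\sum_m\mathcal{D}\boldsymbol{p}_m$. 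The two agree, which is exactly~(\ref{eq_GCL_TS_Gmk}); hence the GCL is enforced. Separately, I would remark that under the standing spectral assumption each band-limited $p_m$ gives $\mathcal{D}\boldsymbol{p}_m=(\dot p_m(t_n))_n$, so $\mathbf{G}_m$ is moreover a faithful sampling of the physical integrated face mesh velocity.

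I expect the main obstacle to be stating fact~(ii) cleanly: although an individual $\Omega_m$ has a non-removable linear component that the Time-Spectral operator is not built to differentiate, the drifts of all faces of one cell sum to zero, which is precisely what makes the per-face construction globally consistent. The subsidiary technical point is justifying $\mathcal{D}\mathbf{1}=\mathbf{0}$ from the explicit coefficients~(\ref{eq_D_matrix_TS}), a short trigonometric-sum identity that I would present via the ``derivative of a trigonometric interpolant'' argument, with the elementary direct verification noted as a remark.
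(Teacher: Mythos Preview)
Your proof is correct, but it proceeds along a genuinely different axis than the paper's.

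The paper argues \emph{per face}: it writes the finite Fourier expansion of $G_m=\partial_t\Omega_m$, identifies $\langle G_m\rangle_T=\hat G_{m,0}$, recalls the decomposition $\Omega_m=l_m+p_m$ from Theorem~\ref{th_new_method}, and then simply applies the Time-Spectral derivative to the periodic part, obtaining $G_m(t_n)=\langle G_m\rangle_T+\sum_K d_{n,K}\,p_m(t_K)$ for each face separately. No use is made of the cell-level cancellation $\sum_m\langle G_m\rangle_T=0$, nor of $\mathcal{D}\mathbf{1}=\mathbf{0}$; the argument is essentially a restatement of the per-face identity~(\ref{eq_GCL_split}) in the TS collocation basis, and it implicitly leans on the band-limit assumption so that $\mathcal{D}\boldsymbol{p}_m$ reproduces $(\dot p_m(t_n))_n$ exactly.

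Your route instead verifies the \emph{cell-level} discrete GCL $\mathcal{D}\boldsymbol{\Omega}=\sum_m\mathbf{G}_m$ by pure linear algebra, using two ingredients the paper never invokes: (i) $\mathcal{D}\mathbf{1}=\mathbf{0}$ and (ii) the cancellation $\sum_m\Omega_m(T)=\Omega(T)-\Omega(t_0)=0$ of the linear drifts over a closed cell. This buys you something the paper's proof does not make explicit: the discrete GCL holds \emph{identically} for any $N$, independently of whether each $p_m$ is resolved by $2N+1$ modes; the spectral hypothesis enters only in your side remark that each $\mathbf{G}_m$ is additionally a faithful sampling of the physical integrated face mesh velocity --- which is precisely the content of the paper's main argument. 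Conversely, the paper's per-face derivation makes the physical meaning of each $\mathbf{G}_m$ transparent but leaves the exact algebraic consistency at the cell level implicit. The two arguments are complementary; yours is the sharper statement of sufficiency for the GCL itself.
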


\begin{proof}
Under the assumption that the motion of the vertices is periodic, the temporal rate of change of the algebraic volume swept by each face through time is periodic. Thus the temporal derivative of the volumetric {increments} and the integrated face mesh velocities are periodic, the DFT is applied to equation (\ref{eq_GCL_split}) leading to :
\begin{equation}
G_m(t) = \frac{\partial \Omega_m}{\partial t} = \hat{G}_{m,0} + \sum_{k=-N,k\neq 0}^{N} \hat{G}_{m,k} \text{e}^{i\frac{2\pi}{T}kt},
\label{eq_DFT_Gm_TS}
\end{equation}
where for any face $m$, $\hat{G}_{m,k}$ are the Fourier coefficients of both the derivative of the volumetric {increment} and the integrated face mesh velocity. The mean of a function expandable in Fourier serie is given by its zeroth Fourier coefficient thus,
\begin{equation}
\left\langle G_{m} \right\rangle_T = \hat{G}_{m,0}.
\label{eq_Gm0_mean_Gm_TS}
\end{equation}
From the proof of theorem \ref{th_new_method}, we have the following relations,
\begin{equation}
\Omega_{m}(t) = l_{m}(t)+p_{m}(t),
\label{eq_int_dOmegam_TS}
\end{equation}
\begin{equation}
l_{m}(t) = \hat{G}_{m,0} t = \frac{\Omega_{m}(T)}{T}t,
\label{eq_Omegam_lm_TS}
\end{equation} 
\begin{equation}
p_{m}(t) = \hat{\Omega}_{m,0}+\sum_{k=-N,k \neq 0}^{N} \frac{T}{i2\pi k}\hat{G}_{m,k} e^{i\frac{2\pi}{T}kt},
\label{eq_Omegam_pm_TS}
\end{equation}
\begin{equation}
p_{m}(t) =  \Omega_{m}(t)-\left( \frac{\Omega_{m}(T)}{T}\right) t.
\label{eq_pm_periodic_simplified_proof_TS}
\end{equation}
Then, by exploiting these results and applying the Time-Spectral temporal derivation to the periodic part of the volumetric increments, we can write for each time instance $t_n$, with $n=0,...,2N$,
\begin{equation}
\begin{array}{lcl}
G_m(t_n) & = & \displaystyle \frac{\partial \Omega_m}{\partial t}(t_n) \\
\\
& = & \displaystyle  \frac{\partial (l_{m}+p_{m})}{\partial t}(t_n) \\
\\
& = & \displaystyle \left\langle G_{m} \right\rangle_T + \frac{\partial p_{m}}{\partial t}(t_n) \\
\\
& = & \displaystyle \left\langle G_{m} \right\rangle_T + \sum_{K=0}^{2N} d_{n,K} p_{m}(t_K), \\
\\
\end{array}
\end{equation}
Finally, if we group all the time instances in a vector $\mathbf{G}_m$, we obtain,
\begin{equation}
\mathbf{G}_{m} =  (\mathcal{D}) \boldsymbol{p}_{m} + \left\langle G_{m} \right\rangle_T (\mathcal{I_{N}}),
\label{eq_GCL_TS_proof}
\end{equation} 
where $\mathcal{I_{N}}$ is the identity matrix of dimension $2N+1$ and $\mathcal{D}=(d_{n,K})_{0\leq n,K \leq 2N}$ is the matrix representing the temporal derivation operator of the Time-Spectral method. The condition given by equation (\ref{eq_GCL_TS_proof}) is a criteria to ensure that the GCL are enforced in the Time-Spectral framework.
\end{proof}

\section{Numerical results}
\label{sec_num}
The new approaches to enforce the Geometric Conservation Law developed in section \ref{sec_GCL} are numerically tested in order to validate their procedures. The protocol, test cases and results are presented in the following sections.

\subsection{Protocol}
The physical interpretation of the GCL is that any uniform flow must be preserved by the numerical scheme employed for the flow solver and independently of the mesh movements. This law imposes constraints on the manner to compute some geometrical quantities such as the volume and the integrated face mesh velocities. Thus the first step of our test is to ensure the preservation of uniform flow by computing the relative error between the initially defined uniform state vector $\mathbf{W}_0$ and the computed state vector $\mathbf{W}$ by the flow solver,
\begin{equation}
RelErr= \max_{0\leq n \leq 2N} \left\{ \max_{1\leq n_v \leq N_{cell}} \left( \max_{1\leq j \leq 5} \left|  \frac{W_j(n_v,t_n)-W_{0,j}(n_v,t_n)}{W_{0,j}(n_v,t_n)} \right| \right) \right\}, 
\label{eq_results_rel_error_W}
\end{equation}
where $W_1=\rho$, $W_2=\rho u$, $W_3=\rho v$, $W_4=\rho w$ and $W_5=\rho E$ and $n_v$ is the index pointing to the grid cell with $N_{cell}$ the number of cells in the mesh.

However the verification of uniform flow preservation only guarantees that the GCL are satisfied "by summing over the faces", but not that the computed integrated face mesh velocities are correct. Indeed as long as the sum of the time derivative of the volumetric increments is equal to the time derivative of the cell volume,
\begin{equation}
\frac{\partial \Omega}{\partial t} = \sum_{m=1}^{N_f} \frac{\partial \Omega_m}{\partial t},
\label{eq_dOmega_sum_dOmegam}
\end{equation}
the deduced integrated face mesh velocities from the time derivative of the volumetric increments from equation (\ref{eq_GCL_split}) enforce the GCL after the summation through the faces (see equation (\ref{eq_semi_discrete_GCL_2})) but the integrated face mesh velocities themselves may not converge to the correct value.

Thus in order to verify that the GCL are enforced with a correct evaluation of the integrated face mesh velocities, the values derived from the trilinear mapping equations (\ref{eq_dvol}) and (\ref{eq_tri_mapping_IFMV}) based on the location and velocity vectors of the grid points retrieved from the dynamic mesh deformation, are considered as reference. Therefore for each motion of the mesh and for various number of harmonics $N$, four different implementations of the integrated face mesh velocities are compared :
\begin{enumerate}
\item the IFMV deduced from the linear volumetric increments from Tradiff et al. \cite{Tardif2017} see Figure \ref{fig_linear_Omegam} noted as "NLFD-LVI" ;
\item the IFMV calculated with the new method based on the exact volumetric increments approximated as a sum of hexahedron see Figure \ref{fig_OmegaM_move} noted as "NLFD-AEVI" ;
\item the previous approximation obtained by taking the average of the velocity of the four vertices defining a face and projected along the surface normal vector noted as "AVG";
\item the method based on the trilinear mapping noted as "TRI-MAP" and used as reference for the exact values of the IFMV.
\end{enumerate}
For each of these approaches, the preservation of uniform flow is tested. Then different quantities are compared by computing the maximum absolute error :
\begin{itemize}
\item comparison of the sum of the IFMV to the NLFD time derivative of the cell volume computed using the numerical scheme of the flow solver. This comparison is similar to a demonstration of the preservation of uniform flow:
\begin{equation}
AbsErr_1 = \max_{0\leq n \leq 2N} \left\{ \max_{1\leq n_v \leq N_{cell}} \left| \left( \sum_{m=1}^{N_f} G_m(n_v,t_n) \right)_{METHOD} - \left(\frac{\partial \Omega}{\partial t}(n_v,t_n) \right)_{NLFD} \right| \right\};
\label{eq_comparison_abserr_1}
\end{equation}
\item comparison of the IFMV to the reference integrated face mesh velocities (TRI-MAP) in each direction $dir=$ $x$, $y$ or $z$ :
\begin{equation}
AbsErr_2 = \max_{0\leq n \leq 2N} \left\{ \max_{1\leq n_v \leq N_{cell}} \left|  \left(G_{m,dir}(n_v,t_n) \right)_{METHOD} - \left( G_{m,dir}(n_v,t_n) \right)_{TRI-MAP} \right|  \right\}.
\label{eq_comparison_abserr_2}
\end{equation}
\end{itemize} 

\subsection{Test cases}
This section presents the different mesh motions impose as test cases. The time period is always taken to be unity. All tests are performed on a square mesh of size $10\times10\times10$, and of lengths $L_x=3.2$, $L_y=2.8$, and $L_z=2.4$. The undeformed positions of the mesh are indexed with the subscript $0$, if needed the RBF points are indexed with the subscript $r$.  The two parameters in the JST scheme are $\kappa^{(2)}=1$ and $\kappa^{(4)}=1/32$. The simulations are run for a number of harmonics, $N$ from 1 to 20. The mesh deformations for cases 2, 4 and 5 at an arbitrary time instant are presented on Figure \ref{fig_def_mesh}.

\subsubsection{Without RBF}
Three test cases are performed by directly imposing the mesh deformation to the entire mesh. The velocity of the vertices is computed based on the analytic time derivation of the vector position of the vertices. For any vertex, its initial position is noted $(x_0,y_0,z_0)$. The parameters $A_x$, $A_y$, $A_z$, $R$, and $\alpha_0$ can be arbitrarily chosen as long as no degenerative cells (cells with negative volume) appear during the motion. The analytic functions employed for the motions are as follows :
\begin{description}
\item[Case 1 :] 1-harmonic sinusoidal perturbation of the mesh with a linear motion, the direction is held fixed while each point has its own motion amplitude based on its initial position :
\begin{equation}
\left\{
\begin{array}{l}
x(t)=x_{0}+A_x \sin \left( \frac{\pi x_{0}}{L_x} \right) \sin \left( \frac{\pi y_{0}}{L_y} \right) \sin \left( \frac{\pi z_{0}}{L_z} \right)  \sin \left( 2 \pi t \right)\\
y(t)=y_{0}+A_y \sin \left( \frac{\pi x_{0}}{L_x} \right) \sin \left( \frac{\pi y_{0}}{L_y} \right)  \sin \left( \frac{\pi z_{0}}{L_z} \right)  \sin \left( 2 \pi t \right)\\
z(t)=z_{0}+A_z \sin \left( \frac{\pi x_{0}}{L_x} \right) \sin \left( \frac{\pi y_{0}}{L_y} \right)  \sin \left( \frac{\pi z_{0}}{L_z} \right)  \sin \left( 2 \pi t \right)
\end{array}
\right.
\label{eq_3D_pert_1}
\end{equation}
\item[Case 2 :] 2D perturbation of the mesh with a non linear motion; however the time-average volume swept by a face, $\hat{G}_{m,0} = 0$ in (\ref{eq_DFT_Gm}). For any cell the projection of the motion along a plane $z=constant$ is shown in Figure \ref{fig_motion_case2}  :
\begin{equation}
\left\{
\begin{array}{l}
\alpha(t) = \alpha_0 \sin(2\pi t) \\
x(t)=x_{0}+y_{0}\cos (\frac{\pi}{2}-\alpha(t))\\
y(t)=y_{0}\sin (\frac{\pi}{2}-\alpha(t))\\
z(t)=z_{0}
\end{array}
\right.
\label{eq_3D_pert_2}
\end{equation}
\item[Case 3 :] 2D perturbation of the mesh with a non linear motion and $\hat{G}_{m,0} \neq 0$ in (\ref{eq_DFT_Gm}), the deformation is prescribed only for the interior grid points while the boundary points are fixed. The projection of the motion along a plane $z = constant$ is identical to the movement of the 3rd node on the Figure \ref{fig_non_periodic_OmegaM_move} presented in section \ref{subsec_Tardiff_approach} :
\begin{equation}
\left\{
\begin{array}{l}
\alpha(t) = 2 \pi t\\
x(t)=x_{0}+R(1-\cos (\alpha(t))\\
y(t)=y_{0}+R \sin(\alpha(t))\\
z(t)=z_{0}
\end{array}
\right.
\label{eq_3D_pert_3}
\end{equation}
\end{description}

\subsubsection{With RBF}
Two test cases are performed by deforming the mesh through the RBF. The analytic functions employed for the RBF motions are as follows :
\begin{description}
\item[Case 4 :] 3D perturbation of the mesh using the RBF, with each point having its own linear motion (amplitude and direction) :
\begin{equation}
\left\{
\begin{array}{c}
s_x(t)=r_x (x_{0,r},y_{0,r},z_{0,r})  \sin(2\pi y_{0,r})  \sin(2\pi z_{0,r}) \sin(2 \pi t)\\
s_y(t)=r_y (x_{0,r},y_{0,r},z_{0,r})  \sin(2\pi x_{0,r})  \sin(2\pi z_{0,r}) \sin(2 \pi t)\\
s_z(t)=r_z (x_{0,r},y_{0,r},z_{0,r})  \sin(2\pi y_{0,r})  \sin(2\pi z_{0,r}) \sin(2 \pi t)\\
\text{where }r_x (x_{0,r},y_{0,r},z_{0,r}); r_y (x_{0,r},y_{0,r},z_{0,r})\text{ and } \\
r_z (x_{0,r},y_{0,r},z_{0,r}) \text{ are randomly generated }
\end{array}
\right.
\label{eq_3D_pert_5}
\end{equation}
\item[Case 5 :] simulation of a sinusoidal pitching motion :
\begin{equation}
\left\{
\begin{array}{l}
\alpha(t) = \alpha_0 \cos(2\pi t) \\
x_p = 0.621 L_x\\ 
s_x(t)=(x_{0,r}-x_p)[\cos (\alpha(t))-1] + y_{0,r}\sin (\alpha(t))\\
s_y(t)=-(x_{0,r}-x_p)\sin (\alpha(t)) + y_{0,r}[\cos (\alpha(t))-1] \\
s_z(t)=z_{0,r}
\end{array}
\right.
\label{eq_3D_pert_8}
\end{equation}
\end{description}

\begin{figure}[!htbp]
\centering
	\begin{subfigure}[b]{0.48\textwidth}
	\includegraphics[width=7.5cm]{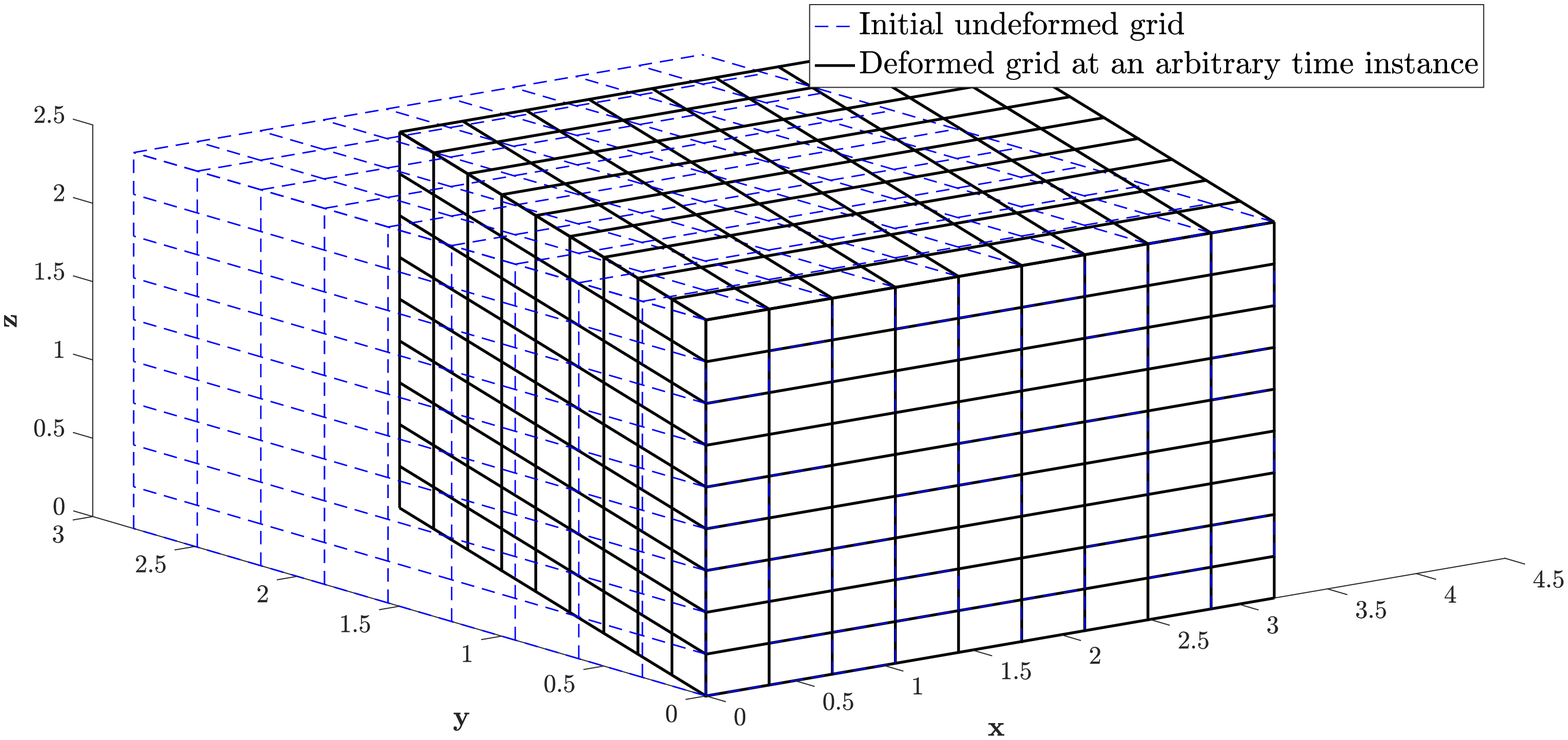}
	\caption{Case 2}
	\end{subfigure}
	\begin{subfigure}[b]{0.48\textwidth}
	\includegraphics[width=7.5cm]{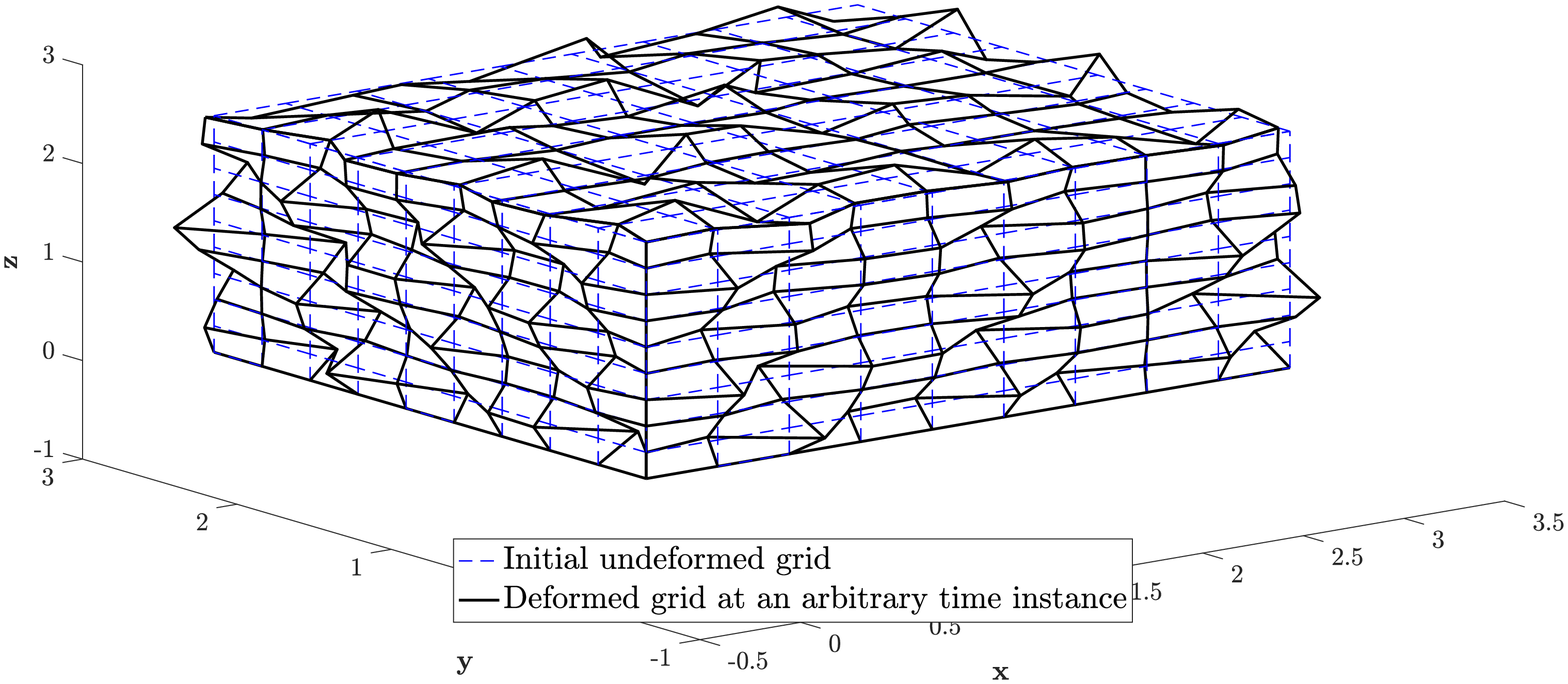}
	\caption{Case 4}
	\end{subfigure}
	
	\begin{subfigure}[b]{0.48\textwidth}
	\includegraphics[width=7.5cm]{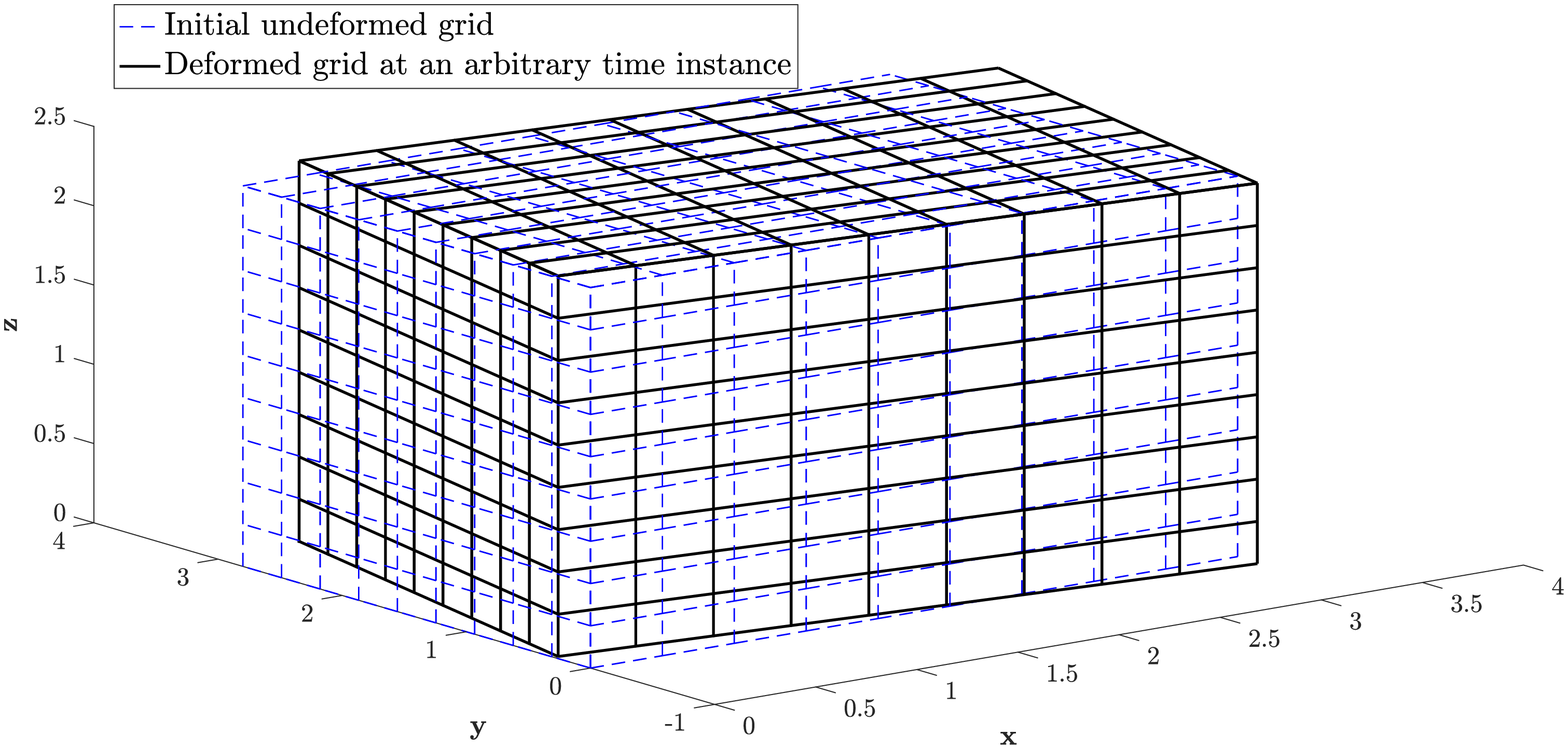}
	\caption{Case 5}
	\end{subfigure}
\caption{Mesh deformations of the exterior grid points for cases 2, 4 and 5 at an arbitrary chosen time step}
\label{fig_def_mesh}
\end{figure}

\subsection{Freestream preservation}
The results demonstrating uniform flow preservation are shown for all test cases in Figure \ref{fig_results_uniform_flow}. The evolution of the relative error defined by equation (\ref{eq_results_rel_error_W}) is presented as a function of the number of time steps $N_{ts}$.

\begin{figure}[!htbp]
\centering
	\begin{subfigure}[b]{0.48\textwidth}
		\centering
		\includegraphics[width=7.5cm]{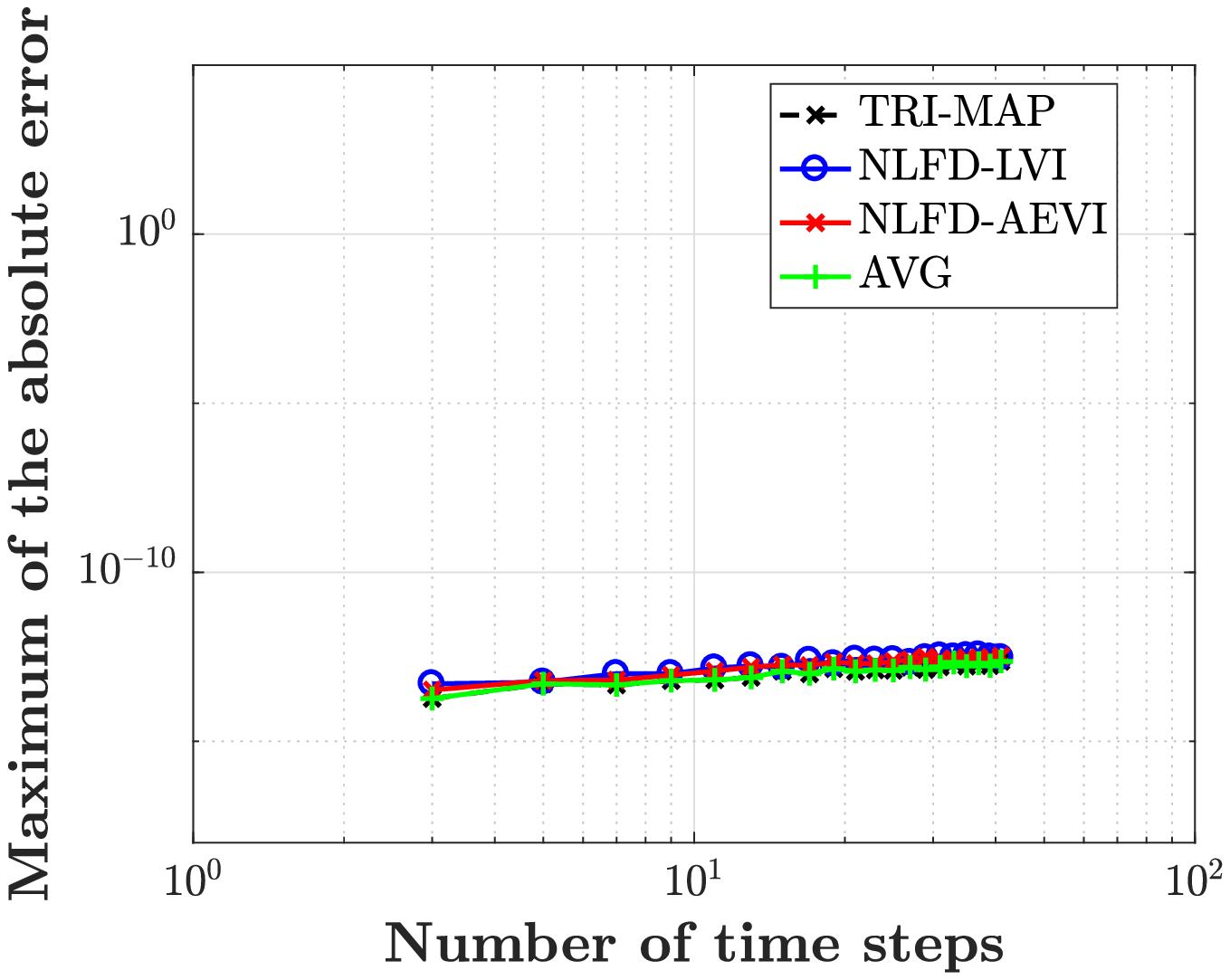}
		\caption{Case 1}
	\end{subfigure}
	\begin{subfigure}[b]{0.48\textwidth}
		\centering
		\includegraphics[width=7.5cm]{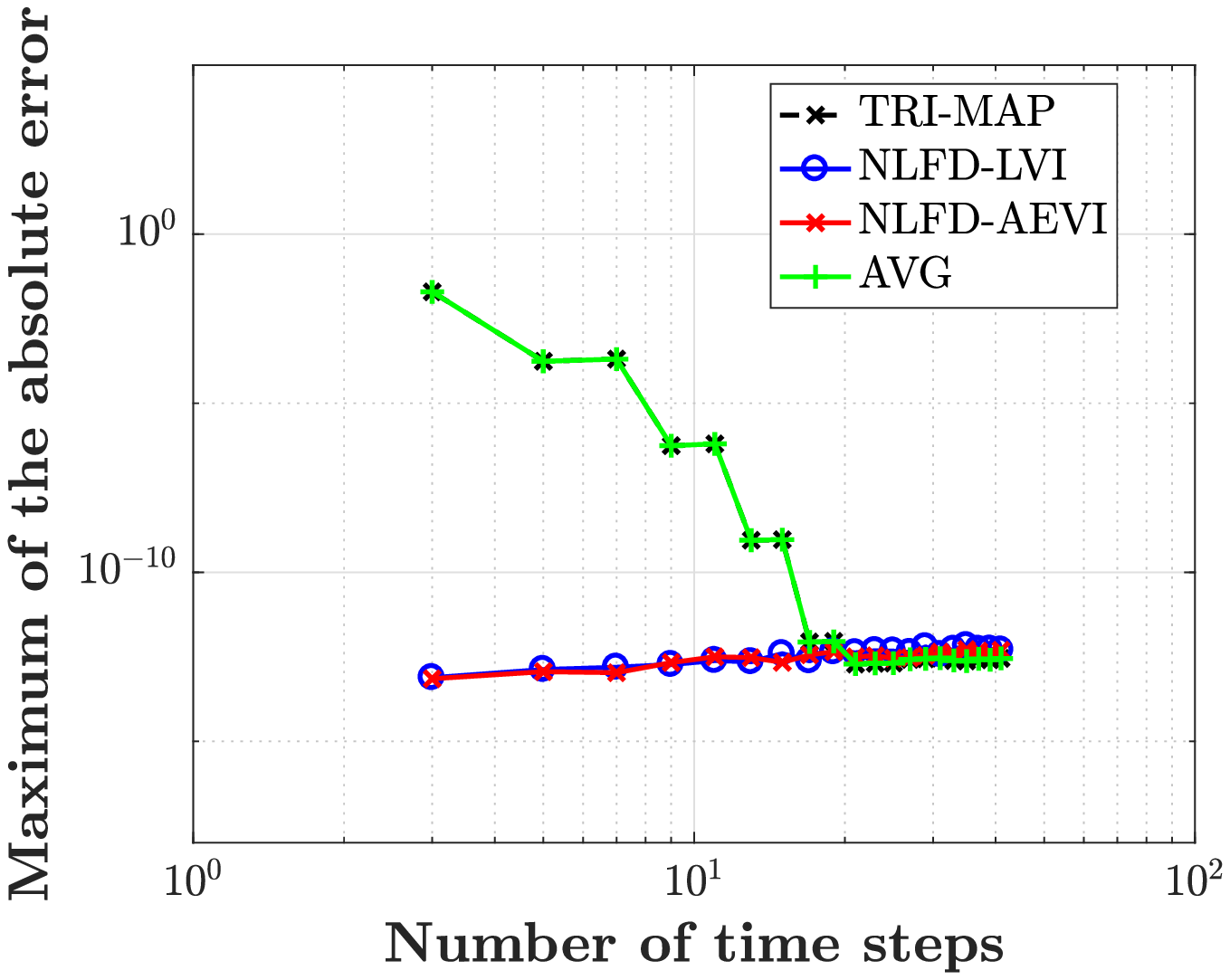}
		\caption{Case 2}
	\end{subfigure}
	
	\begin{subfigure}[b]{0.48\textwidth}
		\centering
		\includegraphics[width=7.5cm]{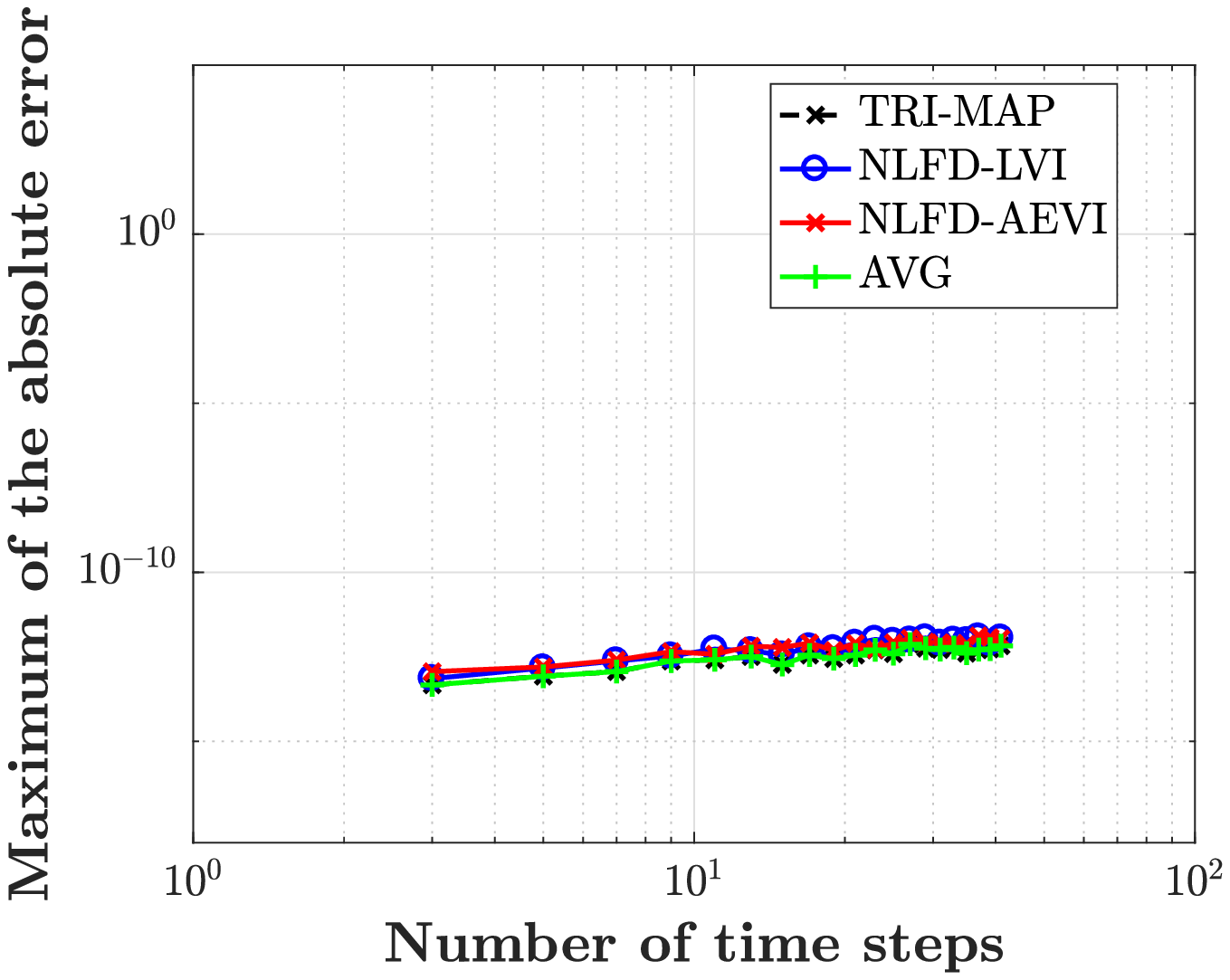}
		\caption{Case 3}
	\end{subfigure}
	\begin{subfigure}[b]{0.48\textwidth}
		\centering
		\includegraphics[width=7.5cm]{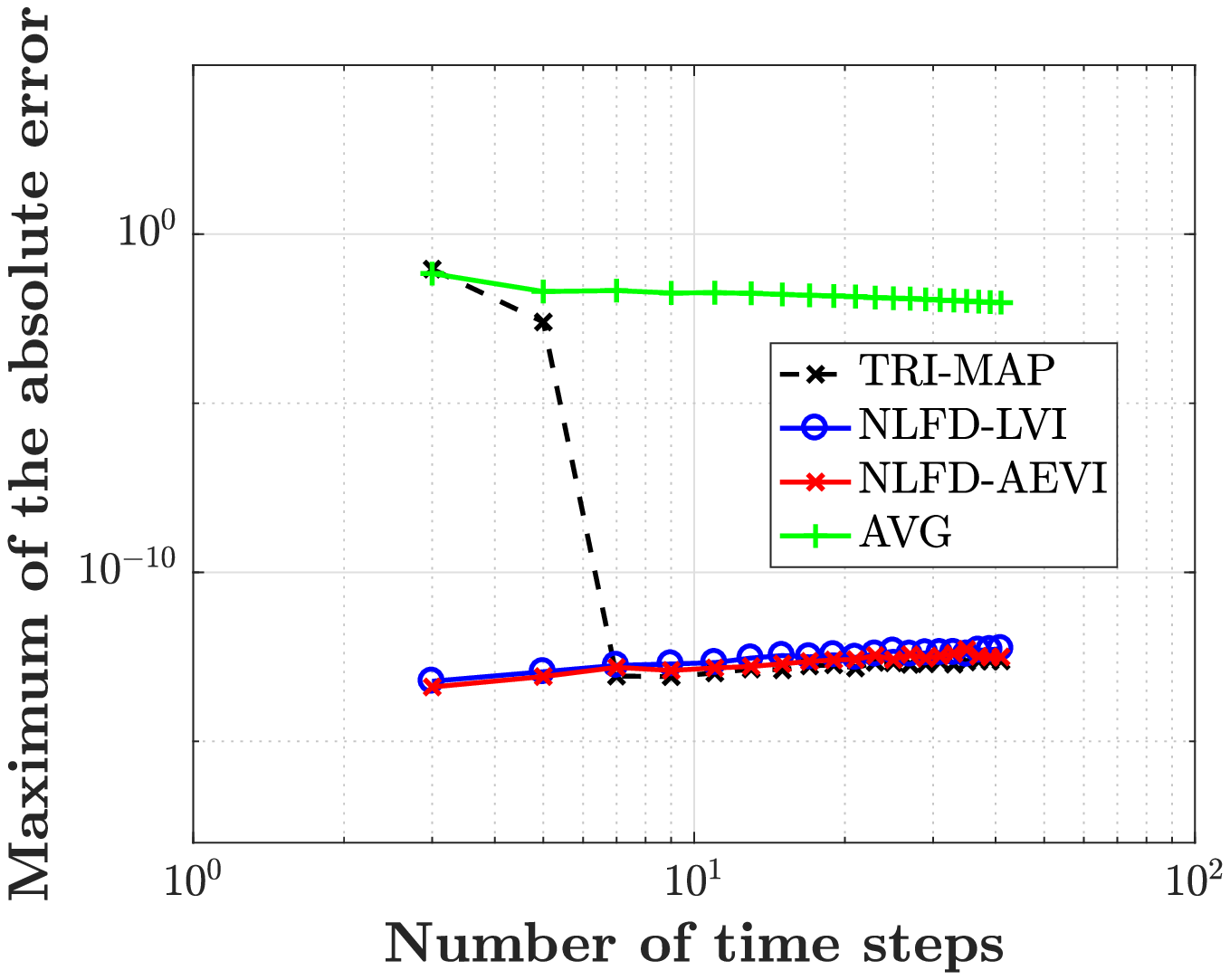}
		\caption{Case 4}
	\end{subfigure}
	
	\begin{subfigure}[b]{0.48\textwidth}
		\centering
		\includegraphics[width=7.5cm]{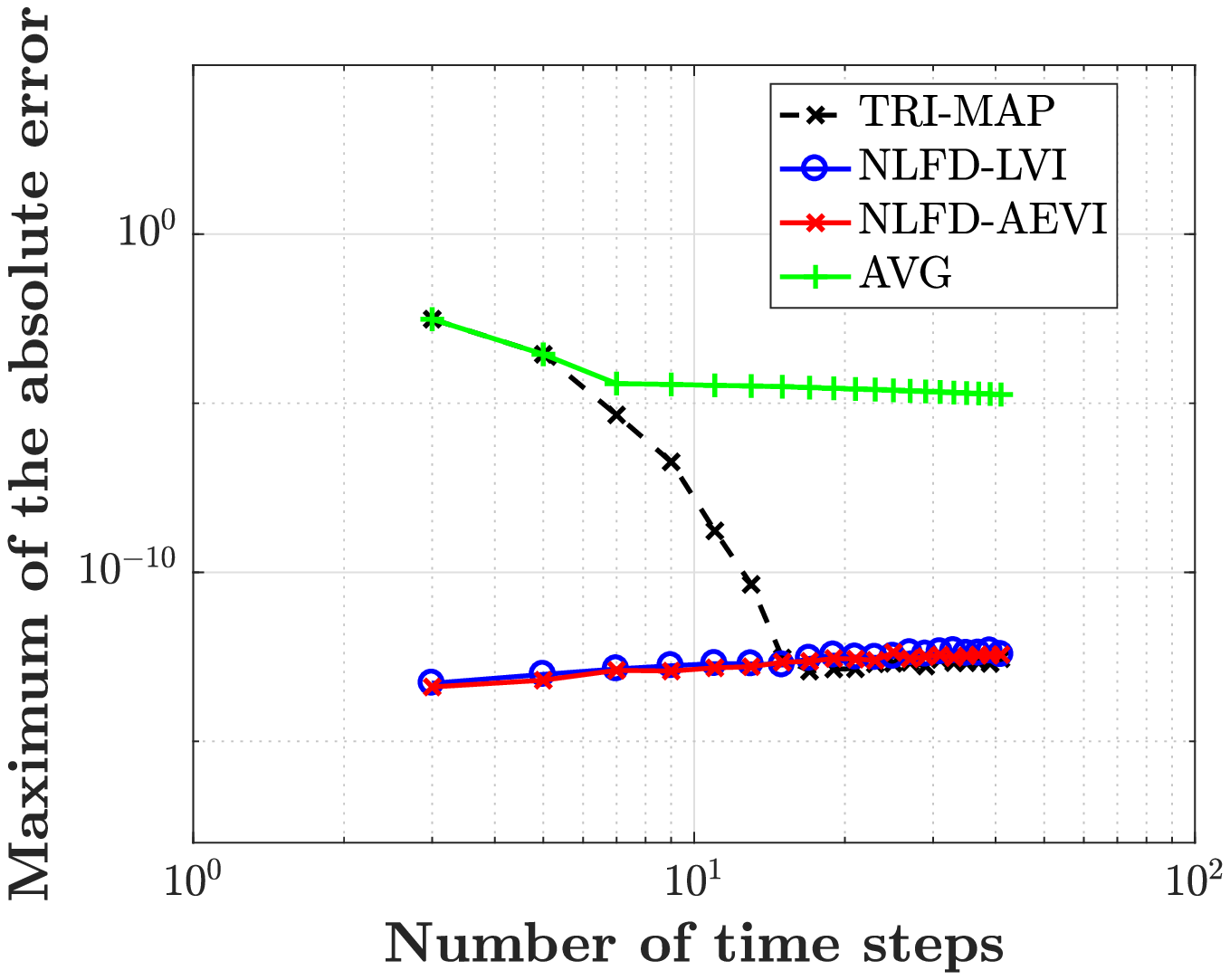}
		\caption{Case 5}
	\end{subfigure}
\caption{Relative error regarding the uniform flow preservation for each test case}
\label{fig_results_uniform_flow}
\end{figure}

The results show that the two methods employing the IFMV deduced from the Fourier discretization preserve uniform flow, while the approximation derived from the AVG yields the least accurate results. This is consistent since for both methods NLFD-LVI and NLFD-AEVI, despite different definitions of the volumetric increments, they still ensure that the sum of the temporal derivative of the volumetric increments is equal to the temporal derivative of the cell volume evaluated in the frequency domain (equation~(\ref{eq_dOmega_sum_dOmegam})).

It is also observed that using the (TRI-MAP) integrated face mesh velocities preserves uniform flow and thus satisfies the GCL given a sufficient number of harmonics (see cases 2, 4 and 5) which is expected. Its rate of convergence should be exactly the same as the rate of convergence of the time derivative of the cell volume in the Fourier space. This is verified in the next section.   

\subsection{Comparison of the integrated face mesh velocities to the reference value}
The results are shown on Figures \ref{fig_results_comparison_case_1} through \ref{fig_results_comparison_case_5}. It is important to note that for all figures, the graph (a) refers to equation (\ref{eq_comparison_abserr_1}) as the function of the number of time steps and is not the sum of the graphs from (b), (c) and (d) which refer to equation (\ref{eq_comparison_abserr_2}). The errors that appear on the $y$-axis of the figures are the max norm between the investigated approaches, both NLFD-based and AVG and the reference approach (TRI-MAP). 

\begin{figure}[!htbp]
\centering
	\begin{subfigure}[b]{0.48\textwidth}
		\centering
		\includegraphics[width=7.5cm]{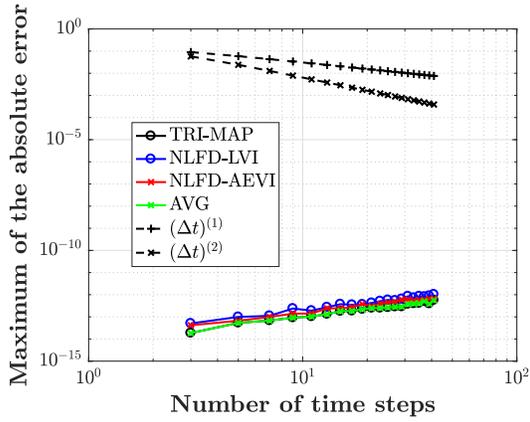}
		\caption{}
	\end{subfigure}
	\begin{subfigure}[b]{0.48\textwidth}
		\centering
		\includegraphics[width=7.5cm]{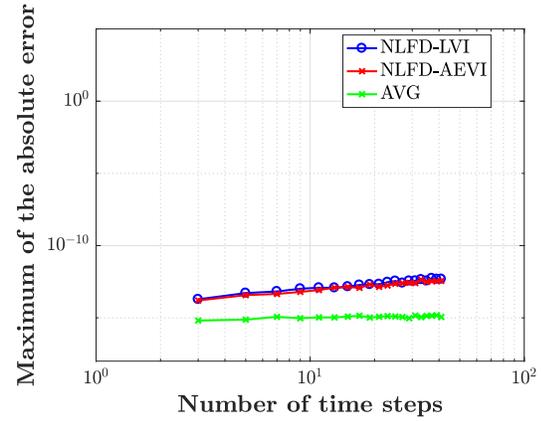}
		\caption{}
	\end{subfigure}
	
	\begin{subfigure}[b]{0.48\textwidth}
		\centering
		\includegraphics[width=7.5cm]{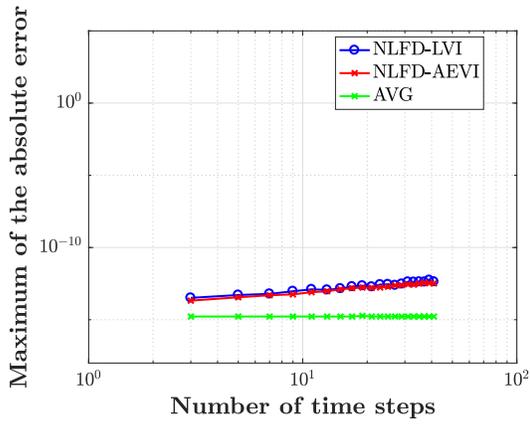}
		\caption{}
	\end{subfigure}
	\begin{subfigure}[b]{0.48\textwidth}
		\centering
		\includegraphics[width=7.5cm]{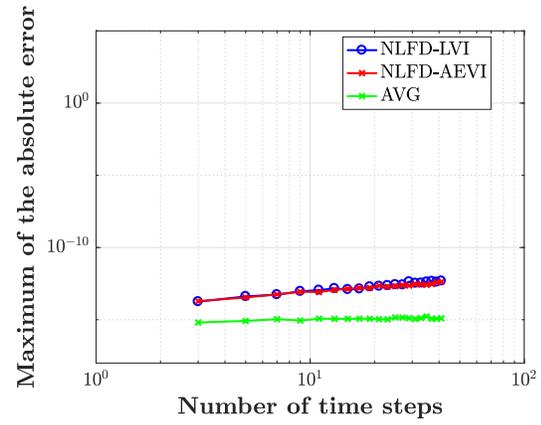}
		\caption{}
	\end{subfigure}		
\caption{Case 1 : (a) Comparison of the sum of the integrated face mesh velocities to the NLFD time derivative of the volume (b) Comparison of the individual integrated face mesh velocity to the reference values (TRI-MAP) in the $x$ direction (c) in the $y$ direction (d) in the $z$ direction}	
\label{fig_results_comparison_case_1}
\end{figure}

Regarding the comparison of the {\it sum of the} integrated face mesh velocities (IFMV) to the NLFD temporal derivative of the volume from Figures~\ref{fig_results_comparison_case_1}(a) through Figure~\ref{fig_results_comparison_case_5}(a), the results show that the sum of the IFMV computed with the methods NLFD-LVI, NLFD-AEVI and TRI-MAP converge to the expected value for all cases while the AVG method provides the correct value only for cases 1 through 3 and yields a constant absolute error above $10^{-5}$ for cases 4 and 5. Recall that the maximum error in the sum of the IFMV is a measure of the level to which GCL is satisfied as given in the semi-discrete GCL equation (\ref{eq_semi_discrete_GCL_2}). Hence the NLFD-based approaches prove to satisfy the GCL for all considered grid deformation and for any number of harmonics which is expected by design. The reference approach (TRI-MAP) satisfies this requirement exactly for linear deformation cases as shown for Cases 1 (Figure~\ref{fig_results_comparison_case_1}(a)) and 4 (Figure~\ref{fig_results_comparison_case_4}(a)) for any number of harmonics and converged spectrally for nonlinear deformation cases (Cases 2, 3, and 5). The spectral rate of convergence is observed compared to the first-order backward finite-difference ($\Delta t^{(1)}$) and second-order centered finite-difference ($\Delta t^{(2)}$) approximating the time derivative of the cell volume. As expected, this rate of convergence is found to be similar for the preservation of uniform flow using the reference TRI-MAP method. However, the AVG approach is not designed to enforce the GCL, it is only an approximation based on the mesh velocities and face metrics and hence for the cases considered in this article, the method proved to ensure the GCL with an accuracy up to $10^{-5}$.

A comparison of the {\it individual} integrated face mesh velocities for each direction reveals the limits and provides interesting insights of the investigated approaches. Two primary observations can be made. First, the NLFD based approaches converge at most at second order as expected based on corollary \ref{th_error}, if the mesh deformation along the observed direction is nonlinear. For Cases 2, 3, and 5, the mesh deformation in both the $x$-and $y$-directions are nonlinear as shown in sub-figures (b) and (c) of Figures~\ref{fig_results_comparison_case_2}, ~\ref{fig_results_comparison_case_3}, and~\ref{fig_results_comparison_case_5}. One exception is the spectral rate of convergence for the $y$-direction in Case 2. These results can be explained by analyzing in details the mesh movement. Since the motion is in two dimensions, let us consider a constant $z$ plane, then the deformation of any cell can be represented as shown in Figure \ref{fig_motion_case2}.
\begin{figure}[!htbp]
\centering
	\begin{tikzpicture}[scale=1]

   		\draw[very thick,->] (0,0) -- (1,0) node[above]{$x$};
   		\draw[very thick,->] (0,0) -- (0,1) node[above,right]{$y$};
		\draw[dash dot] (-2,0)rectangle(9,6);
		\draw[very thick] (1.5,2)rectangle(5,5);	
		\fill[blue,opacity=0.6,domain=60:90] plot ({1.5+6*cos(\x)}, {-1+6*sin(\x)}) -- plot ({5+6*cos(90-30*((\x-60)/30))}, {-1+6*sin(90-30*((\x-60)/30))}) -- cycle;  		 
  		\fill[green,opacity=0.2,domain=60:90]	plot ({1.5+3*cos(\x)}, {-1+3*sin(\x)}) -- plot ({1.5+6*cos(90-30*((\x-60)/30))}, {-1+6*sin(90-30*((\x-60)/30))}) -- cycle;
   		\draw[blue,dash dot,domain=60:120,<->,>=latex] plot ({1.5+3*cos(\x)}, {-1+3*sin(\x)});
   		\draw[blue,dash dot,domain=60:120,<->,>=latex] plot ({5+3*cos(\x)}, {-1+3*sin(\x)});
   		\draw[blue,dash dot,domain=60:120,<->,>=latex] plot ({1.5+6*cos(\x)}, {-1+6*sin(\x)});
   		\draw[blue,dash dot,domain=60:120,<->,>=latex] plot ({5+6*cos(\x)}, {-1+6*sin(\x)});
		\draw[red,thick] (3,1.6)--(6.5,1.6)--(8,4.2)--(4.5,4.2)--cycle;  			   				
	\end{tikzpicture}
\caption{Two dimensional projection of the motion in case 2 for one cell : the exact volumetric increment in the $x$ direction is filled in clear green and in dark blue in the $y$ direction. The blue dashed dot arrows show the paths of the vertices.}
\label{fig_motion_case2}
\end{figure}
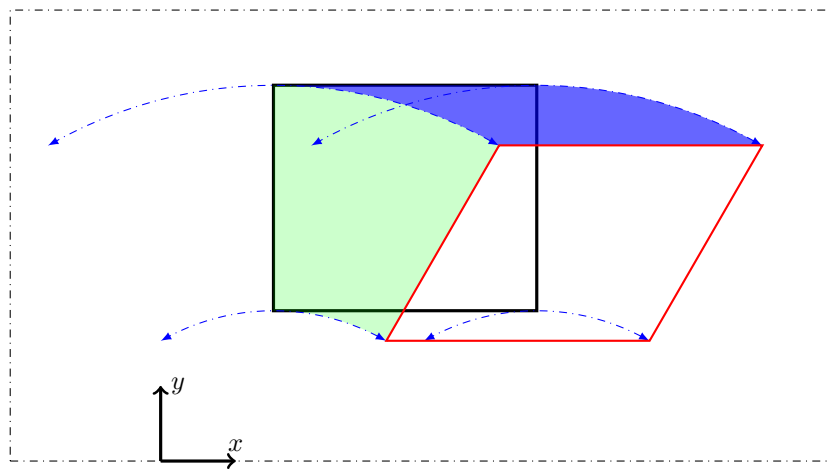
We observe that in the $y$-direction, the area swept by the faces can be exactly evaluated using a linear approximation of the curved boundaries shown in blue. Therefore in the $y$-direction, the volumetric increments are exactly computed and the {\it individual} IFMV are correctly computed using either the LVI or AEVI methods once the temporal derivative operator is converged in Fourier space. In the $x$-direction, a linear approximation is insufficient to compute exactly the volumetric increments thus the AEVI method converges at an order between one and two as stated in corollary \ref{th_error}.

Second, even if the numerical scheme enforces the GCL by preserving uniform flow, the employed method may not converge to the correct integrated face mesh velocities. The method based on the approximation of the exact volumetric increment (AEVI) is found to be converging toward the reference values at an order between one and two in the worst test cases considered here (4 \& 5). This is consistent with the derivation of the error from section \ref{subsubsec_prac_enf} and the resulting corollary (see corollary \ref{th_error}). The NLFD-LVI and AVG methods may present significant inaccuracies depending on the mesh deformation.

\begin{figure}[!htbp]
\centering
	\begin{subfigure}[b]{0.48\textwidth}
		\centering
		\includegraphics[width=7.5cm]{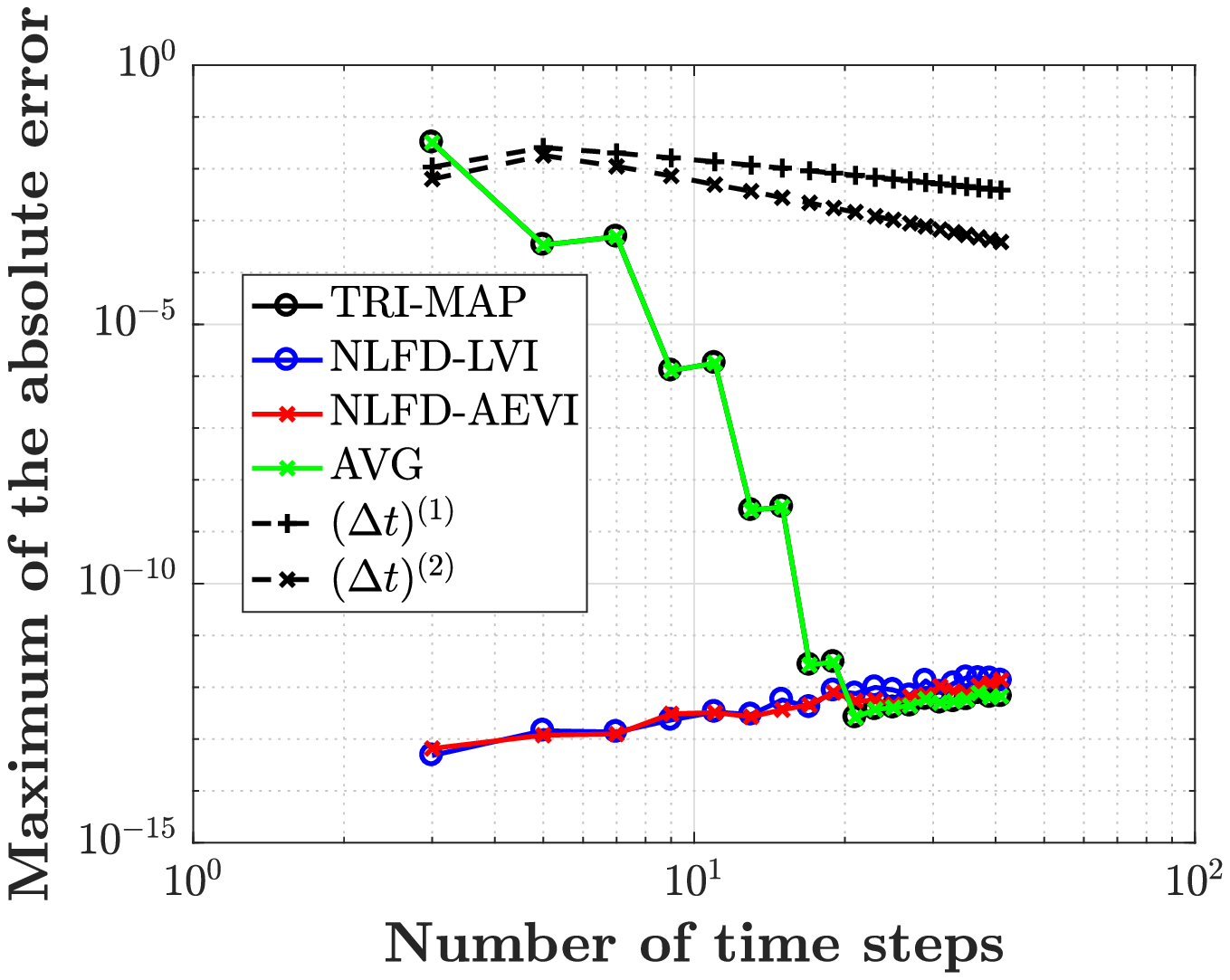}
		\caption{}
	\end{subfigure}
	\begin{subfigure}[b]{0.48\textwidth}
		\centering
		\includegraphics[width=7.5cm]{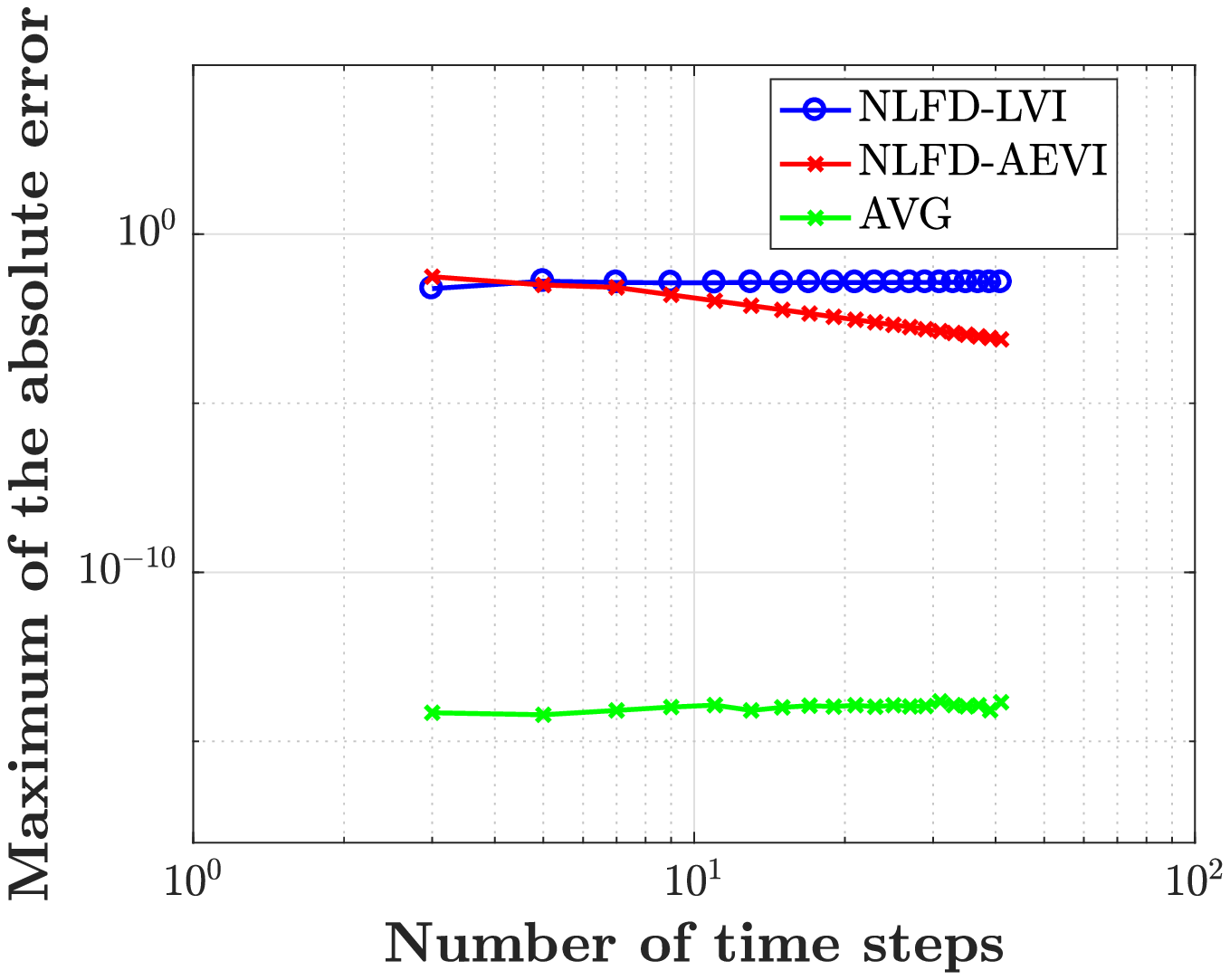}
		\caption{}
	\end{subfigure}
	
	\begin{subfigure}[b]{0.48\textwidth}
		\centering
		\includegraphics[width=7.5cm]{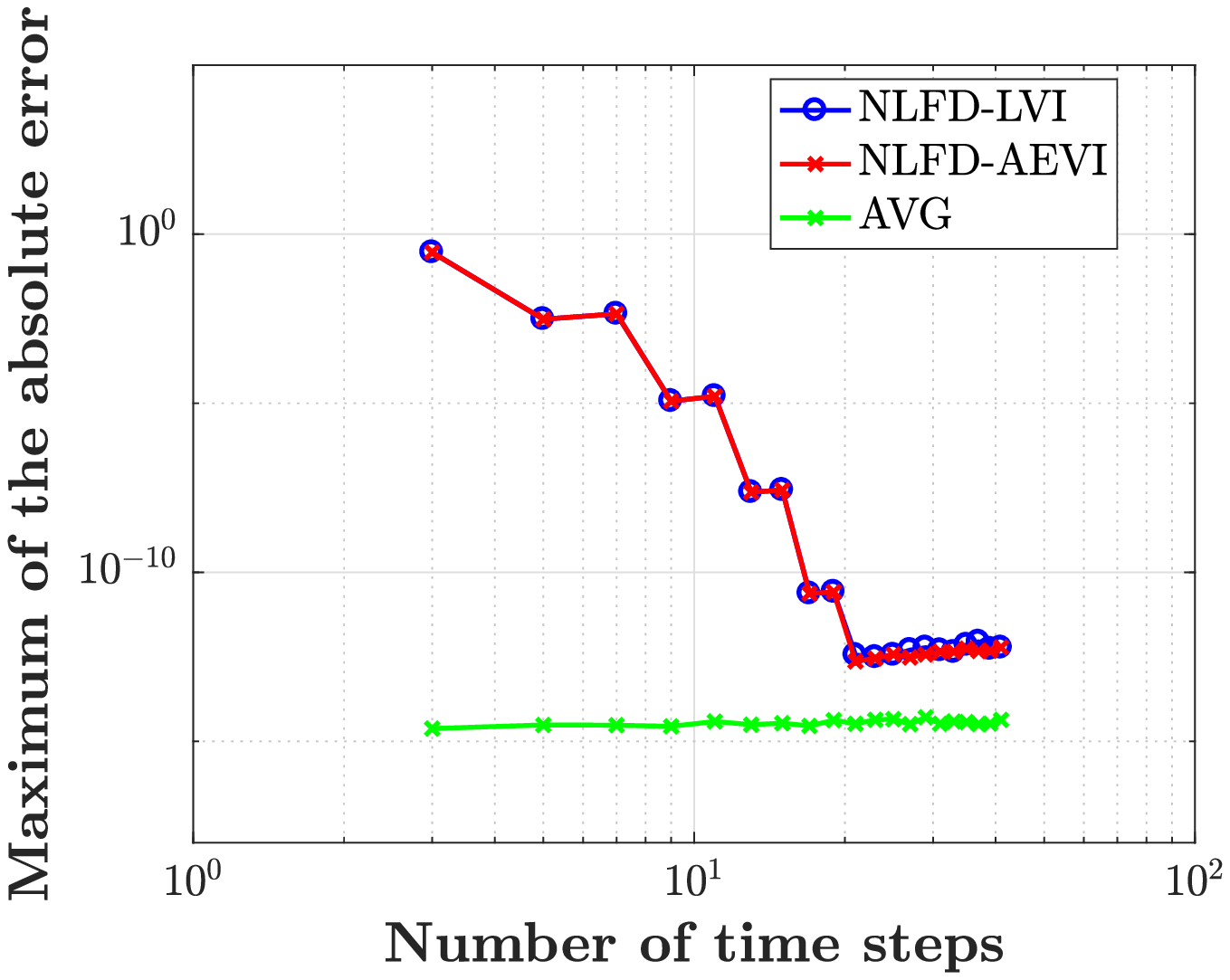}
		\caption{}
	\end{subfigure}
	\begin{subfigure}[b]{0.48\textwidth}
		\centering
		\includegraphics[width=7.5cm]{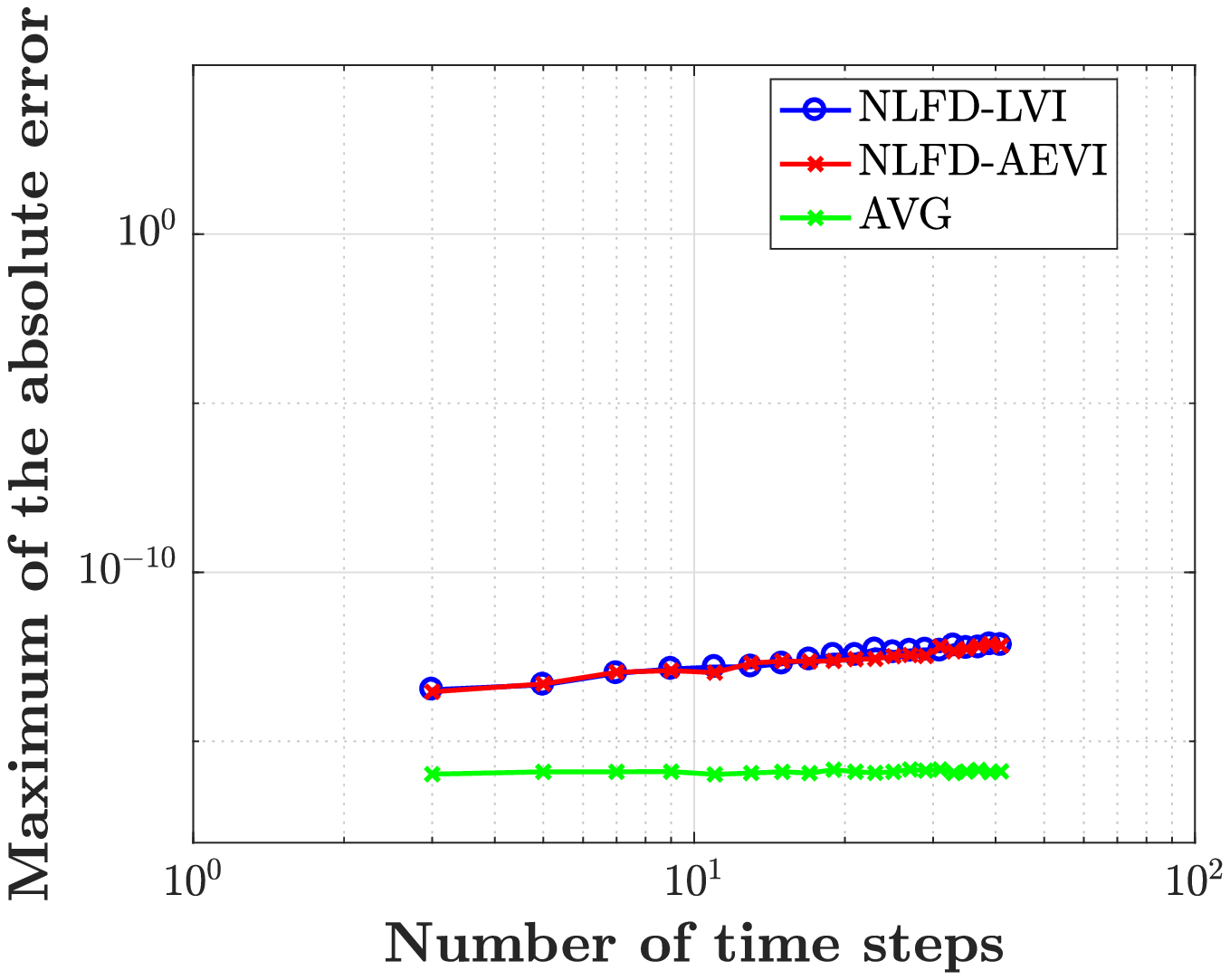}
		\caption{}
	\end{subfigure}		
\caption{Case 2 : (a) Comparison of the sum of the integrated face mesh velocities to the NLFD time derivative of the volume (b) Comparison of the individual integrated face mesh velocity to the values (TRI-MAP) in the $x$ direction (c) in the $y$ direction (d) in the $z$ direction}	
\label{fig_results_comparison_case_2}
\end{figure}

\begin{figure}[!htbp]
\centering
	\begin{subfigure}[b]{0.48\textwidth}
		\centering
		\includegraphics[width=7.5cm]{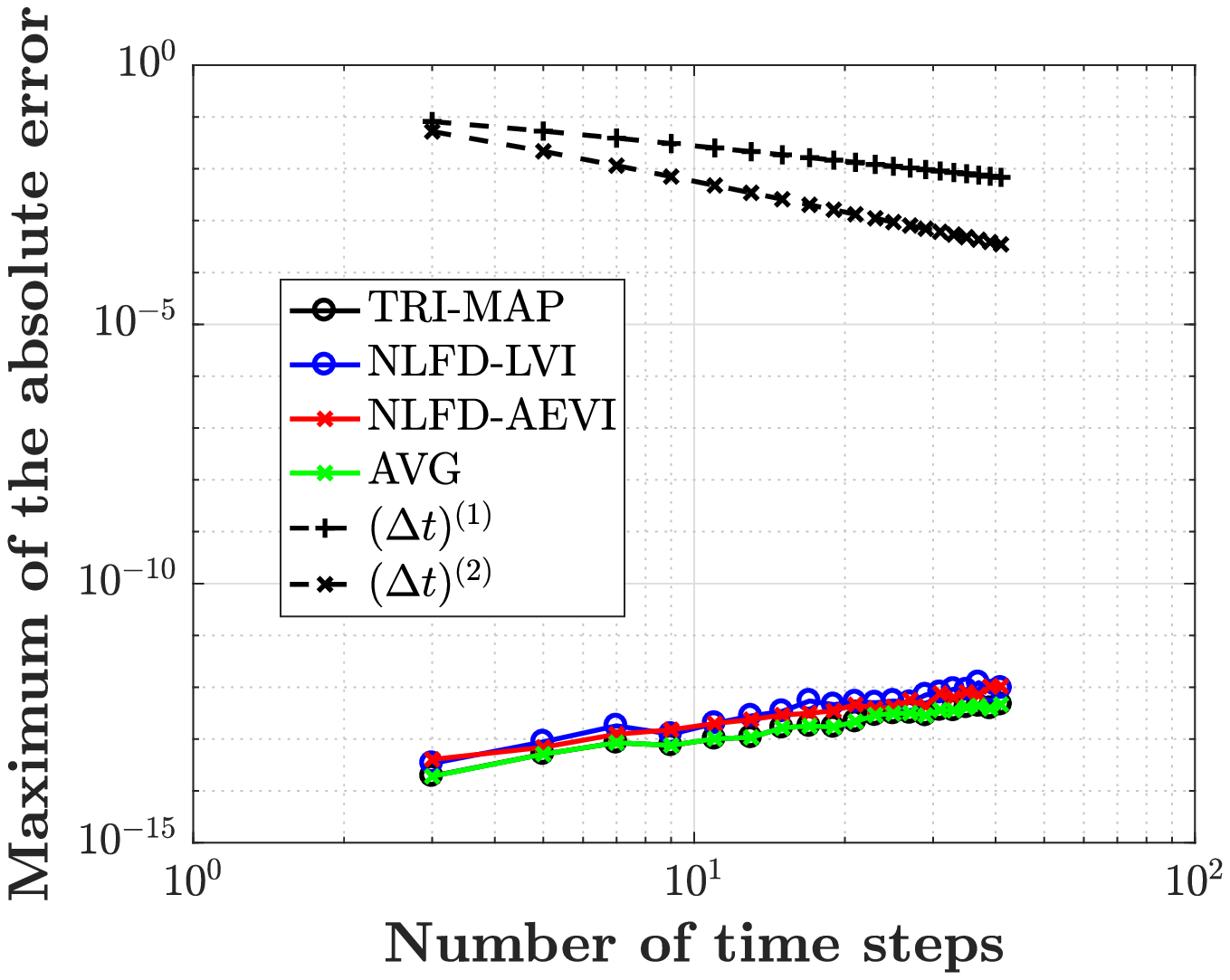}
		\caption{}
	\end{subfigure}
	\begin{subfigure}[b]{0.48\textwidth}
		\centering
		\includegraphics[width=7.5cm]{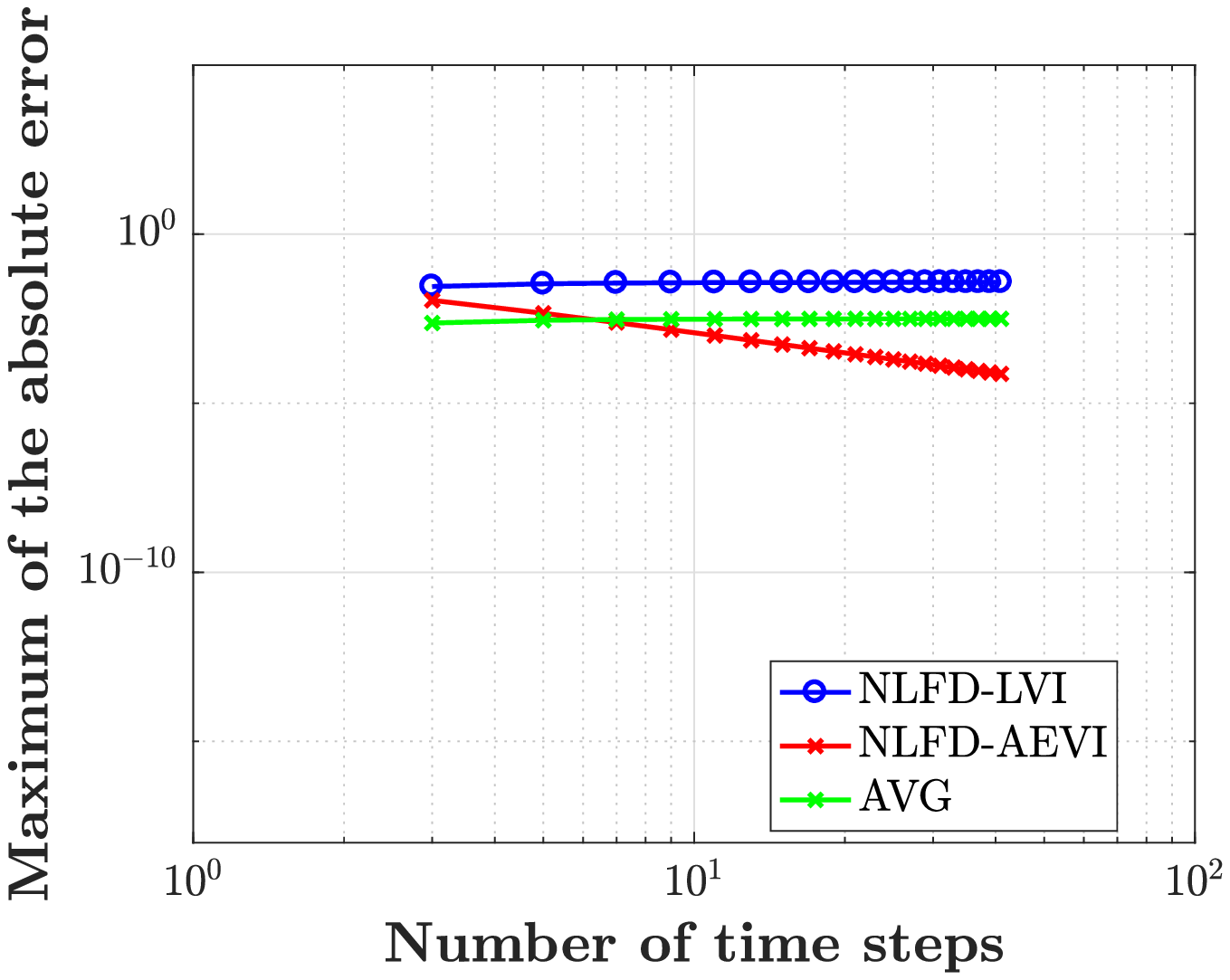}
		\caption{}
	\end{subfigure}
	
	\begin{subfigure}[b]{0.48\textwidth}
		\centering
		\includegraphics[width=7.5cm]{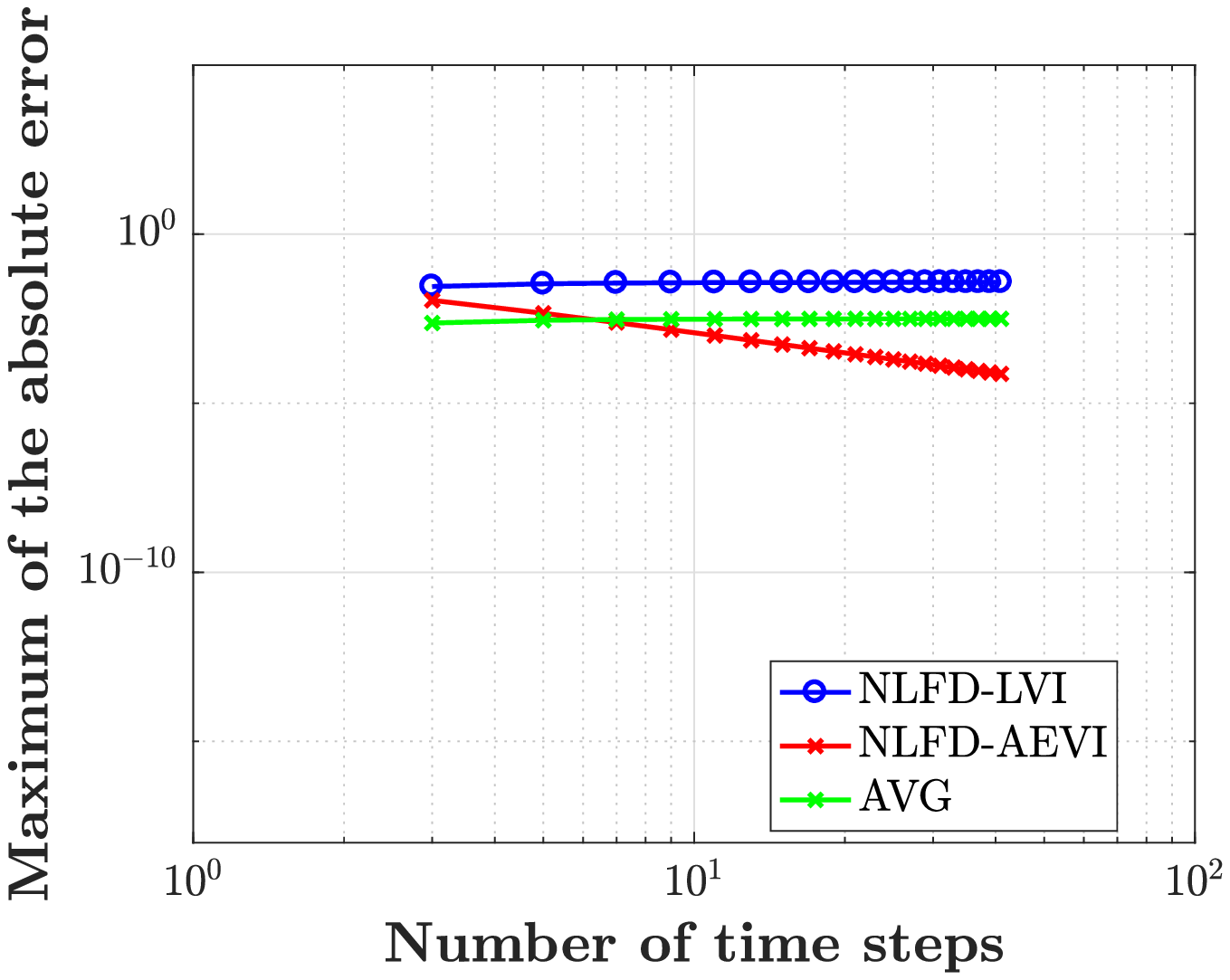}
		\caption{}
	\end{subfigure}
	\begin{subfigure}[b]{0.48\textwidth}
		\centering
		\includegraphics[width=7.5cm]{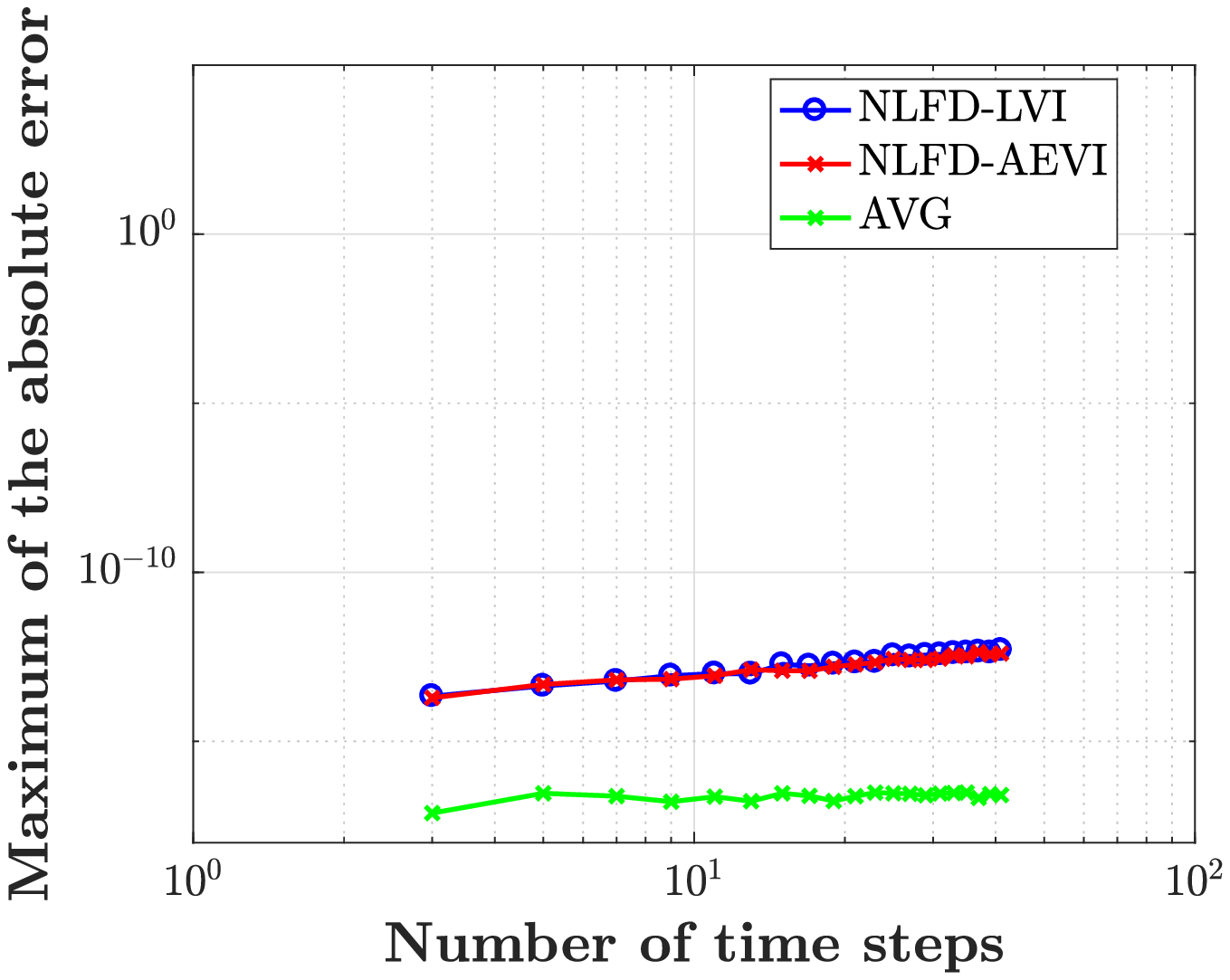}
		\caption{}
	\end{subfigure}		
\caption{Case 3 : (a) Comparison of the sum of the integrated face mesh velocities to the NLFD time derivative of the volume (b) Comparison of the individual integrated face mesh velocity to the values (TRI-MAP) in the $x$ direction (c) in the $y$ direction (d) in the $z$ direction}	
\label{fig_results_comparison_case_3}
\end{figure}

\begin{figure}[!htbp]
\centering
	\begin{subfigure}[b]{0.48\textwidth}
		\centering
		\includegraphics[width=7.5cm]{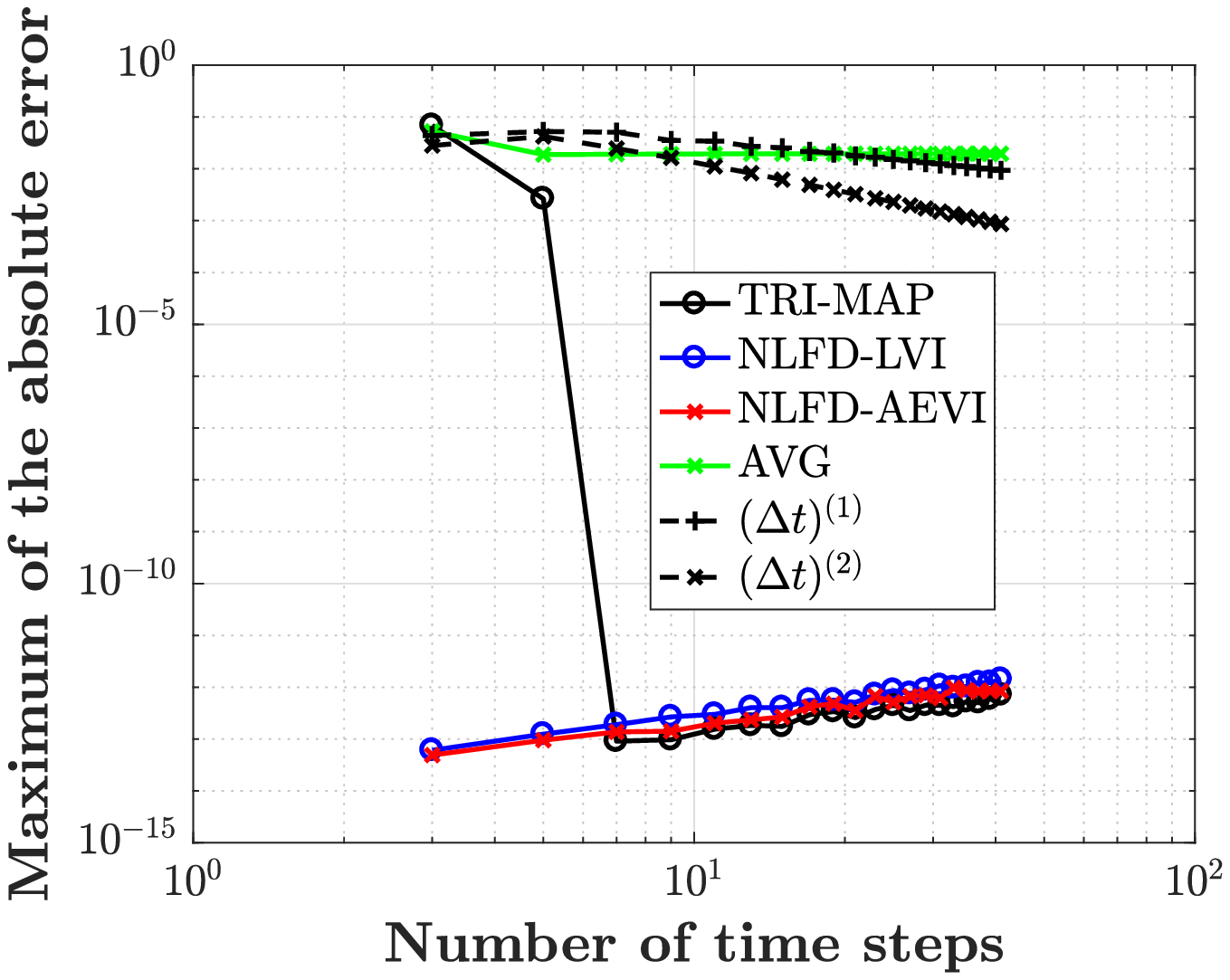}
		\caption{}
	\end{subfigure}
	\begin{subfigure}[b]{0.48\textwidth}
		\centering
		\includegraphics[width=7.5cm]{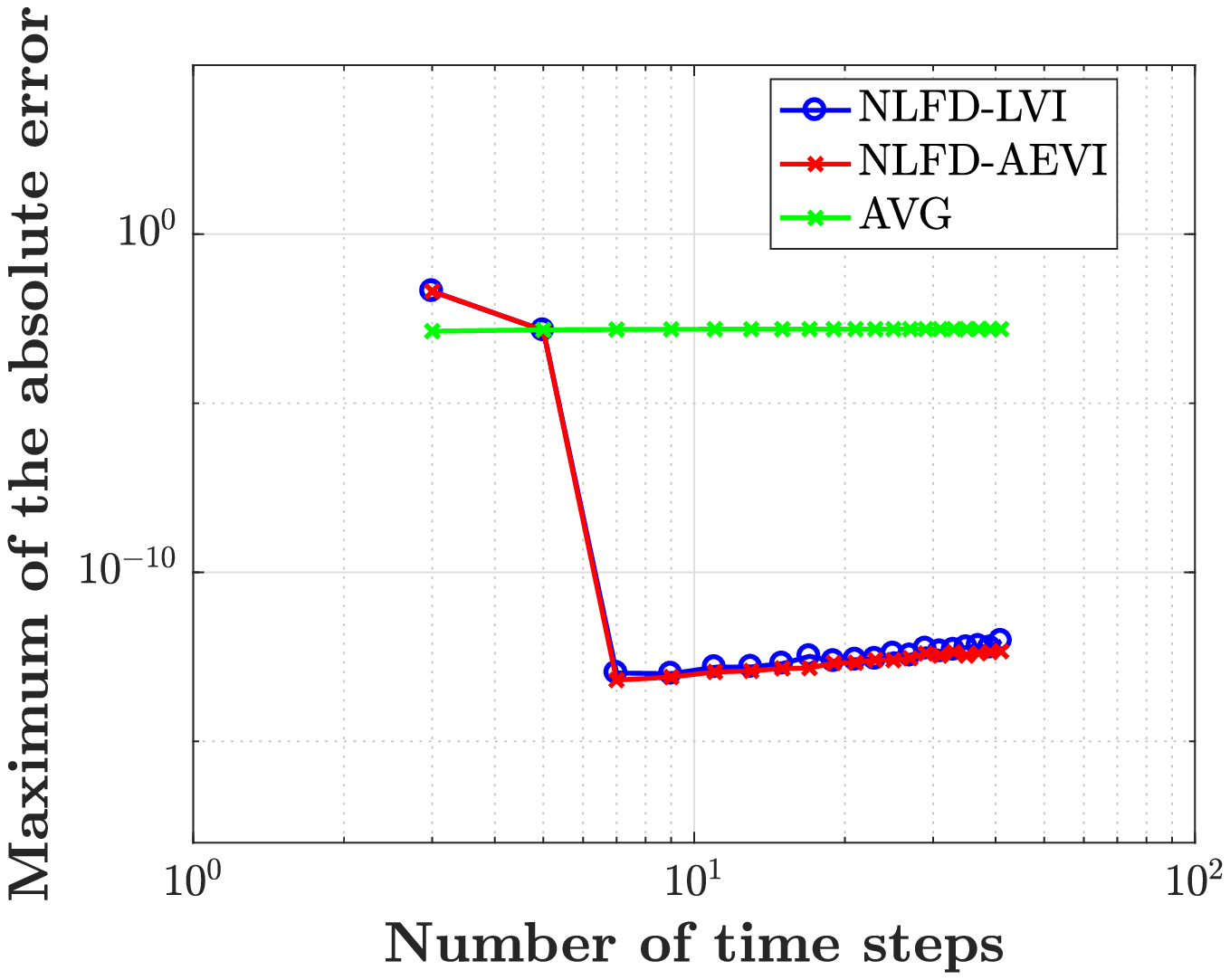}
		\caption{}
	\end{subfigure}
	
	\begin{subfigure}[b]{0.48\textwidth}
		\centering
		\includegraphics[width=7.5cm]{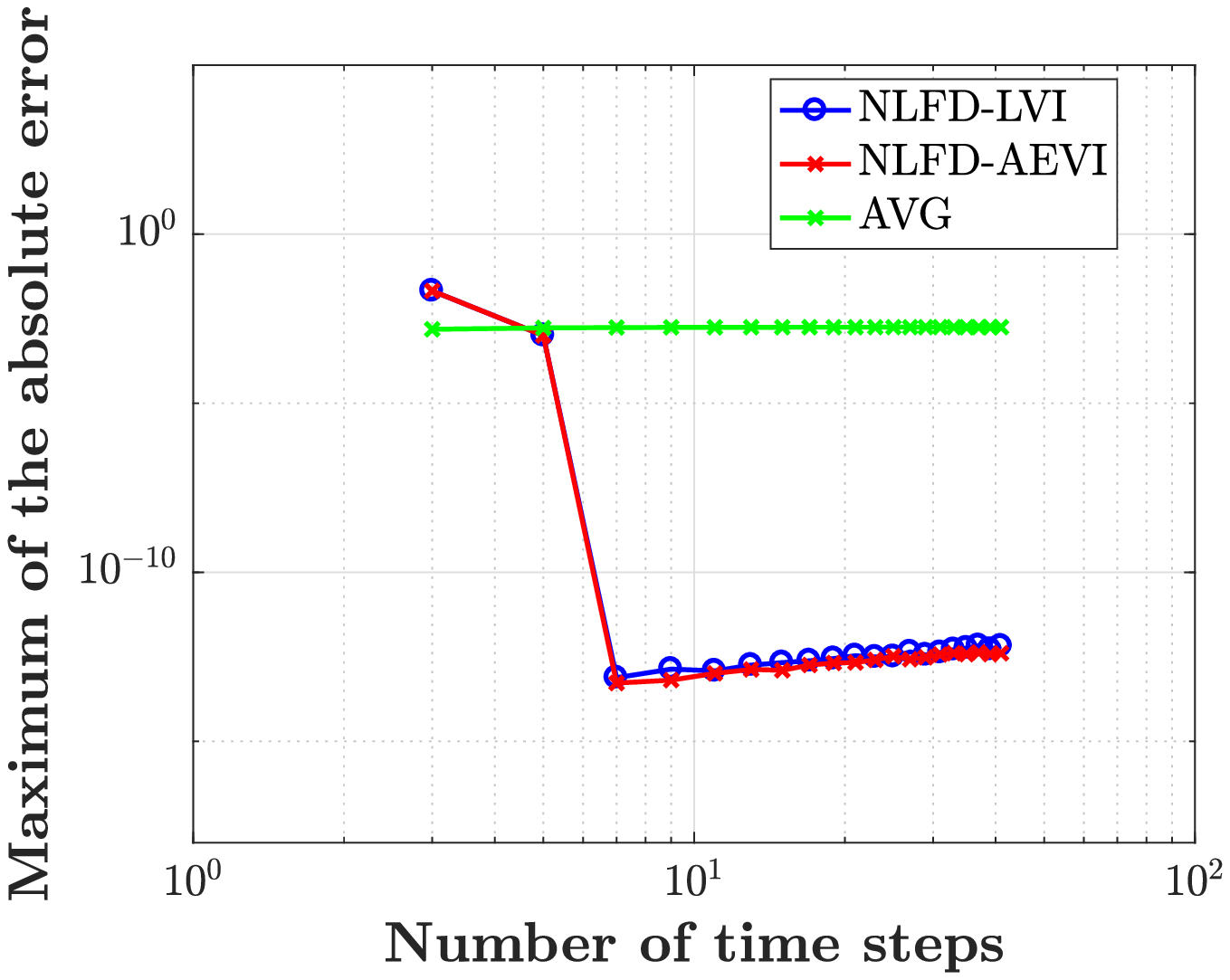}
		\caption{}
	\end{subfigure}
	\begin{subfigure}[b]{0.48\textwidth}
		\centering
		\includegraphics[width=7.5cm]{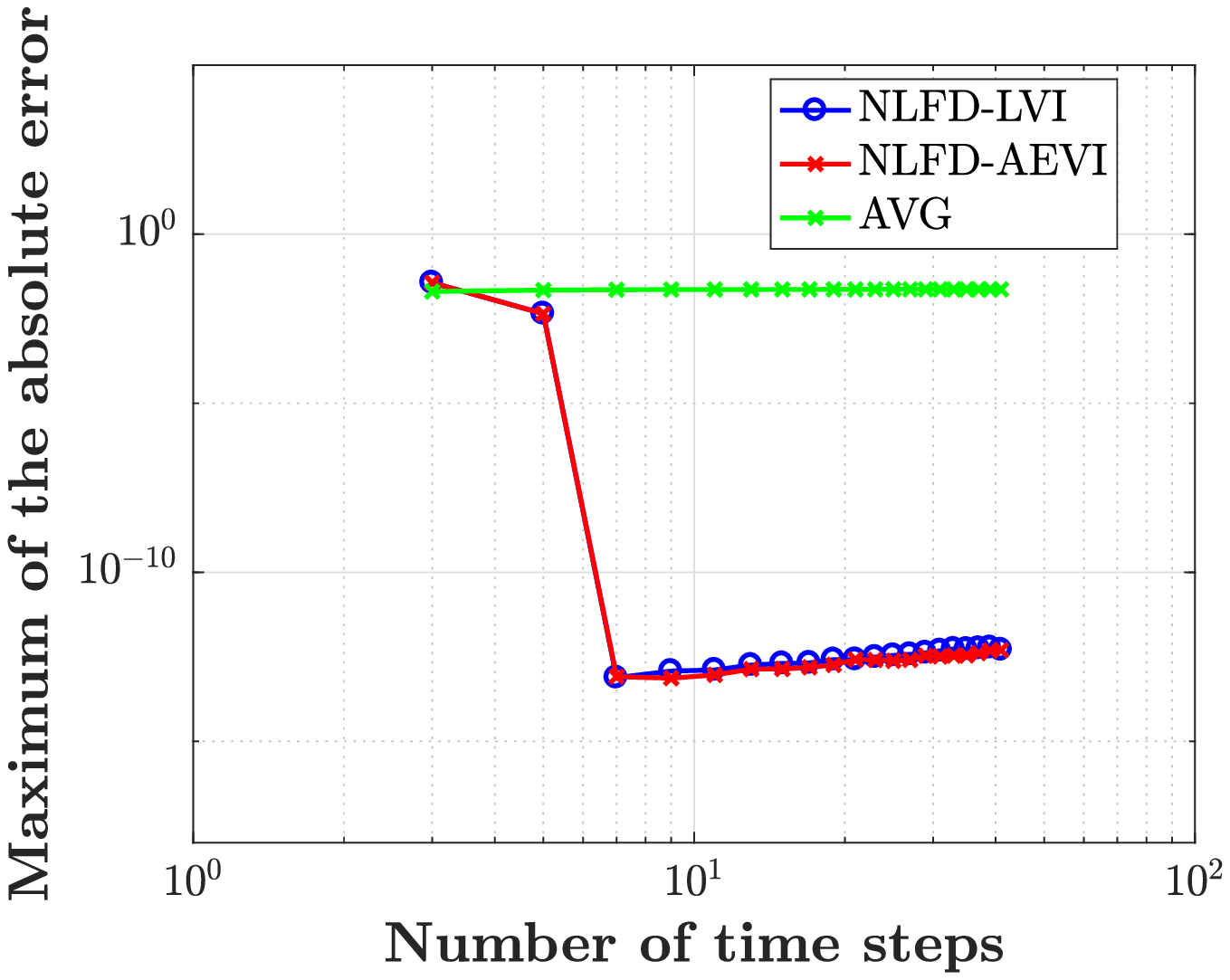}
		\caption{}
	\end{subfigure}		
\caption{Case 4 : (a) Comparison of the sum of the integrated face mesh velocities to the NLFD time derivative of the volume (b) Comparison of the individual integrated face mesh velocity to the values (TRI-MAP) in the $x$ direction (c) in the $y$ direction (d) in the $z$ direction}	
\label{fig_results_comparison_case_4}
\end{figure}

\begin{figure}[!htbp]
\centering
	\begin{subfigure}[b]{0.48\textwidth}
		\centering
		\includegraphics[width=7.5cm]{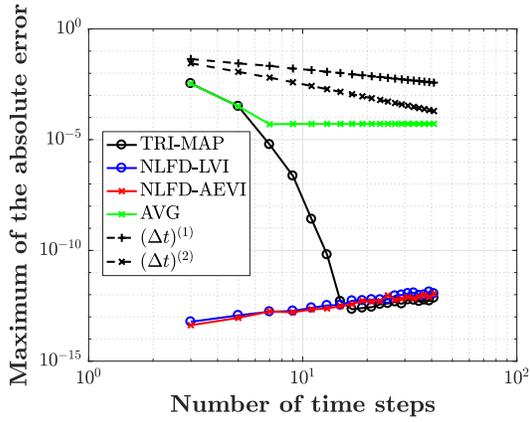}
		\caption{}
	\end{subfigure}
	\begin{subfigure}[b]{0.48\textwidth}
		\centering
		\includegraphics[width=7.5cm]{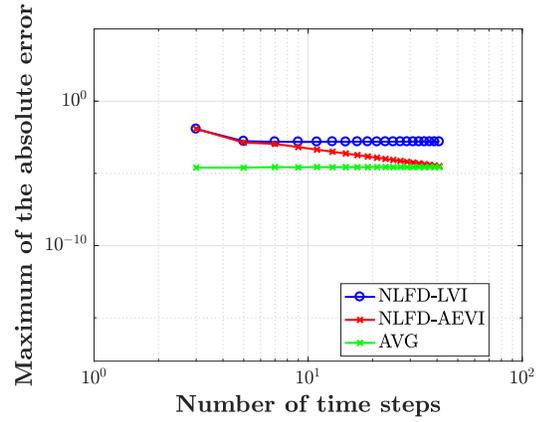}
		\caption{}
	\end{subfigure}
	
	\begin{subfigure}[b]{0.48\textwidth}
		\centering
		\includegraphics[width=7.5cm]{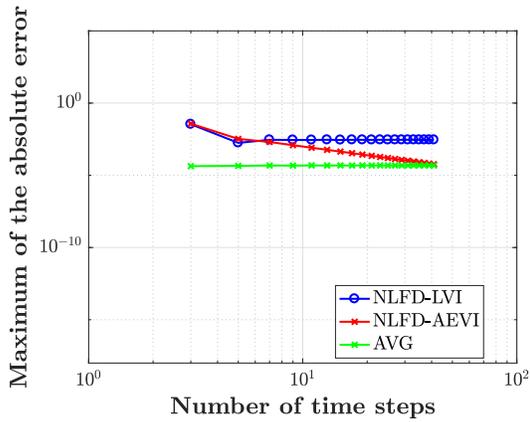}
		\caption{}
	\end{subfigure}
	\begin{subfigure}[b]{0.48\textwidth}
		\centering
		\includegraphics[width=7.5cm]{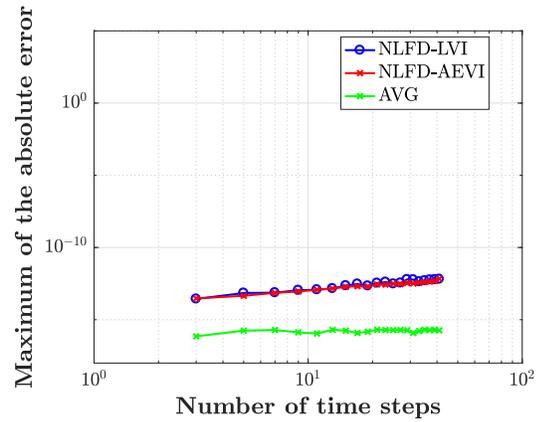}
		\caption{}
	\end{subfigure}		
\caption{Case 5 : (a) Comparison of the sum of the integrated face mesh velocities to the NLFD time derivative of the volume (b) Comparison of the individual integrated face mesh velocity to the values (TRI-MAP) in the $x$ direction (c) in the $y$ direction (d) in the $z$ direction}	
\label{fig_results_comparison_case_5}
\end{figure}

\subsection{Time-Spectral Method}
The numerical results for the Time-Spectral method are the same as that shown for NLFD-LVI and NLFD-AEVI depending on which approach is retained to compute the volumetric increments. For this reason, the graphs are not reproduced in this article. The comparisons and conclusions derived for the NLFD approach hold for the Time-Spectral method as well.

\section{Discussion and Conclusion}
\label{sec_con_disc}
The limits of the previous method of Tardiff et al.~\cite{Tardif2017} (NLFD-LVI) were clarified and demonstrated numerically and a modified approach (NLFD-AEVI) has been presented that ensures the satisfaction of the Geometric Conservation Law for a flow solver based on either the NLFD or Time-Spectral discretization of the ALE formulation of the Euler equations. The methods NLFD-AEVI and NLFD-LVI aim to satisfy the GCL by computing the integrated face mesh velocities according to the numerical discretization of the flow solver and take as input the face volumetric increments. The accuracy of the methods was shown to be highly dependent on the computation of the correct volumetric increments and in the worst cases considered converged at first-to-second-order for the NLFD-AEVI approach (corollary~\ref{th_error}) or zeroth-order for the NLFD-LVI procedure. The integrated face mesh velocities themselves may not converge to the correct value as demonstrated in our numerical test.  Although the approaches have been verified to preserve uniform flow for any number of harmonics; such a low order of accuracy defeats the purpose of spectral in time methods. Hence an alternate novel approach has been developed based on a trilinear mapping between the physical domain and the computational space which allows the evaluation of the exact cell volume and integrated face mesh velocities. The disadvantage of this method is that it is not consistent with the discretization of the flow solver, meaning that freestream preservation is not satisfied for any number of harmonics as it is with the modified approach, NLFD-AEVI. However such inconvenience is compensated by its spectral rate of convergence, which is sufficient to ensure the satisfaction of the GCL and preserve uniform flow. 

\section{Acknowledgment}
We would like to gratefully acknowledge the financial support of the Natural Sciences and Engineering Research Council of Canada and McGill University.

\section{Appendix}
\subsection{Truncation error on the volumetric increment between two successive time steps}
\label{appendix_trunc_error}

\begin{lemma}
In the context of the theorem \ref{th_new_method}, and under the definitions \ref{def_vol_hex} and \ref{def_vol_inc}, for any face $m$ the truncation error $\epsilon_{m,h}^T$ on the volumetric increment between two time steps $t_{n-1}$ and $t_n$ is of order two and can be written as,
\begin{equation}
\epsilon_{m,h}^T(t_n) = \mathcal{E}_m^T(t_{n-1})\tau^2 + \mathcal{O}(\tau^3).
\end{equation}
where $\mathcal{E}_m^T$ is a scalar periodic function and $\tau=t_n-t_{n-1}$.
\end{lemma}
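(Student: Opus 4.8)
The plan is to compare, order by order in $\tau=t_n-t_{n-1}$, the polynomial obtained by feeding Taylor‑expanded vertex paths into the trilinear volume formula of Definition~\ref{def_vol_hex} against the exact swept volume $\int_{t_{n-1}}^{t_n}G_m(t)\,dt$, and to show that the two expansions agree through first order in $\tau$; the discrepancy is then $\mathcal O(\tau^2)$ with a leading coefficient built only from the instantaneous position, velocity and acceleration of the four vertices of face $m$, which gives both the order‑two estimate and the periodicity of $\mathcal E_m^T$.

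First I would fix the face $m$, write $\tau=t_n-t_{n-1}$, and record that between these two samples the four vertices $\mathbf r_i(t)$, $i=a,b,c,d$, are joined by straight segments, so the swept region is approximated by the hexahedron whose bottom face is face $m$ at $t_{n-1}$ and whose top face is face $m$ at $t_n$; its volume is $\Omega_{m,h}(t_n)$ as given by (\ref{eq_tri_mapping_volume}), while the exact increment is $\Omega_m(t_n)-\Omega_m(t_{n-1})=\int_{t_{n-1}}^{t_n}G_m(t)\,dt$ by (\ref{eq_GCL_split}) and (\ref{eq_IFV_int}). I would then substitute $\mathbf r_i(t_n)=\mathbf r_i(t_{n-1})+\dot{\mathbf r}_i(t_{n-1})\tau+\tfrac12\ddot{\mathbf r}_i(t_{n-1})\tau^2+\mathcal O(\tau^3)$ into (\ref{eq_tri_mapping_volume}). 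Since each contribution $\Omega_{ijkl}$ is a fixed polynomial in the eight corner vectors (multilinear in the three combined vectors appearing in its triple product), this substitution is purely algebraic and yields $\Omega_{m,h}(t_n)=c_0+c_1\tau+c_2\tau^2+\mathcal O(\tau^3)$, where $c_0,c_1,c_2$ are explicit polynomials in $\mathbf r_i(t_{n-1})$, $\dot{\mathbf r}_i(t_{n-1})$ and $\ddot{\mathbf r}_i(t_{n-1})$.

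Next I would identify $c_0$ and $c_1$. At $\tau=0$ the top face of the hexahedron coincides with the bottom one, so the trilinear map becomes independent of $\zeta$, its image is a surface, and the volume vanishes: $c_0=0$. For $c_1$, differentiating $\Omega_{m,h}$ in $\tau$ at $\tau=0$ gives the rate of change of a hexahedral volume whose top vertices move with velocities $\dot{\mathbf r}_i(t_{n-1})$ while the bottom vertices stay frozen; this is exactly the trilinear integrated face mesh velocity (\ref{eq_tri_mapping_IFMV}) evaluated on the face‑$m$ data at $t_{n-1}$, and since the finite‑volume faces are bilinear patches and the mesh velocity on a face is the bilinear interpolation (\ref{eq_velo_mapping}) of the vertex velocities, that quantity equals the exact $G_m(t_{n-1})=\iint_{\partial\Omega_m}\mathbf v\cdot\mathbf n\,dS$ — equivalently one may invoke the consistency identity already recorded in the paper that the sum of (\ref{eq_tri_mapping_IFMV}) over the six face‑quadruples reproduces (\ref{eq_dvol}). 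Hence $c_1=G_m(t_{n-1})$. Expanding the exact increment as $\int_{t_{n-1}}^{t_n}G_m(t)\,dt=G_m(t_{n-1})\tau+\tfrac12\dot G_m(t_{n-1})\tau^2+\mathcal O(\tau^3)$ and subtracting, the $\tau^0$ and $\tau^1$ terms cancel, so
\[
\epsilon_{m,h}^T(t_n)=\Big(\tfrac12\dot G_m(t_{n-1})-c_2\Big)\tau^2+\mathcal O(\tau^3)=:\mathcal E_m^T(t_{n-1})\,\tau^2+\mathcal O(\tau^3),
\]
which is the claimed form, the error being of order two. Finally, $\mathcal E_m^T$ is by construction a fixed polynomial in $\mathbf r_i$, $\dot{\mathbf r}_i$, $\ddot{\mathbf r}_i$ (note $\dot G_m$ is itself such a polynomial, since $G_m$ is the trilinear IFMV of face $m$), so, the vertex motion being smooth and $T$‑periodic, $\mathcal E_m^T$ is a scalar $T$‑periodic function of $t_{n-1}$.

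The main obstacle is the identification $c_1=G_m(t_{n-1})$: establishing that the first‑order‑in‑$\tau$ coefficient of the trilinear hexahedron volume is exactly the exact integrated face mesh velocity of face $m$ at $t_{n-1}$. The clean route is to use the finite‑volume convention that faces are bilinear patches together with the already‑established consistency between (\ref{eq_dvol}) and (\ref{eq_tri_mapping_IFMV}); a more pedestrian alternative is to differentiate the six $\Omega_{ijkl}$ contributions directly at the degenerate ($\tau=0$) configuration using (\ref{eq_dsurface_increment}) with the bottom‑vertex velocities set to zero. Everything else — the vanishing of $c_0$, the Taylor bookkeeping that produces $c_2$, and the periodicity of $\mathcal E_m^T$ — is elementary.
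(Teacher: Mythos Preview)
Your proposal is correct and takes a genuinely different route from the paper's own argument. The paper works via the multilinearity of the scalar triple product: it writes each top vertex as $\mathbf r_i(t_n)=\mathbf l_i(t_n)+\boldsymbol\epsilon_i(t_n)$ with $\mathbf l_i$ the first-order (in $\tau$) linear extrapolation and $\boldsymbol\epsilon_i=\tfrac12\ddot{\mathbf r}_i(t_{n-1})\tau^2+\mathcal O(\tau^3)$, substitutes into the trilinear volume formula~(\ref{eq_tri_mapping_volume}), and declares the terms containing at least one $\boldsymbol\epsilon_i$ to be the truncation error, each of which is manifestly $\mathcal O(\tau^2)$ with a coefficient built from $\mathbf r_j(t_{n-1})$ and $\ddot{\mathbf r}_k(t_{n-1})$. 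This gives the periodic $\mathcal E_m^T$ directly as a combination of triple products of positions and accelerations.

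Your argument instead expands $\Omega_{m,h}$ as a polynomial in $\tau$, identifies $c_0=0$ and $c_1=G_m(t_{n-1})$, and then compares to the Taylor expansion of the exact increment $\int_{t_{n-1}}^{t_n}G_m\,dt$. What your route buys is an explicit comparison with the \emph{exact} swept volume, which the paper's proof only handles implicitly: the paper effectively identifies the hexahedron built on the linearly extrapolated endpoints $\mathbf l_i(t_n)$ with the exact increment at this order, without spelling out why. Your identification $c_1=G_m(t_{n-1})$ fills precisely that step, and your suggested verification (apply the consistency identity between (\ref{eq_dvol}) and the six face fluxes (\ref{eq_tri_mapping_IFMV}) to the degenerate $\tau=0$ swept hexahedron, where the bottom-face flux vanishes because its vertex velocities are zero and the four side-face fluxes vanish because all the $\mathbf S$-vectors in (\ref{eq_tri_mapping_IFMV}) collapse) is correct and elementary. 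Conversely, the paper's approach has the advantage of producing an explicit expression for $\mathcal E_m^T$ as a sum of triple products of $\mathbf r_i$ and $\ddot{\mathbf r}_i$ only, without first-derivative terms, whereas your $\mathcal E_m^T=\tfrac12\dot G_m-c_2$ involves $\dot{\mathbf r}_i$ through both $\dot G_m$ and $c_2$ before any cancellation.
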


\begin{proof}
We introduce the scalar triple product application $\cal{L}$ defined by,
\begin{equation}
\cal{L}=\left\lbrace
\begin{array}{lcl}
\mathbb{R}^3\times\mathbb{R}^3\times\mathbb{R}^3 & \rightarrow & \mathbb{R} \\
(v_1,v_2,v_3) & \rightarrow & v_1 \cdot (v_2 \times v_3) = \text{det}(v_1,v_2,v_3)
\end{array}
\right.
\end{equation}
due to the properties of the determinant this application is a 3-linear alternating form meaning that if any of the three vector is a linear combination of the two others the result is zero.

Recalling that the volume of any hexahedra is computed as a function of the position vectors of the vertices in the physical space $\mathbf{r}_i$ for $i=1,...,8$ through the following equation (see definition \ref{def_vol_hex}) :
\begin{equation}
\left\{
\begin{array}{lcl}
\Omega_{h} & = &(\Omega_{4321}+\Omega_{5678}+\Omega_{3487}+\Omega_{1256}+\Omega_{4158}+\Omega_{2376}), \\
\\
\mbox{with } \Omega_{ijkl} & = &\dfrac{1}{12}\mathcal{L}(\mathbf{r}_j+\mathbf{r}_k,\mathbf{r}_i+\mathbf{r}_j,\mathbf{r}_i+\mathbf{r}_l). \\
\end{array}
\right.
\label{eq_trimap_hexvol}
\end{equation}

Let $m$ be a face defined by the vertices $\mathbf{r}_i, \mbox{ }i=a,b,c,d$ (see Figure \ref{fig_non_periodic_OmegaM_move}), at the $n^{th}$ time sample a Taylor expansion gives :
\begin{equation}
\left\{
\begin{array}{llcl}
& \mathbf{r}_i(t_n) & = & \mathbf{l}_i(t_n) + \boldsymbol{\epsilon}_i(t_n), \\
\text{Linear approximation : } & \mathbf{l}_i(t_n) & = & \displaystyle \mathbf{r}_i(t_{n-1}) + \left( \frac{\partial \mathbf{r}_i}{\partial t}(t_{n-1})\right)\tau, \\
\text{Truncation error : }  & \boldsymbol{\epsilon}_i(t_n) & = & \displaystyle \frac{1}{2}\left(\frac{\partial^2\mathbf{r}_i}{\partial t^2}(t_{n-1})\right)\tau^2 + \mathcal{O}(\tau^3).
\end{array}
\right.
\end{equation}

The volumetric increment between the two time instances $t_{n-1}$ and $t_n$ is approximated by a hexahedra defined using the vertex positions with the following indexation,
\begin{equation}
\left\{
\begin{array}{lclclclclclclcl}
\mathbf{r}_1 & = & \mathbf{r}_a(t_{n-1}), & \mathbf{r}_2 & = & \mathbf{r}_b(t_{n-1}), & \mathbf{r}_3 & = & \mathbf{r}_c(t_{n-1}), & \mathbf{r}_4 & = & \mathbf{r}_d(t_{n-1}), \\
\mathbf{r}_5 & = & \mathbf{r}_a(t_{n}), & \mathbf{r}_6 & = & \mathbf{r}_b(t_{n}), & \mathbf{r}_7 & = & \mathbf{r}_c(t_{n}), & \mathbf{r}_8 & = & \mathbf{r}_d(t_{n}). 
\end{array}
\right.
\end{equation}

By substituting the Taylor expansions into the vertex positions to compute the volume of the volumetric increment through equation (\ref{eq_trimap_hexvol}), and then by exploiting the 3-linearity of the triple product application the order of the truncation error is evaluated. The lowest order terms of the truncation error are given by one of the following generic forms :
\begin{equation}
\begin{array}{l}
\mathcal{L}(\mathbf{r}_i(t_{n-1}),\mathbf{r}_j(t_{n-1}),\boldsymbol{\epsilon}_k(t_{n})) = \displaystyle \left\{ \mathbf{r}_i(t_{n-1}) \cdot \left[ \mathbf{r}_j(t_{n-1}) \times \frac{1}{2}\left(\frac{\partial^2\mathbf{r}_k}{\partial t^2}(t_{n-1})\right) \right] \right\} \tau^2 + \mathcal{O}(\tau^3),\\
\\
\mathcal{L}(\mathbf{r}_i(t_{n-1}),\boldsymbol{\epsilon}_j(t_{n}),\mathbf{r}_k(t_{n-1})) = \displaystyle \left\{ \mathbf{r}_i(t_{n-1}) \cdot \left[ \frac{1}{2}\left(\frac{\partial^2\mathbf{r}_j}{\partial t^2}(t_{n-1})\right) \times \mathbf{r}_k(t_{n-1}) \right] \right\} \tau^2 + \mathcal{O}(\tau^3),\\
\\
\mathcal{L}(\boldsymbol{\epsilon}_i(t_{n}),\mathbf{r}_j(t_{n-1}),\mathbf{r}_k(t_{n-1})) = \displaystyle \left\{ \frac{1}{2}\left(\frac{\partial^2\mathbf{r}_i}{\partial t^2}(t_{n-1})\right) \cdot \left[ \mathbf{r}_j(t_{n-1}) \times \mathbf{r}_k(t_{n-1}) \right] \right\} \tau^2 + \mathcal{O}(\tau^3).
\end{array}
\label{eq_low_order_trunc_error}
\end{equation}
where $i,j$ and $k$ are the vertices indices.

Therefore for any face $m$, the truncation error $\epsilon_{m,h}^T$ on the volumetric increment between two time steps $t_{n-1}$ and $t_n$ is of order two. In addition, it is possible to write the lowest order term of the error as a linear combination of the previous forms equation (\ref{eq_low_order_trunc_error}), thus there exists a scalar function $\mathcal{E}_m^T$ depending on the vertices paths $\mathbf{r}_i$ and their second temporal derivatives $\dfrac{\partial^2 \mathbf{r}_i}{\partial t^2}$ such that,
\begin{equation}
\epsilon_{m,h}^T(t_n) = \mathcal{E}_m^T(t_{n-1})\tau^2 + \mathcal{O}(\tau^3).
\end{equation}
Due to the temporal periodicity of the vertices paths $\mathbf{r}_i$, the function $\mathcal{E}_m^T$ is also periodic.
\end{proof}

\section*{References}

\bibliography{mybibfile}

\end{document}